\newcommand{\fixset}[1]{\text{FIX}(#1)}
\newcommand{\rightside}[1]{\text{RIGHT}({#1})}
\newcommand{\leftside}[1]{\text{LEFT}({#1})}
\newcommand{\tuple}[2]{{\left(\overline{#1},\overline{#2}\right)}}
\newcommand{\FixedSignsTest}[1]{\mathsf{Fixed\ Test}_{#1}}
\newcommand{\SignsTest}[1]{\mathsf{Random\ Signs\ Test}_{#1}}
\newcommand{\CoeffsTest}[1]{\mathsf{Random\ Coefficients\ Test}_{#1}}
\newcommand{\SignsTestInText}{\ensuremath{\mathsf{Random\ Signs\ Test}}\xspace}
\newcommand{\CoeffsTestInText}{\ensuremath{\mathsf{Random\ Coefficients\ Test\ }}}
\newcommand{\GeneratedSubgroupTest}{\mathsf{Generated \ Subgroup \ Test}}
\newenvironment{talign*}
 {\let\displaystyle\textstyle\csname align*\endcsname}
 {\endalign}
\title{Homomorphism Testing  
with Resilience to Online Manipulations}
\author{ { Esty Kelman\thanks{Boston University, Boston, MA, USA, and Massachusetts Institute of Technology, Cambridge, MA, USA.  Email: \texttt{ekelman@mit.edu}} } 
\and {Uri Meir\thanks{Tel-Aviv University, Tel-Aviv, Israel. Email: \texttt{urimeir.cs@gmail.com}}} 
\and {Debanuj Nayak\thanks{Boston University, Boston, MA, USA. Email: \texttt{dnayak@bu.edu}}} 
\and {Sofya Raskhodnikova\thanks{Boston University, Boston, MA, USA. Email: \texttt{sofya@bu.edu}}}}
\date{}
\begin{document}

\maketitle
\begin{abstract}
A central challenge in property testing is verifying algebraic structure with minimal access to data. A landmark result addressing this challenge, the linearity test of Blum, Luby, and Rubinfeld (JCSS `93), spurred a rich body of work on testing algebraic properties such as linearity and its generalizations to low-degree polynomials and group homomorphisms. However, classical tests for these properties assume unrestricted, noise-free access to the input function--an assumption that breaks down in adversarial or dynamic settings. To address this, Kalemaj, Raskhodnikova, and Varma (Theory of Computing `23) introduced the online manipulation model, where an adversary may erase or corrupt query responses over time, based on the tester's past queries.

We initiate the study of {\em manipulation-resilient} testing for {\em group homomorphism} in this online model. Our main result is an {\em optimal} tester that makes $O(1/\varepsilon+\log t)$ queries, where $\varepsilon$ is the distance parameter and $t$ is the number of function values the adversary can erase or corrupt per query. Our result recovers the celebrated $O(1/\varepsilon)$ bound by Ben-Or, Coppersmith, Luby, and Rubinfeld (Random Struct.\ Algorithms `08) for homomorphism testing in the standard property testing model, albeit with a different tester. Our tester, $\mathsf{Random\ Signs\ Test}$, {\em lifts} known manipulation-resilient linearity testers for $\mathbb{F}_2^n\to \mathbb{F}_2$ to general group domains and codomains by introducing more randomness: instead of verifying the homomorphism condition for a sum of random elements, it uses additions and subtractions of random elements, randomly selecting a sign for each element. We also obtain improved group-specific query bounds for key families of groups.
Our results show that despite the challenges of online manipulation, group homomorphism---a fundamental algebraic property---is efficiently testable across a wide range of domains and codomains. Along the way we formalize a general framework for proving resiliency of testers to online manipulations which we believe could be useful in future work.

\end{abstract}

\thispagestyle{empty}
\setcounter{page}{0}

\newpage
\setcounter{page}{1}

\section{Introduction}
Understanding how to verify algebraic structure with minimal access to data is a central question in {\em property testing}. A landmark result addressing this challenge is the linearity test of Blum, Luby, and Rubinfeld~\cite{BLR93}, which checks whether a given function is linear by querying it on two random inputs and their sum. This test and its variants sparked a rich line of work on testing algebraic properties of functions, including linearity 
\cite{BabaiFL91,FeigeGLSS96,BellareCHKS96}, being a low-degree polynomial \cite{BabaiFLS91,GemmellLRSW91,FriedlS95,RubinfeldS96, RazS97, AlonKKLR05, AroraS03, MoshkovitzR08, 
KaufmanR06, Samorodnitsky07, SamorodnitskyT09, JutlaPRZ09, BKSSZ10,HaramatySS13, Ron-ZewiS13, DinurG13,kaufman2022improved}, and being a group homomorphism \cite{BLR93,GrigorescuKS06,Ben-orCLR08,GoldreichR16,BeckerC2022,MooreR2015,gowersH2017,MittalR2024}, with deep connections to probabilistically-checkable proofs \cite{AroraLMSS98, Trevisan98, HastadW03}.

However,  classical property testing, as defined by \cite{RS96, GGR98}, relies on  the assumption of undisrupted query access to the input function—an assumption that may break down in dynamic environments or under adversarial or incomplete access to data.
Many testers, especially for algebraic properties, use structured queries that can be vulnerable to online adversarial interference. To address this, Kalemaj, Raskhodnikova, and Varma~\cite{KalemajRV23} introduced the {\em online manipulation model,} where an adversary can erase or corrupt the input during the execution of the algorithm, based on prior queries. This framework has prompted the reexamination of classical testing problems under a new lens, seeking algorithms that remain correct and efficient despite adversarial interference. 

In this work, we initiate the study of manipulation-resilient testing for group homomorphism, a fundamental property of functions between algebraic structures. Given finite groups $(G,+)$ and $(H,\oplus)$, and a function $f:G\to H$, the goal is to determine whether $f$ is a {\em homomorphism}—i.e., satisfies $f(x_1+x_2)=f(x_1)\oplus f(x_2)$ for all $x_1,x_2\in G$—or is $\eps$-far from every homomorphism.\footnote{As noted in the Encyclopedia of Algorithms \cite{RaskhodnikovaR16}, the set of all homomorphisms from $G$ to $H$ can be viewed as an error-correcting code, known as the \emph{homomorphism code}. 
In the specific setting where $G=\F_2^n$ and $H=\F_2,$ this corresponds to the well-known Hadamard code.
From this perspective, the task of homomorphism testing is equivalent to testing whether a given word is a codeword in the corresponding code.}
The tester accesses the input function $f$ by querying its values $f(x)$ on elements $x$ of the group, but must contend with an {\em online adversary} that knows all queries made by the tester and can manipulate the data over time. Two types of manipulation are considered: an \emph{erasure}, where $f(x)$ is replaced by a special symbol $\perp$, and a \emph{corruption}, where $f(x)$ is replaced by an arbitrary element of $H$. The rate of manipulations is controlled by a parameter $t$. Two types of adversaries have been studied: {\em fixed rate}, which can manipulate $t$ values after every query is answered \cite{KalemajRV23}, and {\em budget-managing}, which accumulates a budget of $t$ manipulations per query and may deploy them later at any point in the computation \cite{BenEliezerKMR24}.

Linearity and low-degree testing are well understood in the online adversarial model. Linearity and quadraticity were among the first properties studied in this setting, with~\cite{KalemajRV23} proving a query lower bound of $\Omega(\log t)$ for testing linearity of functions $f : \F_2^n \to \F_2$.
The linearity tester of \cite{KalemajRV23} was improved by Ben-Eliezer, Kelman, Meir and Raskhodnikova~\cite{BenEliezerKMR24}, who achieved the optimal query complexity of $O(1/\eps + \log t)$, and further refined by Arora, Kelman, and Meir \cite{AroraKM25} to handle the full range of the parameter~$t$. For testing whether a function $f:\F_2^n \to \F_2$ has degree at most $d$, \cite{BenEliezerKMR24} showed a lower bound of $\Omega(\log^d t)$ on query complexity, whereas the elegant tester of Minzer and Zheng \cite{MinzerZ24} works for
functions $f:\mathbb{F}_q^n \to \mathbb{F}_q$ and makes $\log^{O(d)} t / \eps$ queries 
where $q$ is a prime power.
While both lower bounds \cite{KalemajRV23,BenEliezerKMR24} apply to the weakest model of online adversary (fixed-rate with erasures), the final algorithms of~\cite{BenEliezerKMR24, AroraKM25, MinzerZ24} are resilient even in the strongest online adversary model considered (budget-managing with corruptions).

These developments raise a natural question: Can one design testers resilient to online manipulations for the other natural generalization of linearity—namely, group homomorphism?\footnote{It is important to note that the existence of an efficient tester for a property $\cP$ in the standard property testing model does not guarantee that $\cP$ is testable with online adversary. For instance, \cite{KalemajRV23} show that {\em sortedness} of integer sequences—despite having many efficient testers in the standard model \cite{EKKRV00,DGLRRS99,Ras99,BGJRW12,ChakrabartyS13,Raskhodnikova16,Belovs18}, cannot be tested at all even with one online erasure per query (no matter how many queries the tester makes).}
We answer this question affirmatively by designing testers that are resilient to the strongest type of online adversary in our framework—budget-managing with corruptions. Our main result is that homomorphism of functions $f:G\to H$ can be tested with $O(1/\eps+\log t)$ queries in the presence of $t$-online corruption adversary for all groups $G$ and $H$. This bound is optimal: even without manipulations, $\Omega(1/\eps)$ queries are required, and the $\Omega(\log t)$ lower bound of \cite{KalemajRV23} holds for the special case 
$G=\F_2^n$ and $H=\F_2$. 

Since the lower bound applies to specific groups $G$ and $H$, a natural next question is whether some groups admit better testers.
We obtain improved group-specific query bounds
for many important families of groups. In particular, we design an online-manipulation-resilient homomorphism tester that makes $O(1/\eps + E(G))$ queries, where $E(G)$ is the expected number of samples needed to generate $G$.
This tester is optimal, for instance, when $G$ is a cyclic, symmetric, alternating, or simple group.
For vector spaces over finite fields of prime order, we obtain nearly tight bounds: when $G = \F_p^n$ and $H = \F_p^r$, homomorphism can be tested with $O(1/\eps + \log_p t)$ queries, and we prove a matching $\Omega(\log_p t)$ lower bound for the case $r = 1$; when $H = \F_q^r$ where $q \neq p$, testing reduces to checking whether $f$ is identically zero.

Our results demonstrate that group homomorphism—a fundamental algebraic property—admits efficient and manipulation-resilient testers across the board: we give a general optimal tester for all finite groups and complement it with sharper bounds for prominent families of domains and codomains.

\subsection{Our results}\label{sec:results}
Our main result is an {\em optimal} online-manipulations-resilient homomorphism tester of $f:G\to H$ for all groups $G$ and $H$. As all our testers, it works in the strongest online adversary model considered: budget-managing with corruptions.

\begin{theorem}
\label{thm:general_G_H}
    There exists a constant $c \geq 0$ such that for all finite groups $(G,+)$ and $(H,\oplus)$, all $\eps \in (0,1)$ and $t \leq c \cdot \min \set{\eps^2, 1/\log^2 \card{G}} \cdot \card{G}$, there exists an $\eps$-tester for group homomorphism of functions of the form $f:G \to H$
    that works in the presence of every $t$-online erasure (or corruption) budget-managing adversary and makes $O\paren{\frac{1}{\eps} + \log_2 t}$  queries. For erasures, the tester has 1-sided error.
\end{theorem}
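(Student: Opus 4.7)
The plan is to instantiate the \SignsTestInText{} advertised in the abstract and combine a BLR-style closeness analysis with a dedicated accounting of the online adversary's budget. Concretely, the single round of the test samples $k = \lceil \log_2 t \rceil + O(1)$ elements $x_1,\dots,x_k$ uniformly from $G$ and independent signs $\sigma_1,\dots,\sigma_k$ uniformly from $\set{-1,+1}$, queries $f$ at $x_1,\dots,x_k$ and at $y \coloneqq \sigma_1 x_1 + \cdots + \sigma_k x_k$, and accepts iff $\sigma_1 f(x_1)\oplus\cdots\oplus\sigma_k f(x_k) = f(y)$ (where $\sigma_i f(x_i)$ means $f(x_i)$ or its group inverse). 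The tester runs $r = \Theta(1/\eps + \log t)$ such rounds and rejects iff any round rejects. Completeness is immediate: every homomorphism satisfies the checked identity pointwise, and erasures can only cause a one-sided abstain, yielding the claimed one-sided error for erasures.

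For soundness I would first analyze the \emph{noise-free} setting: if $f$ is $\eps$-far from any homomorphism, a single round rejects with probability $\Omega(\eps)$. This would follow by adapting the Ben-Or--Coppersmith--Luby--Rubinfeld argument to the random-signs variant. The cleanest route is to condition on $\sigma_1,\dots,\sigma_k$ and observe that, since each $\sigma_i x_i$ is uniform in $G$ when $x_i$ is, the distribution of $\paren{\sigma_1 x_1,\dots,\sigma_k x_k,y}$ is identical to that of a classical multi-point BLR test. This reduces single-round soundness to the known $O(1/\eps)$ analysis for homomorphism testing.

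For resilience I would apply the general framework the authors allude to in the abstract. The key structural observation is that, conditioned on $x_1,\dots,x_k$ (all the adversary sees before $y$ is queried), the final query point $y$ is uniformly supported on a set of size $2^k \geq 2t$ elements of $G$ (up to collisions, which are negligible under the assumption $t \leq c\,\eps^2\card{G}/\log^2\card{G}$). Hence a budget-managing adversary of total capacity $t$ cannot preemptively corrupt $f(y)$: it can influence at most $t$ future responses, so the probability that the specific $y$ chosen by the signs is already corrupted is at most $1/2$. I would formalize this as a ``safe round'' lemma: a constant fraction of the $r$ rounds have an untouched $y$-query, and a separate lemma shows that corruptions on the $x_i$-queries preserve the rejection probability in expectation because random signs symmetrize their effect, so a corrupted $f(x_i)$ is combined with $\sigma_i$ equally likely to its inverse. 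A Chernoff bound over the safe rounds then converts per-round $\Omega(\eps)$ rejection into constant-probability rejection with $r = O(1/\eps + \log t)$ rounds, each using $k+1 = O(\log t)$ queries but amortized to the claimed $O(1/\eps + \log t)$ total by sharing the $x_i$-block across $\Theta(1/\eps)$ independent $y$-queries (each with fresh signs).

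The main obstacle will be the amortization in the final step: sharing the sampled $x_i$'s across many $y$-queries is what makes the overall query count $O(1/\eps + \log t)$ rather than $O((1/\eps)\log t)$, but it also means the adversary sees all $x_i$'s at once and may correlate its corruptions across the $\Theta(1/\eps)$ dependent $y$-queries. Handling this will require arguing that each fresh sign pattern still produces a fresh, essentially uniform $y$ conditionally on the adversary's view, so that the adversary's budget $t$ is spread over $2^k \geq 2t$ possible targets per round. I expect this to be the delicate point where the authors' framework does the heavy lifting, and where the constraint $t \leq c\,\eps^2 \card{G}/\log^2\card{G}$ enters to control collisions and keep the conditional distribution of $y$ close to uniform.
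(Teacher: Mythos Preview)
Your resilience accounting has a fatal gap. A budget-managing adversary accumulates $t$ units of budget \emph{per query}, not $t$ in total: after you have queried $x_1,\dots,x_k$ with $k=\lceil\log_2 t\rceil+O(1)$, the adversary already holds roughly $kt \approx t\log t$ corruptions, which exceeds the $2^k=\Theta(t)$-sized support of your final query $y$. So before $y$ is even issued the adversary can corrupt $f$ on the entire support, and your ``safe round'' lemma fails already for a single round. The amortized variant that shares the $x_i$-block across $\Theta(1/\eps)$ fresh $y$-queries only makes this worse: all those queries draw from the same $\Theta(t)$-sized support while the adversary's accumulated budget has grown to $\Theta(t/\eps)$.

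The paper's fix is to fold the $1/\eps$ factor into the reservoir size rather than into the number of rounds: set $m=\Theta(\log t + 1/\eps)$, sample $m$ points, and take $y$ as a signed sum over a random size-$m/2$ subset. A single such round then has \emph{constant} soundness $\Omega(\min\{m\eps,1\})=\Omega(1)$ (Theorem~\ref{thm:general-soundness-based-on-eps}), so only $O(1)$ rounds are needed; total queries are $O(m)=O(1/\eps+\log t)$, and the support of $y$ is $\binom{m}{m/2}\approx 2^m$, which dwarfs the adversary's $O(tm)$ budget since $t \le 2^{m/4}$. Two further points your proposal misses: the soundness reduction via $z_i=\sigma_i x_i$ does not work, because the test applies $\sigma_i$ to $f(x_i)$ in $H$ rather than querying $f(\sigma_i x_i)$, so one cannot reduce to an unsigned multi-point BLR test---the paper needs a separate corrector argument (Lemmas~\ref{lem:corrector_bounding_eta}--\ref{lem:corrector_g_homomorphisms}) in which the random signs are essential; and when $2^m > |G|^{1/4}$ the reservoir trick breaks down due to collisions, so the paper falls back to the Goldreich--Ron sample-based tester, which is where the constraint $t \le c\min\{\eps^2,1/\log^2|G|\}\,|G|$ actually enters.
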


Our result recovers the celebrated $O(1/\eps)$ bound by \cite{BLR93,Ben-orCLR08} for testing homomorphism in the standard property testing model, 
 albeit with a different tester. Our tester, which we call 
 $\mathsf{Random}$ $\mathsf{Signs}$ $\mathsf{Test},$
  can be viewed as {\em lifting} the linearity testers developed by \cite{KalemajRV23,BenEliezerKMR24,AroraKM25} for functions $f:\F_2^n\to\F_2$ to work for general groups $G$ and $H$. As noted by \cite{KalemajRV23}, the original \cite{BLR93} tester breaks in the online manipulations model, and as we explain in \Cref{sec:technical-overview}, the known online-manipulation-resilient testers for the case $G=\F_2^n$ and $H=\F_2$ are not sound for general groups. Let $(G,+)$ be the domain group and $(H,\oplus)$ be the codomain group. For the case without manipulations, our tester picks $k$ points (for any even $k$) $x_1,\dots, x_k$ from the domain group $G$ and $k$ signs $\sigma_1,\dots, \sigma_k$ from $\{+,-\}^k$ uniformly and independently at random. Then it obtains an element $a=\sigma_1x_1+\dots+\sigma_k x_k$, where $\sigma_ix_i$ denotes $x_i$ when $\sigma_i=+$ and the inverse of $x_i$ (in $G$) when $\sigma_i=-$.
 Finally, it queries $f$ on $x_1,\dots,x_k$, and $a$, and tests whether the homomorphism condition is satisfied for these elements, that is, whether $f(a) = \sigma_1f(x_1)\oplus\dots\oplus \sigma_kf(x_k))$, where  $\sigma_i = -$  denotes the inverse in the group $H$. 
The manipulation-resilient version of the tester first samples a reserve of elements of $G$ and then runs \SignsTestInText with carefully selected $k$ and elements $x_1,\dots,x_k$ from the reserve . We provide a technical overview of the tester and its analysis in \Cref{sec:technical-overview}. 
\Cref{sec:signs-test} presents and analyzes \SignsTestInText, and 
\Cref{sec:resilience}---a manipulation-resilient version of \SignsTestInText.

\subsubsection{Group-specific homomorphism testing}

Our first group-specific result, \Cref{thm:group-specific-sample-based-erasure-resilient-result},  gives a homomorphism tester  with complexity expressed in terms of the parameter $E(G)$, defined next.

\begin{definition}[\cite{pomerance2002expected}]\label{def:group-parameter-E}
    For all finite groups $(G,+)$, let $E(G)$ be the expected number of independent and uniform samples from $G$ needed to obtain a set that generates $G$. 
\end{definition}

This group specific tester has an additional advantage of being {\em sample-based}, i.e., querying only independent and uniform elements of $G$. Its guarantees are summarized in \Cref{thm:group-specific-sample-based-erasure-resilient-result}, which is proved in \Cref{sec:group-specific-sample-based}.

\begin{theorem}[Group-specific sample-based tester]
\label{thm:group-specific-sample-based-erasure-resilient-result}
    There exists a constant $c > 0$ such that for all finite groups $(G,+)$ and $(H, \oplus)$, all $\eps \in (0,1)$ and  $t \leq c \cdot \min \set{\eps^2, 1/E(G)^2} \cdot \card{G} $, there exists a sample-based $\eps$-tester for group homomorphism of functions of the form $f:G \to H$ that works in the presence of every  $t$-online erasure (or corruption) budget-managing adversary and makes $O(1/\eps + E(G))$ queries. For erasures, the tester has 1-sided error.
\end{theorem}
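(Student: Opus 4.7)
I will prove Theorem~\ref{thm:group-specific-sample-based-erasure-resilient-result} by exhibiting a sample-based tester that draws a single batch of $q=O(E(G)+1/\eps)$ independent uniform samples from~$G$, splits them into a \emph{generating batch} $z_1,\dots,z_\ell$ of size $\ell=\Theta(E(G))$ and a \emph{verification batch} $y_1,\dots,y_{\ell'}$ of size $\ell'=\Theta(1/\eps)$, queries $f$ at each sample, and accepts iff (i)~the labels $f(z_1),\dots,f(z_\ell)$ satisfy every word relation that $z_1,\dots,z_\ell$ satisfy in~$G$ (so that the map $L\colon\langle z_1,\dots,z_\ell\rangle\to H$ given by $w(z_1,\dots,z_\ell)\mapsto w(f(z_1),\dots,f(z_\ell))$ is well-defined), and (ii)~for each~$j$, the identity $f(y_j)=L(y_j)$ holds for some chosen word $w_j$ expressing $y_j$ in the generators. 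Any $\bot$ answer is treated as vacuously satisfying the relevant check, giving 1-sided error under erasures. The query complexity is $\ell+\ell'=O(E(G)+1/\eps)$.

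For correctness in the adversary-free setting, completeness is immediate. For soundness, I would condition on the event that $\{z_i\}$ generates~$G$, which holds with probability at least~$0.9$ by Markov's inequality applied to the definition of~$E(G)$. If check~(i) fails the tester rejects; otherwise $L$ is a well-defined homomorphism $G\to H$ that agrees with~$f$ on the generators, so~$f$ is $\eps$-far from~$L$ in particular. Each verification sample then independently fails check~(ii) with probability at least~$\eps$, and with $\ell'=\Theta(1/\eps)$ independent draws the global rejection probability is at least~$2/3$.

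To lift this to the $t$-online budget-managing adversary, the key structural observation is that every query is drawn iid uniformly from~$G$ \emph{independent of the history visible to the adversary}, since the samples are drawn obliviously in one batch. Consequently, no matter how the adversary schedules its $\leq tq$ total manipulations, any individual query lands in the current manipulated set with probability at most~$tq/|G|$, so the expected number of queries hit is at most $tq^2/|G|$. The hypothesis $t\leq c\cdot\min\{\eps^2,1/E(G)^2\}\cdot|G|$, together with $q=O(E(G)+1/\eps)$, yields $tq^2/|G|=O(1)$ after a short case split on whether $E(G)\leq 1/\eps$ or $E(G)>1/\eps$. By Markov, the adversary manipulates at most a constant number of queried labels except with small constant probability.

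The main obstacle is translating this low-interference bound into preservation of the soundness gap, since the adversary is adaptive and may target her manipulations to mask violations. Corrupting $O(1)$ of the $\Theta(1/\eps)$ verification responses changes the empirical failure count of~(ii) by only~$O(1)$, which is absorbable into the $\Theta(1/\eps)\cdot\eps=\Theta(1)$ expected failure count; corrupting $O(1)$ of the $\Theta(E(G))$ generator responses can flip only $O(1)$ relation checks and shift the implicit~$L$ in at most $O(1)$ coordinates, which is absorbable into the same margin. Combining these two arguments with the $0.9$ probability that $\{z_i\}$ generates~$G$, and plugging into the general resiliency framework of Section~\ref{sec:resilience}, completes the proof.
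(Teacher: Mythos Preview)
Your approach matches the paper's: analyze the $\GeneratedSubgroupTest$ in the adversary-free model, then lift to the online model via the i.i.d.\ nature of the samples. Your Markov bound on the generation probability is a legitimate simplification of the paper's route through $\mathcal{M}(G)$ and $d^\beta(G)$ (Lemmas~\ref{lem:relate-dbeta-M}--\ref{lem:bound-on-generating-G-whp}), at the cost of a constant factor ($10\,E(G)$ samples instead of $E(G)+9$); both are $O(E(G))$ so the theorem is unaffected.

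Your final paragraph, however, is both unnecessary and incorrect. Corrupting a single generator response $f(z_i)$ does not ``shift the implicit $L$ in at most $O(1)$ coordinates'': the homomorphism $L$ is determined by its values on generators, so changing one of them changes $L$ at potentially every point of $G$. Likewise, the tester rejects on \emph{any} single verification failure, so there is no ``empirical failure count'' to absorb into; the expected number of failing verification checks is only $\Theta(1)$, and $O(1)$ targeted corruptions could mask all of them.

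The fix is simpler than what you attempt and is exactly what the paper does by invoking the black-box result of Arora--Kelman--Meir (\Cref{thm: sample-based-to-erasure-resilient}): you already have $\mathbb{E}[\#\text{manipulated queries}]\le tq^2/|G|=O(c)$, hence $\Pr[\text{any query is manipulated}]\le tq^2/|G|$ by Markov, and this can be made at most $1/6$ by taking $c$ small. On the complementary event the tester sees the true $f$ everywhere and behaves exactly as in the standard model; a union bound with the standard-model error finishes the proof. No robustness to $O(1)$ corruptions is needed.
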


In the model without manipulations, Goldreich and Ron \cite{GoldreichR16} gave a sample-based homomorphism tester with $O(1/\eps + \log |G|)$ queries and showed this bound is tight for some groups $G$ and~$H$ (in particular, $G = \F_2^n, H = \F_2$). The query bound in \Cref{thm:group-specific-sample-based-erasure-resilient-result} improves on the bound in \cite{GoldreichR16} for many natural groups and is tight when $E(G)$ is constant—e.g., for cyclic groups of prime order $\Z_p$, symmetric groups $S_n$ and alternating groups $A_n$ \cite{dixon1969probability}, and, more generally, finite simple groups \cite{liebeck1995probability}. This gives us the following corollary. 

\begin{corollary}[Homomorphism tester for simple groups]\label{cor:sample-based-abelian} There exists a constant $c>0$ such that for all finite groups $(G,+)$ and $(H,\oplus)$, where $G$ is a finite simple group, all $\eps \in (0,1)$ and all $t\leq c\cdot  \eps^2 |G|$, there exists a sample-based $\eps$-tester for group homomorphism of functions of the form $f:G \to H$ that works in the presence of a $t$-online-erasure adversary and makes $\Theta(1/\eps)$ queries.
\end{corollary}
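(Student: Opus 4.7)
The plan is to derive Corollary~\ref{cor:sample-based-abelian} as an immediate specialization of Theorem~\ref{thm:group-specific-sample-based-erasure-resilient-result}, once we plug in the known bounds on the parameter $E(G)$ for finite simple groups. In other words, almost all of the technical content already lives inside the group-specific tester; the job of the corollary is just to observe that the $E(G)$ term collapses to a constant in the simple-group regime, and that the resulting $O(1/\eps)$ upper bound matches the classical $\Omega(1/\eps)$ lower bound.

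Concretely, I would proceed in three short steps. First, invoke Theorem~\ref{thm:group-specific-sample-based-erasure-resilient-result} on $f:G\to H$, which produces a sample-based, one-sided-error $\eps$-tester that tolerates any $t$-online-erasure budget-managing adversary, uses $O(1/\eps + E(G))$ queries, and is valid whenever $t \leq c \cdot \min\{\eps^2, 1/E(G)^2\}\cdot |G|$. Second, use the classical probabilistic generation results for finite simple groups to bound $E(G)$: Dixon~\cite{dixon1969probability} established that two uniformly random elements generate $A_n$ with probability tending to $1$, and Liebeck and Shalev~\cite{liebeck1995probability} extended this to all finite simple groups, proving that a bounded number of uniform samples suffice with probability bounded away from $0$. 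A standard geometric-tail argument then yields $E(G) = O(1)$ uniformly over all finite simple groups $G$. Third, substitute $E(G) = O(1)$ into the statement of Theorem~\ref{thm:group-specific-sample-based-erasure-resilient-result}: the query complexity becomes $O(1/\eps)$, and the admissible erasure rate simplifies to $t \leq c' \cdot \eps^2 \cdot |G|$ for some absolute constant $c' > 0$, matching the hypothesis of the corollary.

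Finally, to upgrade the $O(1/\eps)$ upper bound to $\Theta(1/\eps)$, I would note the standard matching lower bound: even in the adversary-free property testing model, $\Omega(1/\eps)$ queries are necessary for $\eps$-testing group homomorphism (indeed already for linearity $\F_2^n\to\F_2$), and this lower bound trivially transfers to the stronger online-erasure model considered here.

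Since every ingredient is either imported from Theorem~\ref{thm:group-specific-sample-based-erasure-resilient-result} or from cited results on random generation of simple groups, there is essentially no obstacle in this proof; the only point that needs care is verifying that the tail bound on the number of samples needed to generate $G$ gives $E(G) = O(1)$ with a constant independent of $|G|$, so that both the query bound and the allowed erasure rate are uniform across the whole family of finite simple groups. Everything else is a direct instantiation.
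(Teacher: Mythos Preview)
Your proposal is correct and follows exactly the paper's approach, which derives the corollary immediately from Theorem~\ref{thm:group-specific-sample-based-erasure-resilient-result} by noting (via \cite{dixon1969probability,liebeck1995probability}) that $E(G)=O(1)$ uniformly for finite simple groups. One small remark: the parenthetical example $\F_2^n\to\F_2$ you cite for the $\Omega(1/\eps)$ lower bound is not itself a simple-group instance (for $n\ge 2$), so for the $\Theta$ claim you should appeal to the general $\Omega(1/\eps)$ lower bound for homomorphism testing rather than that particular case.
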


To improve the bound in \cite{GoldreichR16} and prove \Cref{thm:group-specific-sample-based-erasure-resilient-result}, we analyze the subgroup generated by the sample $S$, 
rather than only partial sums that can be obtained from $S$, obtaining better group-specific bounds on the number of samples needed to properly learn the input homomorphism. 
Our analysis relies on work in group theory~\cite{dixon1969probability,liebeck1995probability,pomerance2002expected,lubotzky2002expected,lucchini2016expected,menezes2013random} that relates $E(G)$ from \Cref{def:group-parameter-E} to the size of the smallest set generating $G$ and other structural parameters of the group.

\paragraph{Bounds for prime fields.}
Finally, we consider groups of the form $G = \F_p^n$ and $H = \F_q^r$, where $p$ and $q$ are primes and $n, r \in \N$. Our next result shows that if $p = q$, the query complexity is below the general bound of $O(1/\eps + \log_2 t)$ from \Cref{thm:general_G_H}.  

\begin{theorem}
\label{thm:prime-fields-p=q}
    There exists a constant $c >0$ such that for all primes $p$, all $n, r \in \N$, all $\eps \in (0,1)$ and $t \leq c \cdot \min \set{\eps^2, 1/n^2} p^n$, there exists an $\eps$-tester for homomorphism of functions of the form $f:\F_p^n \to \F_p^r$ that works in the presence of every $t$-online erasure (or corruption) budget-managing adversary and makes $O(1/\eps + \log_p t)$ queries. For erasures, the tester has 1-sided error.
\end{theorem}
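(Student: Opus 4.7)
The plan is to adapt the \SignsTestInText from \Cref{thm:general_G_H} into a \CoeffsTestInText that exploits the $\F_p$-vector space structure shared by $G=\F_p^n$ and $H=\F_p^r$. The starting observation is that any group homomorphism $f:\F_p^n\to\F_p^r$ is automatically $\F_p$-linear: since $c\in\F_p$ is a sum of $c$ copies of $1$, additivity forces $f(cx) = c\cdot f(x)$. Hence for a homomorphism we have $f(\sum_i c_i x_i) = \sum_i c_i f(x_i)$ for \emph{every} coefficient vector $(c_1,\dots,c_k)\in\F_p^k$, not just for $\pm 1$ sign vectors. Replacing uniform signs by uniform coefficients in $\F_p$ is therefore sound for completeness, and it endows each random choice with $\log_2 p$ bits of entropy instead of just one.

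Concretely, I would keep the two-phase structure of the resilient version of \SignsTestInText from \Cref{sec:resilience}: first pre-sample a reserve of uniform elements from $\F_p^n$, then perform many independent homomorphism checks that recombine the reserve. In each check, pick $k$ elements $x_1,\ldots,x_k$ from the reserve together with independent uniform coefficients $c_1,\ldots,c_k\in\F_p$, form $a = \sum_{i=1}^k c_i x_i$, query $f$ on all $x_i$ (available from the reserve) and on $a$, and accept the trial iff $f(a)=\sum_{i=1}^k c_i f(x_i)$ in $\F_p^r$. The appropriate parameters are $k=\Theta(\log_p t)$ and $\Theta(1/\eps)$ trials, with a reserve slightly larger than $k$ to absorb adversarial corruptions within the reserve itself; this yields the stated query bound.

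For the analysis, completeness is immediate from $\F_p$-linearity whenever the adversary does not corrupt the queried points, and the reserve plus budget bound controls the bad event. For soundness, I would invoke the general resilience framework from \Cref{sec:resilience}, whose main input is a bound on the probability that the ``fresh'' derived query $a$ lands in the adversary's corrupted set $C$. The key algebraic replacement is the following avoidance estimate: conditioned on reserve elements $x_1,\dots,x_k$ that are $\F_p$-linearly independent, $a = \sum_i c_i x_i$ is uniformly distributed over the $k$-dimensional subspace $\mathrm{span}_{\F_p}(x_1,\dots,x_k)$ of size $p^k$, so for any $C$ of size at most the adversary's total budget $O\!\left(t(1/\eps+\log_p t)\right)$, $\Pr[a\in C]\leq |C|/p^k$. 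Choosing $k=\Theta(\log_p t)$ then drives this below any constant. Linear independence of $k$ uniform samples from $\F_p^n$ fails only with probability $O(p^{k-n-1})$, which the hypothesis $t\leq c\min\{\eps^2,1/n^2\}\,p^n$ comfortably absorbs, and the $\eps^2$ factor plays the same role as in \Cref{thm:general_G_H} to ensure the far-from-homomorphism case translates into many local violations.

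The main obstacle I expect is not the algebraic swap itself but re-running the online bookkeeping: the adversary sees the $x_i$'s (and their answered values) before $a$ is queried, but must commit to its corruptions without knowing the coefficient vector $(c_1,\dots,c_k)$, which is where the remaining entropy of $a$ lives. Verifying that the framework of \Cref{sec:resilience} remains tight when the hidden randomness contributes $k\log_2 p$ bits instead of $k$ bits, that the reserve rebuilding and union bounds go through in base $p$, and that the sequence of conditional distributions used to argue fresh-query placement is preserved under the change from $\{\pm 1\}^k$ to $\F_p^k$, is the main technical content. Given this, the base-$p$ avoidance bound above is exactly what converts the general $\log_2 t$ dependence into the improved $\log_p t$ dependence, completing the proof.
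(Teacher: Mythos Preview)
Your high-level strategy---replacing random $\pm 1$ signs by random $\F_p$ coefficients so that the derived query $a$ ranges over roughly $p^k$ rather than $2^k$ values---is exactly the paper's approach, but two ingredients are missing. First, you treat the resilience framework as if its only input were the avoidance bound (``the probability that $a$ lands in $C$''), but that is the \emph{unpredictability} input; the framework of \Cref{sec:general_framework} also requires the base test, with no adversary present, to reject $\eps$-far inputs with probability $\Omega(\min\{k\eps,1\})$. The paper proves this separately (\Cref{thm:prime-field-soundness-based-on-eps}) by re-running both the elementary argument and the corrector argument of \Cref{sec:signs-test} with coefficients in place of signs. The corrector step relies on the coefficient distribution being closed under negation, which is why the paper draws coefficients from $[p-1]$ rather than all of $\F_p$ and keeps the number of points even; your choice of uniform $c_i\in\F_p$ would need separate justification here.

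Second, the paper splits into two regimes: when $m=\Theta(\log_p t+1/\eps)\le n/4$ it runs the coefficient test with a \emph{fresh} $m$-point reserve in each of $O(1)$ independent iterations, so the amplification lemma applies verbatim; when $m>n/4$ it falls back to the sample-based tester of \Cref{thm:group-specific-sample-based-erasure-resilient-result}. Your design instead fixes $k=\Theta(\log_p t)$, uses a single shared reserve, and runs $\Theta(1/\eps)$ dependent trials. The per-trial rejection probability is then conditioned on one fixed draw of reserve points, whereas the soundness bounds of \Cref{sec:signs-test} are stated over \emph{random} points; you would need a conditional-soundness lemma that is not supplied by the existing analysis. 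The paper's fresh-reserve-per-iteration structure and its case split sidestep both this dependency and the need to pin down the implicit constant in $k=\Theta(\log_p t)$ against~$n$.
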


Next we show that, when the range is $\F_p$ (that is, $r=1$), the bound in \Cref{thm:prime-fields-p=q} is tight.

\begin{theorem}
\label{thm:prime-fields-lower-bound}
    For all primes $p$, there exists $n_0 \in \N$ such that for all $n\in \N, n \geq n_0$ and $\eps \in (0, \frac{p-1}{2p}]$, every $t$-online erasure-resilient $\eps$-tester for homomorphism of functions of the form $f:\F_p^n \to \F_p$ must make $\Omega(\log_p t)$ queries.
\end{theorem}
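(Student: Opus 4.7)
The plan is to generalize the $\F_2$ lower bound of \cite{KalemajRV23} by exhibiting two input distributions together with an online erasure strategy that makes any tester's transcript identically distributed whenever it makes fewer than $c\log_p t$ queries. I take the YES distribution to be $f(x) = \langle a, x\rangle$ for $a \sim \mathrm{Unif}(\F_p^n)$ and the NO distribution to be a uniformly random $f:\F_p^n \to \F_p$. A standard Chernoff plus union bound over the $p^n$ homomorphisms shows that for $n \geq n_0(p)$ large enough, a NO sample is $\eps$-far from every homomorphism with probability $1-o(1)$; the hypothesis $\eps \leq (p-1)/(2p)$ leaves slack from the expected random distance $(p-1)/p$.

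Next, I specify the adversary strategy. It maintains a linearly independent subset $B \subseteq \F_p^n$ of revealed query points, starting empty. On each query $x_i$, the adversary returns the current value of $f(x_i)$---which is $\bot$ if $x_i$ was preemptively erased and $f(x_i)$ otherwise---and adds $x_i$ to $B$ if the response is non-$\bot$ and $x_i \notin \mathrm{span}(B)$. Between queries, using its online erasure budget, it erases every element of $\mathrm{span}(B)\setminus B$ not yet erased. While the strategy remains feasible, the revealed responses are uniform and independent $\F_p$-values under both distributions (for YES because $\langle a,\cdot\rangle$ restricted to a linearly independent set is uniform over uniform $a$; for NO trivially), and the set of $\bot$-positions is a deterministic function of the query sequence, independent of $f$. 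Hence the tester's full transcript is identically distributed under the two cases, and no tester can distinguish them while the strategy survives.

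The crux is the budget bookkeeping that determines when the strategy becomes infeasible. When $|B|$ grows from $k-1$ to $k$, the set $\mathrm{span}(B)\setminus B$ grows by $p^{k-1}(p-1)-1$ points, all of which must be erased before the next query. Against a fixed-rate $t$-online erasure adversary this is feasible only while $p^{k-1}(p-1) \leq t+1$, i.e., $k \leq \log_p t + O(1)$; an analogous cumulative bound $p^k - k \leq qt$ yields the same asymptotics in the budget-managing model. Since $|B|$ grows by at most one per query, a tester making $q = o(\log_p t)$ queries keeps the strategy feasible throughout, and by Yao's minimax principle this gives the claimed $\Omega(\log_p t)$ lower bound.

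The main obstacle will be the budget accounting: showing that the invariant ``$\mathrm{span}(B)\setminus B$ is fully erased'' can be maintained within the $t$-per-query online constraint precisely at the critical steps where $|B|$ grows, and that queries with $x_i \in \mathrm{span}(B)$ or $x_i \in B$ consume no new erasures so the strategy sustains many queries. A secondary point to verify is that conditioning the NO distribution on being genuinely $\eps$-far does not perturb indistinguishability beyond an $o(1)$ additive loss, which is absorbed into the final success-probability argument. Once these are handled, the rest of the proof parallels the $\F_2$ argument of \cite{KalemajRV23}, with the $p$-ary span replacing the binary one.
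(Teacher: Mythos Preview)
Your proposal is correct and follows essentially the same approach as the paper: both use Yao's minimax with the uniform-homomorphism versus uniform-function hard distribution pair, the same span-erasing adversary strategy ensuring that non-$\bot$ answers are at linearly independent points (hence i.i.d.\ uniform in $\F_p$ under both distributions), and the same Chernoff-plus-union-bound argument for the NO distribution being far. Your budget accounting and your explicit maintenance of the independent set $B$ are slightly more detailed than the paper's presentation, but the underlying argument is identical.
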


We prove \Cref{thm:prime-fields-p=q,thm:prime-fields-lower-bound} in \Cref{sec:prime-fields}. Finally, we observe that the case when $p\neq q$ is easy.

\begin{rem}
\label{rem: H incompatible wrt G}
    For all finite groups     $G$ and $H$ such that the zero map is the only homomorphism from $G$ to $H$
        (e.g., when $G=\F_p^n$ and $H=\F_q^r$, where $p$ and $q$ are distinct primes and $n,r\in\N$), testing if a function $f: G \to H$ is a homomorphism is equivalent to testing if $f$ is identically zero and can be done with $\Theta(1/\eps)$ queries, even in the presence of a $t$-online manipulation budget-managing adversary.
\end{rem}

\subsubsection{A generic framework for resiliency}
Resiliency to online manipulations calls for testers that query $f$ on elements of $G$ which cannot be easily predicted ahead and tempered with.
This desirable trait of \emph{unpredictability} interferes with the tester's ability to query the most useful (but potentially predictable) elements, which results in the need for more queries overall. 
The tension between the number of queries and how unpredictable they are could be seen as an interpolation between two well-known extremes: query-based testers (in the vanilla testing model), and sample-based testers where each sample is a pair $(x,f(x))$ for a uniformly random $x\in G$, independent of other samples.
Goldreich and Ron~\cite{GoldreichR16} defined a related notion: a tester is $\alpha$-fair if the marginal distribution of each query is $\alpha$-flat, that is, the probability of any element in $G$ to be queried is at most $\alpha$ at any given query.\footnote{To be precise, the parameter $\alpha$ is used a bit differently in \cite{GoldreichR16}, where an $\alpha$-fair tester queries any element with probability at most $1/(\alpha n)$ with $n$ being the size of the domain.
The two notions are equivalent by taking $\alpha' = 1/(\alpha n)$.}
They showed that if a property has an $\alpha$-fair $\eps$-tester that makes $q$ queries with $\alpha = \frac{1}{\Theta(\card{G})}$ and constant $\eps$ then it also has a sample-based tester (in a sense, a ``completely fair'' tester) with $\card{G}^{1-\frac{1}{q}}$ samples.
While this result is an intriguing instance of a trade-off between number of queries and how predictable they are, the notion of fairness seems to be ill-suited for resilience to online manipulations. Indeed, the $3$ queries in the \cite{BLR93} test each have a marginal distribution that is uniformly random over the domain, which makes it fully fair, yet as noted by~\cite{KalemajRV23} it is susceptible to online manipulations as the third query is completely fixed \emph{given the previous two}.

The key to our general framework to prove resiliency is a new notion of unpredictability of a tester (\Cref{def:unpredictable-test}).
Roughly speaking, we say that a tester is $(q,\alpha,\beta)$-unpredictable if, other than low probability $\beta$, all $q$ queries combined are sufficiently unpredictable.
More precisely, we require that for all $i\in[q]$, the $i^{th}$ query is $\alpha_i$-flat with probability at least $1- \beta_i$, where $\alpha = \sum_{i\in[q]}\alpha_i$ and $\beta = \sum_{i\in[q]}\beta_i$.
The sequential and conditional nature is reminiscent of the well-known Santha-Vazirani sources that proved immensely useful in the literature on extractors\cite{santhaV1986}.

We show this notion is well-suited to show resilience to the online manipulation model: no matter which manipulations were made in the past, an unpredictable tester is highly likely to \emph{never} query a manipulated entry.
To be precise, a $(q,\alpha,\beta)$-unpredictable tester has probability of at most $\alpha T + \beta$ of seeing a manipulation, where $T$ is the total amount of manipulated entries by the end of the execution.
To establish our results, we separately prove the soundness and the unpredictability of a core tester, which we then repeat to amplify the success probability to a constant.
We also address some nuances of amplification in the online manipulation model (e.g., later iterations are less reliable due to manipulations) and give an amplification lemma that helps sorting out the choice of certain parameters (the ``size'' of the core algorithm, and the number of repetitions made).

\subsection{Technical overview of the tester for general groups}
\label{sec:technical-overview}
The starting point for the classical homomorphism testers, as well as for manipulation-resilient linearity testers, is a basic test (which can also be described as a local characterization of a property). All basic tests we discuss have perfect completeness, i.e., they always accept every homomorphism. For the standard model, if the basic test has good soundness, it can be repeated to obtain the desired tester. For the online adversarial model, the basic test, in addition to soundness, has to be sufficiently unpredictable in order to evade interference by the adversary. Then (as in \cite{KalemajRV23}) the tester can build a {\em reservoir} by querying uniformly random elements and then use the final query to simulate the basic test on a random subset of the reservoir. Our main technical contribution is the design and analysis of a new basic test. While its full analysis with the reservoir requires additional ideas, we focus here on why a natural extension of the manipulation-resilient linearity test fails for our setting, how our test is constructed, and key insights behind its correctness.

The $(k+1)$-point linearity test of \cite{KalemajRV23} queries the input function $f$ on $k$ independent and uniformly random elements $x_1,\dots,x_{k}$  and their sum $x_{k+1} := \sum_{i\in[k]} x_i$ and checks whether $\sum_{i\in[k]} f(x_i)$ equals $f(\sum_{i\in[k]}x_i)$.
Let $(G,+)$ and $(H,\oplus)$ be finite groups and $f: G\to H$ be a function. A natural extension of the \cite{KalemajRV23} test to the homomorphism property is to check whether\footnote{From now on, we use compact notation to represent our sums: e.g.,
\eqref{eq:k-point-test} can also be written as $\bigoplus_{i \in [k]} f(x_i) = f\Big( \sum_{i \in [k]} x_i\Big)$. In non-Abelian groups, the order matters.
The terms in the sums are ordered according to the index: 
in increasing index order for $\sum_{i\in[k]} x_i$ or $\bigoplus_{i=1}^{k} y_i$, and in decreasing index order for $\sum_{i=k}^{1} x_i$.
}
\begin{equation}
\textstyle{\label{eq:k-point-test}
     f(x_1)\oplus f(x_2)\oplus\cdots\oplus f(x_k)
    \stackrel{?}{=} f(x_1+x_2+\cdots +x_k) }.
\end{equation}

If $f$ is a homomorphism, then \eqref{eq:k-point-test} holds for all tuples $(x_1,\dots,x_{k})$, but the converse is false: for $G = H = \Z_{k-1}$ (the additive cyclic group over $k-1$ elements) and any homomorphism $h:G\to H$, the shifted homomorphism function $f(x) = h(x) \oplus s$ with shift $s\in H$ also satisfies \eqref{eq:k-point-test} for all tuples.\footnote{A similar phenomenon occurs for functions $f:\F_2^n \to \F_2$: the $k$-point tester with even $k$ works for linearity, but odd $k$ produces a test for \emph{affinity} instead.}

To resolve this issue, our homomorphism tester uses a more robust local characterization.

Before introducing our tester, we discuss our desired soundness guarantee and sketch a simple argument showing that many NO instances are rejected by the $(k+1)$-point test with sufficient probability. (In the example above, we saw that this tester fails on functions that are really far from any homomorphism, but it turns out it works well for functions which are closer to a homomorphism).
Let $h_f$ be a closest homomorphism to $f$ and $\eps_f$ be the distance from $f$ to $h_f$. Define $\Delta_f:= \set{x\in G: f(x) \neq h_f(x)}$ (so that $\card{\Delta_f} = \eps_f \cdot \card{G}$).
Since \eqref{eq:k-point-test} holds for $h_f$ with all tuples $(x_1,\dots,x_{k})$, one could analyze how often it holds for $f$ by comparing $f$ and $h_f$ on each tuple. The following ``wishful thinking'' argument is an easy assertion when $\eps_f$ is small enough, and was used before, e.g., for optimal analysis of low-degree testing over $\F_2$~\cite{BKSSZ10}.

\paragraph{Simple case: $\eps_f < \frac 1{2k}.$}
We view each point $x_i$ (including the sum $x_{k+1}$) as a random variable, and ask whether $x_i \in \Delta_f$. Marginally, each $x_i$ is distributed uniformly over $G$ and so this occurs with probability $\eps_f$ for all $i\in[k+1]$.
Intuitively, the points are ``independent enough'' so when $\eps_f < \frac 1{2k}$, w.h.p.\ at most one of them falls in $\Delta_f$, and if exactly one point does - then the tester rejects $f$ (by cancellation rule on the group $H$).
Formally, when $\eps_f < \frac{1}{2k}$ the probability of exactly one $x_i$ falling in $\Delta_f$ can be lower bounded by $k\eps_f/2$ using the premise and pairwise independence of the points and inclusion-exclusion 
(Bonferroni inequality of order $2$),
as was done, e.g., in~\cite{BKSSZ10,BenEliezerKMR24} to analyze low-degree  tests of polynomials over fields.
Note that by a union bound, the rejection probability is at most $k\eps_f$ so the analysis above is optimal (up to a factor of $2$) for inputs with small distance $\eps_f$. 

However, larger values of $\eps_f$ lead to more than a single point in $\Delta_f$ for a typical tuple, and the guaranteed pairwise independence is no longer enough.
Luckily, the $k+1$ points used in \eqref{eq:k-point-test} are in fact $k$-wise independent, which is a stronger guarantee for $k > 2$, and we leverage this fact.

\paragraph{The elementary argument.}
Let $p_m$ denote the probability that $m$ independent and uniformly random elements $z_1,\dots, z_m \in G$ satisfy $\bigoplus_{i\in[m]} f(z_i) = \bigoplus_{i\in[m]} h_f(z_i)$, and let $\gamma_m = 1 - p_m$ be the probability of an inequality.
By definition, $\gamma_1 = \eps_f$.
We utilize the $k$-wise independence by splitting the last point, $x_{k+1}$, from the rest and comparing the RHS and LHS of \eqref{eq:k-point-test} with $f$ to those with $h_f$ where the equality holds.
In particular, the event that the RHS of $f$ and $h_f$ agree but the LHS are different, leads to rejection on $f$. Then
\begin{talign*}
     &\Pr_{x_1,\dots,x_k\in G}{\Big[\bigoplus_{i\in[k]}f(x_i) \neq f\Big(\sum_{i\in[k]} x_i\Big) \Big]}\\
    \geq &\Pr_{x_1,\dots,x_k\in G}{\Big[\bigoplus_{i\in[k]}f(x_i) \neq \bigoplus_{i\in[k]}h(x_i)\ \bigwedge\  f\Big(\sum_{i\in[k]} x_i\Big) = h\Big(\sum_{i\in[k]} x_i\Big)\Big]}\\
    \geq &\Pr_{x_1,\dots,x_k\in G}{\Big[\bigoplus_{i\in[k]}f(x_i) \neq \bigoplus_{i\in[k]}h(x_i)\ \Big]}
    - \Pr_{x_1,\dots,x_k\in G}{\Big[f\Big(\sum_{i\in[k]} x_i\Big) \ne h\Big(\sum_{i\in[k]} x_i\Big)\Big]}\\
    = &\gamma_k - \gamma_1 ,
\end{talign*}
where the second inequality uses $\Pr[A\wedge B] \geq \Pr[A] - \Pr[\bar{B}]$
for two events $A, B$ (where $\bar{B}$ is the complement of $B$). The equality is since $(x_1,\dots,x_k)$ are i.i.d. samples taken uniformly at random from $G$.

From here on, the goal is to lower bound $\gamma_{k}$. To showcase the argument we restrict ourselves to $H = \F_2$ (which in particular captures linearity over $\F_2$), and discuss the extension later.
Here, $f(x)-h_f(x) = 1$ if $x\in\Delta_f$, and otherwise $f(x) = h_f(x)$.
That is, $\gamma_{k}$ is the probability that there is an odd number of indices $i\in[k]$ for which $x_i\in\Delta_f$.
Since these points are independent and each has probability $\eps_f$ to fall inside $\Delta_f$, this equals to the probability of a binomial random variable $Z\sim Bin(k, \eps_f)$ being odd, which is known to equal $\frac{1 - (1-\eps_f)^{k}}{2}$ (see \Cref{fact:binomial rv is even}).
This expression has a known lower bound of $\Omega(\min\set{k\eps_f , 1})$, which absorbs the substraction of $\gamma_1 = \eps_f$ for any $\eps_f \leq 1/8$ and completes the proof.\footnote{We use a precise version $\frac{1 - (1-\eps_f)^{k}}{2} = \Omega(\min\set{k\eps_f,1})$ with particular constant, by applying~\cite[Lemma 3.1]{BenEliezerKMR24}.}

To generalize this argument, note that $H = \F_2$ is essentially the worst-case scenario, as ``coincidental cancellations" are most probable when $f(x) - h_f(x) \in \set{0,1}$.
In section \Cref{sec:soundness-of-signs-tester-Small-distance} we prove the claim for our modified tester, described shortly, and any codomain $H$, by analyzing an inductive recurrence directly on $p_m$ (rather than $\gamma_m = 1 - p_m$).

\paragraph{The random signs test.} We make the local characterization more robust by adding \emph{signs}.
That is, we rely on the inverse operation that is guaranteed in every group, and use $-x$ to express addition of the inverse of $x$, for both groups $G$ and $H$.
Since every homomorphism satisfies $f(-x) = -f(x)$ for all $x\in G$, it respects not only summations of elements but signed summations as well.
The corresponding tester is this: given a function $f: G \to H$,  choose uniformly at random (and independently) a tuple $(x_1,\dots,x_{k},\sigma_1,\dots,\sigma_{k})$ consisting of $k$ elements $x_1,\dots,x_k \in G$, and $k$ signs $\sigma_1,\dots,\sigma_{k}\in\set{+,-}$. Finally, check wether
\begin{equation}
     \label{eq:signed-k-point-test}
    \bigoplus_{i\in[k]} \sigma_i f(x_i) \stackrel{?}{=} f(\sum_{i\in[k]}\sigma_i x_i) .
\end{equation}
As before, if $f$ is a homomorphism, all tuples satisfy equation~\eqref{eq:signed-k-point-test}. The difference is that now, when $k$ is even, the converse holds as well.\footnote{Indeed, if $f$ always satisfies equation~\eqref{eq:signed-k-point-test}, then given $x_1,x_2\in G$ we use the tuple with elements $x_1,x_2$ and $x_j = 0$ for $j\geq 3$, and signs $\sigma_1 = \sigma_2 = '+'$, and $\sigma_j$ for $j\geq 3$ alternates between addition and subtraction. The $k-2$ alterations of $+0, -0$ cancel out on the RHS, and similarly the alterations
 of $+f(0)-f(0)$ cancel out on the LHS (even if $f(0) \neq 0_H$), which asserts that $f(x_1) + f(x_2) = f(x_1+x_2)$. This holds for any $x_1,x_2\in G$, showing that $f$ is a homomorphism.}
To analyze the soundness of \SignsTestInText, we consider two cases. For $\eps_f < 1/8$, the elementary argument (\cref{lem:soundeness-signs-test-small-eps-general-H}) suffices.
To prove this tester rejects functions with $\eps_f \geq 1/8$ (unlike the unmodified $k$-point tester), we show the contrapositive via a corrector argument based on that of \cite{Ben-orCLR08}.

In the corrector method, a corrector function $g(x)$ is defined as the value satisfying the most instances of equation~\eqref{eq:signed-k-point-test}, where each instance corresponds to a signed tuples for which $\sum_{i\in[k]} \sigma_i x_i = x$.
Differently put, each tuple of elements and signs that sum to $x$ "votes" for a value it wishes $f(x)$ to be --- then $g(x)$ is defined as the plurality of these votes.
To prove soundness, one shows that if the test rejects $f$ with probability $\mu \leq 1/10$, then two key assertions follow: (a) $f$ is $2\mu$-close to $g$; (b) $g$ is a homomorphism.
The two assertions are proved in \Cref{lem:corrector_bounding_delta} and \Cref{lem:corrector_g_homomorphisms} respectively, with the latter employing the probabilistic method.
Both rely on an auxiliary step, \Cref{lem:corrector_bounding_eta}, which shows that the choice of $g(a)$ as a plurality of votes, is in fact a strong majority of at least $1-2\mu$ of the votes, for every input $a\in G$.

We focus here on the proof of this step, which demonstrates the techniques of the full argument, and the role of random signs.
The proof analyzes the collision probability --- the probability that two signed tuples $(\sigma,x), (\tau,y)$ that sum to $a$ vote for the same value for $f(a)$.
By re-arranging, we get
\[
        \textstyle{
        \underbrace{\bigoplus_{i \in [2k]} \sigma_i f(x_i)}_{S_{\sigma,x}}
        = 
        \underbrace{\bigoplus_{i \in [2k]} \tau_i f(y_i)}_{S_{\tau,y}}
        \iff 
        \underbrace{\bigoplus_{i=0}^{k-1}
        -\tau_{k-i} f(y_{k-i}) \oplus
        \bigoplus_{i\in[k]} \sigma_i f(x_i)}_{S_1} 
        = 
        \underbrace{\bigoplus_{i=k+1}^{2k} \tau_i f(y_i) \oplus \bigoplus_{i=0}^{k-1} -\sigma_{2k-i} f(x_{2k-i})}_{S_2}.
        }
    \] 
Crucially, the tuples underlying the sums $S_1, S_2$ are each a mix from $(\sigma,x)$ and $(\tau,y)$ and we show that both have the distribution of the random sign test, but they sum (in $G$) to the same element, $a'$.
Therefore 
\[
        \textstyle{
        \Pr{[S_{\sigma, x} \neq S_{\tau, y}]}
        = \Pr{[S_1 \neq S_2]}
        \leq \Pr{[S_1 \neq f({a'}) \vee S_2 \neq f({a'})]}
        \leq \Pr{[S_1 \neq f({a'})]} + \Pr{[S_2 \neq f({a'})]} = 2\mu .}
    \]
The necessity of random signs in the technique is apparent: without them, the sums $S_{\sigma, x}, S_{\tau,y}, S_1, S_2$ correspond to tuples with different sign vector, failing to relate the probability of this event to $\mu$ (the rejection probability of the test).
By adding random signs, 
each element has a symmetric role in the test, which leaves the connection between test and equation even after one rearranges the elements, which is a delicate matter especially in non-abelian groups. 
Similar issues arise for odd $k$,
as there is no way to split the two summations ``down the middle''.
Both technical requirements appear to be tight: we have seen a choice for $G,H$ where fixed signs (say, all are $+$) fail to characterize homomorphisms.
Yet for $G = \F_2^n, H = \F_2$ the signs have no meaning ($x = -x$ always), and our test coincides the simple $k$-point test, which requires an even value of $k$ to test linearity (otherwise, it tests affinity).

\subsection{Related work}\label{sec:related_work}
\paragraph{Injecting more randomness vs.\ derandomization.}
Interestingly, while a significant amount of work \cite{Trevisan98,SamorodnitskyT00,HastadW03,Ben-SassonSVW03,ShpilkaW06} went into reducing the number of random bits used by homomorphism tests, the online manipulation model presents a contrasting goal: increasing randomness to make future queries less predictable to the adversary. Indeed, Kelman, Linder and Raskhodnikova~\cite{KelmanLR25} show that for certain properties, testing in the online model requires exponentially more random bits than in the offline setting (even with erasures), despite comparable query complexity. This increased randomness requirement motivates the design of testers that are more {\em sample-based}, i.e., use uniformly random labeled samples in place of some arbitrary (potentially adaptive) queries.
This trend aligns with learning-theoretic settings, where labeled samples are typically modeled as less costly than arbitrary queries.

\paragraph{Low-error regime.}
Different versions of $k$-point testers recently appear in the context of testing in the $1\%$ regime, in the classic model (with no online manipulations).
A recent work by Khot and Mittal \cite{KhotM2025} gives testing results for linearity over the \textit{biased} hypercube.
A work concurrent to ours by Mittal and Roy \cite{MittalR2025} presents a general framework for homomorphism testing in the $1\%$ regime, and apply it to various choices of $G, H$.
Specifically for the case of vector spaces over prime fields, \cite{MittalR2025} show a $k$-point test with random coefficients, identical to the test we devise specifically for such choice of $G,H$ (\Cref{alg: random coeffs test}).
Curiously, the exact same algorithm is well-suited for both tasks ($1\%$ regime in the classic model, and $99\%$ in the online manipulations model).

The $k$-point testers of both work do not choose their points uniformly at random, but rather according to "importance". The former according to the weight of each point in the biased hypercube, and the latter chooses a $k$-tuple from $G$ according to a newly defined measure of its "importance" in the structure of the group (which roughly speaking comes from the order of the elements in the group $G$).
This sophisticated choice of $k$-tuples makes it unclear whether such strategies are resilient to online manipulations, as an adversary is prone to manipulate the important data points the test is likely to use.

\subsection{Open questions}\label{sec:open}
Our main results are in a sense orthogonal: \Cref{thm:general_G_H} applies to all finite groups $G$ and $H$, regardless of their structure, precisely quantifying the dependence  on $t$.
In contrast, \Cref{thm:group-specific-sample-based-erasure-resilient-result} shows that $E(G)$ samples, rather than $\log\card{G}$ samples, suffice to generate the entire domain group $G$,and succesfully apply the learn-and-test approach.
Combining both aspects should yield stronger results in the online manipulation model, as in the case of finite fields (\Cref{thm:prime-fields-p=q}). More results of this flavor would be interesting, focusing on the following question: given a group $G$ and $t\in\N$, how many random samples from $G$ are required to generate w.h.p.\ a subgroup $G' \subseteq G$ of size $\card{G'}$ that exceeds $t$?

A different avenue is utilizing the structure of $H$ and its \emph{compatibility} to $G$, both for upper and lower bounds.
As mentioned in \Cref{rem: H incompatible wrt G}, $H$ could reduce the complexity dramatically when it is incompatible with $G$.
Is it possible to define a measure of compatibility that dictates the complexity of testing homomorphism from $G$ to $H$?
For lower bounds, could we find for each $G$ a group $H$ so that \Cref{thm:group-specific-sample-based-erasure-resilient-result} is tight? One candidate would be $H = Aut(G)$, the group of automorphisms from $G$ to $G$.

\paragraph{Organization.}
The rest of the paper is organized as follows. We begin in \Cref{sec:prelims} with formal definitions of the problem and the online manipulation model. In \Cref{sec:signs-test}, we introduce our core technical contribution, the $\mathsf{Random\ Signs\ Test}$ (\Cref{alg: random signs test}), and analyze its soundness in the standard (non-adversarial) model. In \Cref{sec:general_framework}, we present the general framework for achieving online manipulation resilience, defining 'unpredictable testers' and proving our generic amplification lemma (\Cref{lem: generic amplification}). We then apply this framework in \Cref{sec:resilience} to prove our main optimal bound for general groups (\Cref{thm:general_G_H}). Finally, \Cref{sec:group-specif} is dedicated to our group-specific results, where we prove the bounds for sample-based testers (\Cref{thm:group-specific-sample-based-erasure-resilient-result}) and for functions over prime fields (\Cref{thm:prime-fields-p=q} and \Cref{thm:prime-fields-lower-bound}).

\section{Preliminaries}
\label{sec:prelims}
For two groups $(G,+)$ and $(H,\oplus),$ the function $h:G \to H$ is a  group {\em homomorphism} 
if $h(x_1 + x_2) = h(x_1)\oplus h(x_2)$ for all $x_1,x_2 \in G$.
In non-Abelian groups, the order matters.
The terms in the sums are ordered according to the index: 
in increasing index order for $\sum_{i\in[k]} x_i$ or $\bigoplus_{i=1}^{k} y_i$, and in decreasing index order for $\sum_{i=k}^{1} x_i$.
The set of all group homomorphisms from $G$ to $H$ is denoted $\mathsf{HOM}(G, H)$. The (relative) Hamming {\em distance} between two functions $f,g:G\to H$ is $dist(f,g):= \Pr_{x\in G} [f(x)\neq g(x)].$ The {\em distance from $f:G\to H$ to a property} (i.e., a set) $\cP$ of functions of the same form is $dist(f,\cP):=\inf_{g\in\cP}dist(f,g).$
For every function $f:G\to H$, let $\eps_f:=dist(f,\mathsf{HOM}(G, H))$. A function $f$ is {\em $\eps$-close to a homomorphism} if $\eps_f\leq \eps$ and is {\em $\eps$-far from a homomorphism} otherwise.
\begin{definition}[Online $\eps$-tester \cite{KalemajRV23,BenEliezerKMR24}]
\label{def:online_tester}
Fix $\eps\in(0,1).$
    An online $\eps$-tester $\cT$ for a property~$\mathcal{P}$ that works in the presence of a specified adversary (e.g., $t$-online erasure budget-managing) is given access to an input function $f$ via oracle access to a related function $f'$ that is initially equal to $f$ and, after each query answered, may be modified by the adversary.
    For all adversarial strategies of the specified type,
    \begin{enumerate}
        \item if $f \in \mathcal{P}$, then $\cT$ accepts with probability at least 2/3, and
        
        \item if $f$ is $\eps$-far from $\cP,$
        then $\cT$ rejects with probability at least 2/3, 
    \end{enumerate}
   where the probability is over the internal randomness of $\cT$. 
      If $\cT$ works in the presence of an erasure (resp., corruption) adversary, we refer to it as an online-erasure-resilient (resp., online-corruption-resilient) tester.
       If $\cT$ always accepts every function $f\in\cP$, then $\cT$ has \emph{1-sided error.}
        If all queries of $\cT$ are chosen uniformly and independently from the domain of $f$, then $\cT$ is {\em sample-based}.
    \end{definition}

\section{Random Signs Test}\label{sec:signs-test}
In this section, we present our \SignsTestInText (\Cref{alg: random signs test}) for testing group homomorphism. This test always accepts functions that are homomorphisms because it only rejects if a violation of the homomorphism property is found. We analyze the soundness of the test in \Cref{thm:general-soundness-based-on-eps}.

\begin{algorithm}
\caption{$\SignsTest{k}$}
\label{alg: random signs test}
\begin{algorithmic}[1]
\Ensure $k \in \N$
\Require Query access to a function $f: G \to H$ where $(G,+)$ and $(H,\oplus)$ are finite groups.
\State Draw $k$ independent and uniformly random elements
$x_1, x_2, \dots, x_{k}$ from $G$.
\State Draw $k$ independent and uniformly random 
signs $\sigma_1, \dots, \sigma_{k}$ from $\set{+, -}$.
\State\label{step:query} Query $f(x_1), f(x_2), \dots, f(x_{k})$ and
$f(a)$, where  $a \gets \sum_{i\in[k]} \sigma_i x_i.$ 
\Comment{$+b=b$ and $-b$ is the inverse of $b$}
\State\label{step:signs-condition}\textbf{Accept} if $\bigoplus_{i\in [k]} \sigma_i f(x_i) = f(a)$; otherwise, \textbf{reject}. \Comment{for both $b\in G$ and $b\in H$.}
\end{algorithmic}
\end{algorithm}

\begin{theorem}
\label{thm:general-soundness-based-on-eps}
    For all  $k\in\N$, $\eps \in (0,1)$, finite groups $(G,+)$ and $(H,\oplus)$,   and functions $f:G \to H$ such that $f$ is $\eps$-far from being a homomorphism,
        \begin{equation}
         \Pr{[\SignsTest{2k} \text{ rejects}]} \geq \min\set{\frac{(2k-3)\cdot \eps}{3}, \frac{1}{16}}.
    \end{equation}
\end{theorem}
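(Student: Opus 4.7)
My plan is to prove the theorem by splitting on the distance $\eps_f$ of $f$ from $\mathsf{HOM}(G,H)$, using two complementary arguments as sketched in the technical overview.

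\textbf{Small distance case ($\eps_f < 1/8$).} Here I would aim to show $\Pr[\SignsTest{2k}\text{ rejects}] \geq \Omega(k\eps_f)$, which already gives the linear-in-$k$ bound. Let $h$ be a closest homomorphism to $f$ and define $p_m$ to be the probability that a signed sum $\bigoplus_{i\in[m]} \sigma_i f(z_i)$ agrees with $\bigoplus_{i\in[m]}\sigma_i h(z_i)$ over independent uniform $(z_i,\sigma_i)$; set $\gamma_m = 1 - p_m$. The $2k$ points $x_1,\dots,x_{2k}$ together with $a = \sum \sigma_i x_i$ are $2k$-wise independent when the signs are uniform (conditioning on the signs, each $x_i$ determines $a$ only through the sum), so by the ``split the last point'' trick from the overview I get the lower bound on rejection $\gamma_{2k} - \gamma_1$. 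Because $h$ is a homomorphism, $\gamma_m$ tracks whether the signed sum $\bigoplus_i \sigma_i(f(z_i)\ominus h(z_i))$ is nonzero, and I would set up an inductive recurrence on $p_m$ directly (rather than $\gamma_m$) that handles arbitrary codomains $H$; the worst case is essentially $H=\F_2$, for which $\gamma_{2k} = \frac{1-(1-\eps_f)^{2k}}{2} = \Omega(\min\{k\eps_f,1\})$ by the binomial-parity identity. Subtracting $\gamma_1 = \eps_f$ is absorbed into the constant since $\eps_f<1/8$, yielding $\Pr[\text{reject}] \geq \Omega(k\eps_f)$ with the explicit constant $\tfrac{2k-3}{3}\eps$ after bookkeeping.

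\textbf{Large distance case ($\eps_f \geq 1/8$).} Here I prove the contrapositive: if $\mu := \Pr[\SignsTest{2k}\text{ rejects}] < 1/16$, then $f$ is within $2\mu < 1/8$ of some homomorphism, contradicting $\eps$-farness. Following the corrector strategy of Ben-Or--Coppersmith--Luby--Rubinfeld, for each $a\in G$ define $g(a)\in H$ to be the plurality value of $\bigoplus_{i\in[2k]}\sigma_i f(x_i)$ over all signed tuples $(\sigma,x)$ with $\sum_i \sigma_i x_i = a$. The core step is a collision-probability bound: given two independent tuples $(\sigma,x),(\tau,y)$ summing to the same $a$, rearrange into $S_1 = \bigoplus_{i=0}^{k-1}-\tau_{2k-i}f(y_{2k-i}) \oplus \bigoplus_{i\in[2k]}\sigma_i f(x_i)$ and the complementary $S_2$, both of which (crucially because signs are uniform and symmetric) have the exact distribution of the \SignsTestInText, and whose underlying elements in $G$ both sum to a common element $a'$. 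A union bound yields $\Pr[S_1 \neq S_2] \leq 2\mu$, showing the plurality is in fact a $(1-2\mu)$-majority for every $a$ (this is the auxiliary lemma referenced in the overview).

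\textbf{Deducing closeness and homomorphism.} From the majority bound, picking the plurality means each $a$ disagrees with the ``vote'' from a uniformly random tuple with probability at most $2\mu$. Summing / averaging over $a\in G$ and comparing to the rejection probability of the test on $f$ gives $\Pr_x[f(x)\neq g(x)] \leq 2\mu$, i.e.\ $f$ is $2\mu$-close to $g$. Finally, to show $g\in \mathsf{HOM}(G,H)$, I use the probabilistic method: for any $a,b\in G$, show that with positive probability independently sampled signed $2k$-tuples summing to $a$, to $b$, and to $a+b$ simultaneously have their respective signed sums of $f$-values equal to $g(a)$, $g(b)$, and $g(a+b)$, while their concatenation also computes a signed sum in the test form whose value must equal $g(a)\oplus g(b)$; three applications of the $(1-2\mu)$-majority plus $\mu<1/16$ keep the union bound below $1$, forcing $g(a+b)=g(a)\oplus g(b)$.

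\textbf{Main obstacle.} The delicate part is the collision-probability rearrangement step, which is where the even parameter $2k$ and the random signs are both essential: in a non-abelian $G$ (or $H$) the rearranged tuples $S_1,S_2$ must be verified to have exactly the \SignsTestInText\ distribution, which requires that the prefix of inverted $\tau$'s and the suffix of original $\sigma$'s (and vice versa) yield uniform independent signs and uniform $x$'s when marginalized. Getting the ordering right, and confirming the two rearranged tuples share a common sum $a'$ in $G$, is the subtle combinatorial core of the argument; everything else is bookkeeping around the two cases and the threshold $\mu < 1/16$ chosen to match the constants in the theorem statement.
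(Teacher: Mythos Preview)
Your overall strategy matches the paper's: split at $\eps_f = 1/8$, handle small $\eps_f$ via the $\gamma_{2k}-\gamma_1$ bound and an inductive recurrence on $p_m$, and handle large $\eps_f$ via the corrector $g$ together with the collision argument showing $\eta_a \leq 2\mu$. The small-distance case and the collision step are right in spirit (your $S_1$ formula has an index slip --- it should mix $k$ terms from each of the two $2k$-tuples, giving a $2k$-term signed sum with the test's distribution, not $2k+k$ terms).

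The genuine gap is in the step showing $g$ is a homomorphism. You propose to sample three tuples \emph{independently} in $\fixset{a}$, $\fixset{b}$, $\fixset{a+b}$ and note that the concatenation of the first two has value $g(a)\oplus g(b)$ with high probability. But that concatenation is a $4k$-tuple with no relation to your independent $2k$-tuple in $\fixset{a+b}$; nothing forces its value to equal $g(a+b)$, so the chain of equalities does not close. The paper instead \emph{correlates} the tuples: draw $(\overline\sigma,\overline x)\in\fixset{a}$ uniformly; force the left half of $(\overline\tau,\overline y)$ to be the sign-inverted, order-reversed right half of $(\overline\sigma,\overline x)$, and choose the right half of $(\overline\tau,\overline y)$ uniformly subject to $(\overline\tau,\overline y)\in\fixset{b}$; then take $(\overline\phi,\overline z)$ to be the left half of $(\overline\sigma,\overline x)$ followed by the right half of $(\overline\tau,\overline y)$. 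The middle $2k$ terms of $S_{\sigma,x}\oplus S_{\tau,y}$ now cancel termwise (this works even for non-abelian $H$), so $S_{\sigma,x}\oplus S_{\tau,y}=S_{\phi,z}$ holds \emph{with probability $1$}, while each of the three tuples is still marginally uniform in its respective $\fixset{\cdot}$. With that deterministic link in place, three applications of the $(1-\eta)$-majority bound and a union bound finish the argument. This cancelling-middle construction is the missing idea; independent sampling cannot substitute for it.
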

Recall that $\eps_f$ denotes the (relative) distance from $f$ to $\mathsf{HOM}(G, H)$. We analyze the soundness of \SignsTestInText for small $\eps_f$ in \Cref{sec:soundness-of-signs-tester-Small-distance} and for large $\eps_f$ in \Cref{sec:soundness-of-signs-tester-Large-distance}. We put these two results together and complete the proof of \Cref{thm:general-soundness-based-on-eps} in \Cref{sec:completing-the-soundness-analysis}.

\subsection{Soundness analysis for small \texorpdfstring{$\eps_f$}{epsilon-f}}\label{sec:soundness-of-signs-tester-Small-distance}

In this section, we state and prove \Cref{lem:soundeness-signs-test-small-eps-general-H}, establishing the soundness of \Cref{alg: random signs test} for $\eps_f < \frac{1}{8}$.

\begin{lemma}
\label{lem:soundeness-signs-test-small-eps-general-H}
    For all $k\in\N$, finite groups $(G,+)$ and $(H,\oplus)$, and functions $f: G \to H$ 
    with $\eps_f< \frac 1 8$,
        \begin{equation}
    \label{eq:test soundness 2}
         \Pr{[\SignsTest{k} \text{ rejects}]} \geq \min\set{\frac{(k-3)\cdot \eps_f
                  }{3}, \frac{1}{6}}.
    \end{equation}
\end{lemma}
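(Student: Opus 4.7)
The plan is to follow the \emph{elementary argument} sketched in \Cref{sec:technical-overview}. Fix a homomorphism $h_f$ at distance $\eps_f$ from $f$ and let $\Delta_f = \{x \in G : f(x) \neq h_f(x)\}$. Denote the signed products by $L_m := \bigoplus_{i \in [m]} \sigma_i f(x_i)$ and $L_m' := \bigoplus_{i \in [m]} \sigma_i h_f(x_i)$; since $h_f$ is a homomorphism, $L_k' = h_f(a)$ where $a := \sum_i \sigma_i x_i$. First, I would observe that whenever $L_k \neq L_k'$ while $f(a) = h_f(a)$, the test rejects (since then $L_k \neq f(a)$), and use that $a$ is uniformly distributed on $G$ (each $\sigma_i x_i$ is uniform on $G$, and sums of independent uniform elements remain uniform) to conclude $\Pr[f(a) \neq h_f(a)] = \eps_f$. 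A union bound then yields
\[
\Pr[\SignsTest{k}\text{ rejects}] \;\geq\; \gamma_k - \eps_f, \qquad \text{where } \gamma_k := \Pr[L_k \neq L_k'].
\]

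The main inductive step would be to prove $\gamma_k \geq \frac{1 - (1-2\eps_f)^k}{2}$. I would introduce $W_m := (L_m')^{-1} \oplus L_m$, noting that $L_m = L_m'$ iff $W_m = 1_H$, and establish the recurrence $W_m = (\sigma_m h_f(x_m))^{-1} \oplus W_{m-1} \oplus \sigma_m f(x_m)$. The key observation is that if $W_{m-1} = 1_H$ and $x_m \in \Delta_f$, then $W_m = (\sigma_m h_f(x_m))^{-1} \oplus \sigma_m f(x_m) \neq 1_H$, because $f(x_m) \neq h_f(x_m)$. Setting $p_m := \Pr[W_m = 1_H]$ and using that $(x_m, \sigma_m)$ is independent of $W_{m-1}$, this implies $\Pr[W_m = 1_H \mid x_m \in \Delta_f] \leq \Pr[W_{m-1} \neq 1_H] = 1 - p_{m-1}$. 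The complementary case $x_m \notin \Delta_f$ reduces the recurrence to conjugation of $W_{m-1}$, preserving the event $W_m = 1_H \Leftrightarrow W_{m-1} = 1_H$, so altogether
\[
p_m \;\leq\; (1-\eps_f)\,p_{m-1} + \eps_f(1 - p_{m-1}) \;=\; \eps_f + (1-2\eps_f)\,p_{m-1}.
\]
Iterating from $p_0 = 1$ gives $p_k \leq \frac{1 + (1-2\eps_f)^k}{2}$, hence the claimed bound on $\gamma_k$.

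To finish, I would use the algebraic identity $\gamma_k - \eps_f = (1-2\eps_f)\,\gamma_{k-1}$ together with $1 - 2\eps_f > 3/4$ (from $\eps_f < 1/8$) to obtain $\Pr[\text{reject}] \geq \tfrac{3}{4}\gamma_{k-1}$, and case split on $\eps_f$. For $\eps_f \leq 1/(2(k-2))$, the polynomial upper bound $(1-2\eps_f)^{k-1} \leq 1 - 2(k-1)\eps_f + 2(k-1)(k-2)\eps_f^2$---provable by induction on the exponent---gives $\gamma_{k-1} \geq (k-1)\eps_f/2$, whence $\Pr[\text{reject}] \geq 3(k-1)\eps_f/8 \geq (k-3)\eps_f/3$. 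For $\eps_f > 1/(2(k-2))$, the estimate $(1-2\eps_f)^{k-1} \leq (1 - 1/(k-2))^{k-1} \leq 1/e$ yields $\gamma_{k-1} \geq (1-1/e)/2$, and hence $\Pr[\text{reject}] \geq 3(1-1/e)/8 > 1/6$. I expect the main subtlety to be the non-abelian bookkeeping for the recurrence on $W_m$, in particular verifying that the ``collapse'' ``$W_{m-1} = 1_H$ and $x_m \in \Delta_f$ imply $W_m \neq 1_H$'' is valid in any group; once this is in place, the remaining calculations are elementary.
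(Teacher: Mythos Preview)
Your argument is correct and shares the paper's core idea: reduce to $\Pr[\text{reject}] \geq \gamma_k - \eps_f$ and control $p_k = 1-\gamma_k$ via the recurrence $p_k \leq (1-\eps_f)p_{k-1} + \eps_f(1-p_{k-1})$, which is exactly what the paper derives in its inductive claim. The difference is only in how the bound is extracted. The paper pairs a separate base case (for $k \leq 1/\eps_f$, the ``exactly one $x_i$ lands in $\Delta_f$'' event gives $p_k \leq 1 - k\eps_f/e$) with the recurrence used merely as a cap, $p_k \leq \max\{p_{k-1}, 1/2+\eps_f\}$. You instead iterate the same recurrence from $p_0=1$ to the closed form $p_k \leq \tfrac{1+(1-2\eps_f)^k}{2}$ and finish with a Bonferroni-type estimate on $(1-2\eps_f)^{k-1}$. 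Your route is a little tighter and avoids the extra base case; the paper's route makes the two regimes in $k$ more explicit and ties visibly to the ``one point in $\Delta_f$'' intuition from the overview.

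Two small points to tidy up. First, the ``algebraic identity'' $\gamma_k - \eps_f = (1-2\eps_f)\gamma_{k-1}$ holds for the \emph{lower bounds} $\tilde\gamma_m := \tfrac{1-(1-2\eps_f)^m}{2}$, not for the actual $\gamma_m$; the chain you need is $\Pr[\text{reject}] \geq \gamma_k - \eps_f \geq \tilde\gamma_k - \eps_f = (1-2\eps_f)\tilde\gamma_{k-1}$. Second, your case split uses $1/(2(k-2))$, so you should explicitly dispose of $k \leq 3$ (where the target bound is $\leq 0$ and the statement is vacuous).
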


To  prove \Cref{lem:soundeness-signs-test-small-eps-general-H}, we
introduce  $\FixedSignsTest{k}$ (\Cref{alg: fixed signs test}) which uses a fixed sequence of $k$ signs $\overline{\sigma} = (\sigma_1, \sigma_2, \dots, \sigma_k)$, in contrast to the uniform and independent signs from ${+,-}$ used in \Cref{alg: random signs test}.

\begin{algorithm}
\caption{$\FixedSignsTest{k}(\overline{\sigma})$}
\label{alg: fixed signs test}
\begin{algorithmic}[1]
\Ensure $k \in \N$, sign sequence $\overline{\sigma} \in \set{+,-}^k$
\Require Query access to a function $f: G \to H$ where $(G, +)$ and $(H, \oplus)$ are finite groups.
\State Draw $k$ independent and uniformly random elements
$x_1, x_2, \dots, x_{k}$ from $G$.
\State Query $f(x_1), f(x_2), \dots, f(x_{k})$ and
$f(a)$, where  $a \gets \sum_{i\in[k]} \sigma_i x_i.$
\State \textbf{Accept} if $\bigoplus_{i \in [k]} \sigma_i f(x_i) = f(a)$; otherwise, \textbf{reject}.
\end{algorithmic}
\end{algorithm}

Note that \Cref{alg: fixed signs test} with $\overline{\sigma}$ set to $+^k$ recovers the tester of
\cite{KalemajRV23}.

\begin{lemma}
\label{lem:soundeness-fixed-signs-test-small-eps-general-H}
    For all $k\in\N$, all sign sequences $\overline{\sigma} \in \set{+,-}^k$, finite groups $(G,+)$ and $(H,\oplus)$, and functions $f: G \to H$ 
    with $\eps_f< \frac 1 8$,
        \begin{equation}
    \label{eq:test soundness 3}
         \Pr{[\FixedSignsTest{k}(\overline{\sigma}) \text{ rejects}]} \geq \min\set{\frac{(k-3)\cdot \eps_f
                  }{3}, \frac{1}{6}}.
    \end{equation}
\end{lemma}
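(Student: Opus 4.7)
My approach is to mimic the elementary argument from \Cref{sec:technical-overview}, adapted to a fixed sign sequence $\overline\sigma$ and to an arbitrary codomain $H$. First I would reduce the test to one over i.i.d.\ uniform samples. Let $h_f \in \mathsf{HOM}(G,H)$ be a closest homomorphism to $f$ and $\Delta_f = \set{x \in G : f(x) \ne h_f(x)}$, so $\card{\Delta_f}/\card{G} = \eps_f$. Substitute $X_i := \sigma_i x_i$; since $x \mapsto \sigma_i x$ is a bijection on $G$, the $X_i$'s remain i.i.d.\ uniform on $G$, and so is the queried sum $a = \sum_{i \in [k]} X_i$. Defining $\tilde f_i(y) := \sigma_i f(\sigma_i y)$, the test rejects iff $\bigoplus_{i \in [k]} \tilde f_i(X_i) \ne f(a)$. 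Because $h_f$ is a homomorphism, $\sigma_i h_f(\sigma_i y) = h_f(y)$ for all $y$, so $\bigoplus_i h_f(X_i) = h_f(a)$ holds deterministically, and $\Pr[\tilde f_i(X_i) \ne h_f(X_i)] = \eps_f$ for each $i$.

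Second, I would decouple the ``samples side'' from the ``sum side.'' The rejection event contains the intersection $\set{\bigoplus_i \tilde f_i(X_i) \ne \bigoplus_i h_f(X_i)} \cap \set{f(a) = h_f(a)}$, so by $\Pr[A \cap B] \ge \Pr[A] - \Pr[\bar B]$ and the uniformity of $a$,
\[
\Pr[\text{reject}] \;\ge\; \gamma_k - \eps_f, \qquad \gamma_m := \Pr\!\Big[\bigoplus_{i \in [m]} \tilde f_i(X_i) \ne \bigoplus_{i \in [m]} h_f(X_i)\Big].
\]
The task reduces to lower-bounding $\gamma_k$, which I plan to do by a recurrence on $p_m := 1 - \gamma_m$. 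Conditioning on $X_m$: with probability $1-\eps_f$, $\tilde f_m(X_m) = h_f(X_m)$, and cancelling this common factor from both sides of the length-$m$ equation reduces it to the length-$(m-1)$ version, contributing $(1-\eps_f)\,p_{m-1}$ to $p_m$. With probability $\eps_f$, the two $m$-th terms differ, and the equation forces the length-$(m-1)$ ``discrepancy'' $S_{m-1} := \bigoplus_{i < m} \tilde f_i(X_i) \oplus \big(\bigoplus_{i < m} h_f(X_i)\big)^{-1}$ to take a specific nonzero value $c \in H$; since $\Pr[S_{m-1} = e_H] = p_{m-1}$, the trivial bound $\max_{c \ne e_H}\Pr[S_{m-1} = c] \le 1 - p_{m-1}$ upper-bounds this contribution by $\eps_f(1 - p_{m-1})$. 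Iterating $p_m \le (1-2\eps_f)\,p_{m-1} + \eps_f$ from $p_0 = 1$ yields $\gamma_k \ge (1 - (1-2\eps_f)^k)/2$.

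Finally, combining with Step 2 gives $\Pr[\text{reject}] \ge (1 - (1-2\eps_f)^k)/2 - \eps_f$, and using $(1-2\eps_f)^k \le \exp(-2k\eps_f) \le 1 - 2k\eps_f + 2k^2\eps_f^2$ for $2k\eps_f \le 1$, routine constant-juggling (leveraging $\eps_f < 1/8$) shows this is at least $(k-3)\eps_f/3$ when $k\eps_f$ is small and at least $1/6$ when $k\eps_f$ is larger, matching the stated bound. The main technical obstacle I anticipate is the non-abelian bookkeeping in the recurrence step: the discrepancy $S_{m-1}$ must be written using one-sided inverses in the correct order, and one must confirm that conditioning on $X_m$ leaves the joint distribution of $X_1, \dots, X_{m-1}$ i.i.d.\ uniform. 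Once this is spelled out, the pigeonhole bound on $\max_{c \ne e_H}\Pr[S_{m-1} = c]$ holds uniformly over all finite groups $H$ and drives the entire argument.
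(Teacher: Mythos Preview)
Your proposal is correct and follows essentially the same route as the paper: the decoupling step (reject $\ge \gamma_k - \eps_f$) and the recurrence $p_m \le (1-\eps_f)p_{m-1} + \eps_f(1-p_{m-1}) = (1-2\eps_f)p_{m-1} + \eps_f$ are exactly what the paper derives (its Claim~3.5). The one noteworthy difference is how the final numerical bound is extracted. The paper splits into two sub-arguments: a separate base case (their Claim~3.4), bounding $p_k \le 1 - k\eps_f/e$ for $k \le 1/\eps_f$ via the probability that \emph{exactly one} $x_i$ lands in $\Delta_f$, and then uses the recurrence only to show $p_k$ cannot rise above $\max\{p_{k_0}, 1/2+\eps_f\}$ for $k > k_0 = \lfloor 1/\eps_f\rfloor$. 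You instead iterate the recurrence directly from $p_0 = 1$ to the closed form $p_k \le \tfrac12 + \tfrac12(1-2\eps_f)^k$ and finish with a Taylor estimate. Your version is a bit more streamlined (one mechanism instead of two) and yields the same bound; the paper's split has the mild advantage that the base-case constant $1/e$ is transparent and sidesteps any Taylor bookkeeping. Your flagged non-abelian issue is real but minor: defining the discrepancy as $T_{m-1} := \big(\bigoplus_{i<m} h_f(X_i)\big)^{-1} \oplus \bigoplus_{i<m} \tilde f_i(X_i)$ (left-multiply rather than the right-multiply you wrote) makes the reduction ``$T_{m-1}$ must hit a fixed nonzero target'' go through verbatim, and the paper handles this the same way.
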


We prove \Cref{lem:soundeness-fixed-signs-test-small-eps-general-H} after showing that it implies \Cref{lem:soundeness-signs-test-small-eps-general-H}.

\begin{proof}[Proof of \Cref{lem:soundeness-signs-test-small-eps-general-H} assuming \Cref{lem:soundeness-fixed-signs-test-small-eps-general-H}]
$\SignsTest{k}$ (\Cref{alg: random signs test}) chooses the signs $\overline{\sigma}$ independently and uniformly at random from $\set{+,-}^k$. Thus, the rejection probability of $\SignsTest{k}$ is the expected rejection probability of $\FixedSignsTest{k}(\overline{\sigma})$ for uniformly random $\overline{\sigma}$.
\end{proof}

\begin{proof}[Proof of \Cref{lem:soundeness-fixed-signs-test-small-eps-general-H}]
    Fix a sign sequence $\overline{\sigma}$ in $\set{+,-}^k$. Let $g:G\to H$ be a closest homomorphism 
        to $f$, i.e., $\Pr_{x\in G}{[f(x) \neq g(x)]}= \eps_f$. 
    Then     $\FixedSignsTest{k}(\overline{\sigma})$ always accepts $g$, since for all all $x_1, x_2, \dots, x_k \in G$,
    \begin{equation}
    \label{eq:g always passes test}
         \bigoplus_{i\in[k]}\sigma_ig(x_i) = g\Big(\sum_{i\in[k]} \sigma_i x_i\Big).
     \end{equation}
        The core of the analysis is to compare \Cref{alg: fixed signs test} behaves         on $f $ and $g$ for each sample.                 One way \Cref{alg: fixed signs test} can reject $f$ is 
                by selecting $x_1, \dots, x_k$ such that the right-hand side (RHS) of \eqref{eq:g always passes test} remains unchanged when replacing $g$ with $f$, but the left-hand side (LHS) does not.
     Therefore,
    \begin{align}
    \label{eq:comparison-of-g-and-f}
        \Pr{[\FixedSignsTest{k}(\overline{\sigma}) \text{ rejects}]} \geq \Pr_{x_1,\dots,x_k\in G}{\Big[\bigoplus_{i\in[k]}\sigma_if(x_i) \neq \bigoplus_{i\in[k]}\sigma_ig(x_i)\ \bigwedge\  f\Big(\sum_{i\in[k]} \sigma_i x_i\Big) = g\Big(\sum_{i\in[k]} \sigma_i x_i\Big)\Big]}.
    \end{align}
        We consider the probabilities of the  complements of the two events---on the LHS and the RHS of \eqref{eq:comparison-of-g-and-f}---obtaining the complement of the event on the RHS by applying the De Morgan's law:
        \begin{align}
        \Pr&{[\FixedSignsTest{k}(\overline{\sigma}) \text{ accepts}]}
        \leq 
        \Pr_{x_1,\dots ,x_k\in G}{\Big[\bigoplus_{i\in[k]}\sigma_if(x_i) = \bigoplus_{i\in[k]}\sigma_ig(x_i)\ \bigvee\ f\Big(\sum_{i\in[k]} \sigma_ix_i\Big) \neq g\Big(\sum_{i\in[k]} \sigma_ix_i\Big)\Big]}\nonumber\\
        &\leq \Pr_{x_1,\dots ,x_k\in G}{\Big[\bigoplus_{i\in[k]}\sigma_if(x_i) = \bigoplus_{i\in[k]}\sigma_ig(x_i)\Big]} + \Pr_{x_1,\dots ,x_k\in G}{\Big[f\Big(\sum_{i\in[k]} \sigma_ix_i\Big) \neq g\Big(\sum_{i\in[k]} \sigma_ix_i\Big)\Big]} ,\label{eq:two-terms}
    \end{align}
    where \eqref{eq:two-terms} holds by a union bound.
        The second term of \eqref{eq:two-terms} is exactly  $\eps_f$, as $\sum_{i \in [k]} \sigma_i x_i$ is a uniformly random element from $G$, as shown in \Cref{clm: sum is random}. Let $p_k$ denote the first term of \eqref{eq:two-terms} where the subscript $k$ indicates the number of elements the probability is taken over. 
        
        We next use induction on $k$ to prove the following: Let $k_0 = \lfloor 1/\eps_f \rfloor$, then
    \[
    p_k \le \begin{cases}
        1 - k \eps_f / e & \text{for } k \in [k_0]; \\
        \max\{p_{k_0}, 1/2 + \eps_f\} & \text{for } k > k_0.
    \end{cases}
    \]

    \begin{claim}[Base Case]
    \label{clm: rejection based on one point}
        For all $k \le 1/\eps_f$, we have $p_k \le 1 - k \eps_f / e$.
    \end{claim}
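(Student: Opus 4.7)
The plan is to lower bound $q_k := 1 - p_k$ by the probability of the simpler event that \emph{exactly one} of the $k$ sampled points lands in the disagreement set $\Delta_f := \{x \in G : f(x) \neq g(x)\}$, which has density $\eps_f$ in $G$. Since $x_1, \ldots, x_k$ are i.i.d.\ uniform, this probability equals $k\eps_f(1-\eps_f)^{k-1}$, and a short calculus estimate shows $(1-\eps_f)^{k-1} \geq 1/e$ throughout the regime $k \leq 1/\eps_f$, which yields $q_k \geq k\eps_f/e$ as desired.

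The heart of the argument is showing that the ``exactly one'' event implies the inequality $\bigoplus_{j\in[k]} \sigma_j f(x_j) \neq \bigoplus_{j\in[k]} \sigma_j g(x_j)$. Let $i$ be the unique index with $x_i \in \Delta_f$. For every $j \neq i$ we have $f(x_j) = g(x_j)$, so both sides of the equality in question share the prefix $\bigoplus_{j<i} \sigma_j f(x_j)$ and the suffix $\bigoplus_{j>i} \sigma_j f(x_j)$ with their $g$-counterparts. Applying left- and right-cancellation in $H$---valid even in the non-Abelian setting, since every element has an inverse---reduces the putative equality to $\sigma_i f(x_i) = \sigma_i g(x_i)$, and hence to $f(x_i) = g(x_i)$, contradicting $x_i \in \Delta_f$. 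I expect this step to be the main subtlety, since one must respect the fixed ordering of the group operation; however, because we are only cancelling matching segments in the \emph{same positions} of both sides, no reordering is required and commutativity plays no role.

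For the final probability estimate I would use the elementary inequality $\ln(1-x) \geq -x/(1-x)$ (valid for $x \in [0,1)$), which gives $(k-1)\ln(1-\eps_f) \geq -(k-1)\eps_f/(1-\eps_f)$. The hypothesis $k \leq 1/\eps_f$ is exactly what forces the right-hand side to be at least $-1$, so $(1-\eps_f)^{k-1} \geq 1/e$. Chaining the bounds yields $q_k \geq k\eps_f(1-\eps_f)^{k-1} \geq k\eps_f/e$, which is equivalent to the claim $p_k \leq 1 - k\eps_f/e$. Notably, no induction is needed for this base case: it falls out directly from the single-event analysis above, and the inductive recurrence on $p_k$ only enters the argument when handling the large-$k$ regime $k > k_0$.
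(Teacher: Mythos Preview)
Your proposal is correct and follows essentially the same approach as the paper: lower bound $1-p_k$ by the probability that exactly one $x_i$ lands in $\Delta_f$, compute this as $k\eps_f(1-\eps_f)^{k-1}$, and then use $(1-\eps_f)^{k-1}\ge 1/e$ in the regime $k\le 1/\eps_f$. Your write-up is somewhat more detailed than the paper's—spelling out the left/right cancellation in $H$ and deriving the $1/e$ bound via $\ln(1-x)\ge -x/(1-x)$ rather than invoking $(1-\eps_f)^{1/\eps_f-1}\ge 1/e$ directly—but the structure is identical.
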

    \begin{proof}
        We analyze the probability of the complement:
        \begin{align*}
            1 - p_k &= \Pr_{x_1, \dots, x_k \in G} \left[ \bigoplus_{i\in[k]} \sigma_i f(x_i) \ne \bigoplus_{i\in[k]} \sigma_i g(x_i) \right] 
            \\&
            \ge \Pr \big[\sigma_i f(x_i) \ne \sigma_i g(x_i) \text{ for exactly one } i \in [k] \big] 
                        = k \eps_f (1 - \eps_f)^{k - 1} \ge \frac{k \eps_f}{e},
        \end{align*}
        where the second to last inequality holds because $\sigma_i f(x_i) \neq \sigma_ig(x_i)$ iff $f(x_i) \neq g(x_i)$ and the last inequality holds since $(1 - \eps_f)^{k - 1} \ge  (1 - \eps_f)^{1/\eps_f - 1}$ for $k \le 1/\eps_f$ and $(1 - \eps_f)^{1/\eps_f - 1} \geq 1/e$ for $\eps_f \in (0, 1]$.
    \end{proof}

   \begin{claim}[Inductive Step]
   \label{clm:p_k cannot increase too much}
   For all $k>1$, we have  $p_k\le\max\{p_{k-1},1/2+\eps_f\}$.
    \end{claim}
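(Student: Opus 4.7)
The plan is to isolate the last index $k$, writing the event inside $p_k$ as an equality between a random variable $E$ that depends only on the first $k-1$ summands and a random variable $F_k$ that depends only on the $k$-th summand. Define $S_f^{(j)} := \bigoplus_{i\in[j]} \sigma_i f(x_i)$ and $S_g^{(j)} := \bigoplus_{i\in[j]} \sigma_i g(x_i)$, so $p_k = \Pr[S_f^{(k)} = S_g^{(k)}]$. Starting from $S_f^{(k-1)} \oplus \sigma_k f(x_k) = S_g^{(k-1)} \oplus \sigma_k g(x_k)$, I right-multiply by $-\sigma_k g(x_k)$ and then left-multiply by $-S_f^{(k-1)}$ (both valid group manipulations, requiring no commutativity) to obtain the equivalent form
\[
    \underbrace{\sigma_k f(x_k) \oplus -\sigma_k g(x_k)}_{=:\,F_k} \;=\; \underbrace{-S_f^{(k-1)} \oplus S_g^{(k-1)}}_{=:\,E}.
\]
Since $\overline{\sigma}$ is fixed in \Cref{alg: fixed signs test} and $x_1,\ldots,x_k$ are iid, $E$ and $F_k$ are determined by disjoint coordinates and are therefore independent.

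Next, I identify the marginal ``aligned'' probabilities. The event $E = 0_H$ is exactly $S_f^{(k-1)} = S_g^{(k-1)}$, so $\Pr[E = 0_H] = p_{k-1}$. The event $F_k = 0_H$ is exactly $\sigma_k f(x_k) = \sigma_k g(x_k)$, which by cancellation is equivalent to $f(x_k) = g(x_k)$, so $\Pr[F_k = 0_H] = 1 - \eps_f$. Splitting $p_k = \Pr[E = F_k]$ according to the common value and using independence gives
\[
    p_k \;=\; p_{k-1}(1-\eps_f) \;+\; \sum_{c \neq 0_H} \Pr[E = c]\cdot \Pr[F_k = c].
\]

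To bound the residual sum I use the crude estimate $\max_{c\neq 0_H} \Pr[E=c] \le 1 - p_{k-1}$ (the total non-zero mass of $E$) together with $\sum_{c\neq 0_H} \Pr[F_k = c] = \eps_f$, which yields $p_k \le p_{k-1}(1-2\eps_f) + \eps_f$. A case split then closes the proof: if $p_{k-1} \ge 1/2$, then $\eps_f(1-2p_{k-1}) \le 0$ and hence $p_k \le p_{k-1}$; if $p_{k-1} < 1/2$, then $p_k < \tfrac12(1-2\eps_f) + \eps_f = \tfrac12 \le \tfrac12 + \eps_f$. Either way, $p_k \le \max\{p_{k-1},\,\tfrac12 + \eps_f\}$. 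The only delicate point is that $H$ may be non-abelian, so $E$ and $F_k$ have to be defined from the correct side and each cancellation has to be applied from the correct side; splitting at the \emph{last} index (rather than an arbitrary one) is what keeps the algebra clean without requiring commutativity.
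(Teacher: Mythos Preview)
Your proof is correct and takes essentially the same approach as the paper: isolate the last index, use independence of $x_k$ from $x_1,\dots,x_{k-1}$, and arrive at the bound $p_k \le p_{k-1}(1-\eps_f) + (1-p_{k-1})\eps_f = p_{k-1}(1-2\eps_f)+\eps_f$, followed by the same case split on whether $p_{k-1}\ge 1/2$. The only cosmetic difference is that you package the argument as $p_k=\Pr[E=F_k]$ for independent $E,F_k$ and bound the off-identity collision sum by $(1-p_{k-1})\eps_f$, whereas the paper writes the event as a union of two disjoint events and bounds the second via a conditional probability; the algebra and the final inequality are identical.
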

\begin{proof} We represent the event that defines $p_k$ as a union of two disjoint events:
    \begin{align*}
        p_{k}&=\Pr_{x_1,\dots ,x_k\in G}{\Big[ \bigoplus_{i\in[k]}\sigma_i f(x_i) = \bigoplus_{i\in[k]}\sigma_i g(x_i) \Big]}\\
        &= \Pr_{x_1,\dots ,x_k\in G}{\Big[ \bigoplus_{i \in [k-1]}\sigma_i f(x_i) = \bigoplus_{i\in[k-1]}\sigma_i g(x_i) \bigwedge \sigma_k f(x_k)= \sigma_k g(x_k) \Big]}\\
        &
        \qquad+\Pr_{x_1,\dots ,x_k\in G}{\Big[ \bigoplus_{i\in[k-1]}\sigma_i f(x_i) \neq \bigoplus_{i\in[k-1]}\sigma_i g(x_i)\bigwedge \sigma_k f(x_k) = \bigoplus_{i \in [k-1]} (-\sigma_{k-i}f(x_{k-i})) \oplus \bigoplus _{i\in[k]}\sigma_i g(x_i) \Big]}.
        \end{align*}
        The first summand is         $p_{k-1}(1-\eps_f)$ by the independence of $x_i$'s and the fact that $\sigma_k f(x_k) = \sigma_k g(x_k)$ iff $f(x_k) = g(x_k)$.
        For the second summand, we have: 
        \begin{align}
            &\Pr_{x_1,\dots ,x_k\in G}{\Big[\bigoplus_{i\in[k-1]}\sigma_i f(x_i) \neq \bigoplus_{i\in[k-1]}\sigma_i g(x_i)\bigwedge \sigma_k f(x_k)= \bigoplus_{i \in [k-1]} (-\sigma_{k-i}f(x_{k-i})) \oplus \bigoplus_{i\in[k]}\sigma_i g(x_i) \Big]}\nonumber\\
            &\quad =\Pr{\Big[ \bigoplus_{i\in[k-1]}\sigma_i f(x_i) \neq \bigoplus_{i\in[k-1]}\sigma_i g(x_i) \Big]} \nonumber\\
            &\quad\quad\times \Pr{\Big[ \sigma_k f(x_k) = \bigoplus_{i \in [k-1]} (-\sigma_{k-i}f(x_{k-i}))  \oplus \bigoplus_{i\in[k]}\sigma_i g(x_i)|\bigoplus_{i\in[k-1]}\sigma_i f(x_i) 
            \neq \bigoplus_{i\in[k-1]}\sigma_i g(x_i) \Big]} \label{eq:small-e_f-summand}\\
            &\quad \le (1-p_{k-1})\eps_f,\label{eq:small-e_f-final-inequality}
        \end{align}
        where the inequality in \eqref{eq:small-e_f-final-inequality} holds, since $\displaystyle \bigoplus_{i\in[k-1]}\sigma_i f(x_i) \neq \bigoplus_{i\in[k-1]}\sigma_i g(x_i)$ implies $\displaystyle\bigoplus_{i \in [k-1]} (-\sigma_{k-i}f(x_{k-i}))\oplus \bigoplus_{i\in[k]}\sigma_i g(x_i)\neq \sigma_k g(x_k)$, and thus \eqref{eq:small-e_f-summand} is at most $\Pr[\sigma_k f(x_k)\neq \sigma_k g(x_k)]=\eps_f$.
        Combining all the above, we conclude that 
        \[p_k\leq p_{k-1}(1-\eps_f)+(1-p_{k-1})\eps_f=p_{k-1}+\eps_f(1-2p_{k-1})\leq \max\{p_{k-1},1/2+\eps_f\}.\]
        The first bound in the last inequality holds when $p_{k-1}\ge 1/2$ and the second otherwise.
        \end{proof}

Note that $k_0=\lfloor 1/\eps_f\rfloor> 1/\eps_f-1$. Using \Cref{clm: rejection based on one point} we have $p_{k_0}\leq 1-k_0\eps_f/e< 1-(1-\eps_f)/e$. If $\eps_f \leq 1/8$, then $p_{k_0} \leq 0.7$. Using \Cref{clm:p_k cannot increase too much} we have for all $k\geq k_0$, $p_k\le \max\{p_{k_0},1/2+\eps_f\} \leq 0.7$ for $\eps_f \le 1/8$.  We complete the proof by noting that $\Pr{[\FixedSignsTest{k}(\overline{\sigma}) \text{ accepts}]} \leq p_k + \eps_f$ from \eqref{eq:two-terms}. Thus, $\Pr{[\FixedSignsTest{k}(\overline{\sigma}) \text{ rejects}]}$  $\ge (1-p_k) -\eps_f \geq \min\{k\eps_f/e-\eps_f,0.3-\eps_f\}\ge\min\{(k-3)\eps_f/3,1/6\}$.
\end{proof}

\subsection{Soundness analysis for large \texorpdfstring{$\eps_f$}{epsilon-f}}\label{sec:soundness-of-signs-tester-Large-distance}

In this section, we show that \SignsTestInText used with even $k$ provides soundness proportional to~$\eps_f$. Specifically, we prove the following lemma.

\begin{lemma}
\label{lem:large distance soundness guarantee}
    For all $k\in\N$, finite groups $(G,+)$ and $(H,\oplus)$,  and     functions $f:G \to H$, 
    \begin{equation}
         \Pr{[\SignsTest{2k} \text{ rejects}]} \geq \min\set{ \frac{\eps_f}{2}, \frac{1}{10}}.
    \end{equation}
\end{lemma}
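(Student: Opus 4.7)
The plan is to adapt the classical self-correction method of Ben-Or, Coppersmith, Luby, and Rubinfeld~\cite{Ben-orCLR08} to the signed-tuple setting, following the blueprint laid out in the technical overview. Set $\mu := \Pr[\SignsTest{2k}\text{ rejects on }f]$. If $\mu \geq 1/10$ we are done, so assume $\mu < 1/10$. The goal is to produce a homomorphism $g : G \to H$ with $dist(f,g) \leq 2\mu$, which forces $\eps_f \leq 2\mu$, i.e., $\mu \geq \eps_f/2$.

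Define the \emph{corrector} $g$ by a plurality vote: for each $a \in G$,
\[
g(a) \ := \ \arg\max_{h \in H}\ \Pr_{\vec{\sigma}, \vec{x}}\Bigl[\textstyle\bigoplus_{i\in[2k]} \sigma_i f(x_i) = h \ \Big|\ \sum_{i\in[2k]} \sigma_i x_i = a\Bigr],
\]
and write $p_h(a)$ for the conditional probabilities above. The argument proceeds in three steps.

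\emph{Step 1 (strong majority).} I would first show that for every $a\in G$, $p_{g(a)}(a) \geq 1 - 2\mu$. This is the step worked out in the technical overview: I would bound the conditional collision probability of two independent signed $2k$-tuples $(\vec{\sigma},\vec{x})$ and $(\vec{\tau},\vec{y})$ summing to the same $a$, using the explicit rearrangement $S_{\sigma,x} = S_{\tau,y} \iff S_1 = S_2$ displayed in the overview, where each of $S_1, S_2$ has the marginal distribution of the $\SignsTest{2k}$ sum on a common random element $a'$. A union bound against the two events ``$S_j \neq f(a')$'' caps the disagreement by $2\mu$. The inequality $\sum_h p_h(a)^2 \leq p_{g(a)}(a) \cdot \sum_h p_h(a) = p_{g(a)}(a)$ then upgrades this collision bound into the desired plurality bound.

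\emph{Step 2 (closeness).} For each $a$ define $\nu(a) := \Pr[\bigoplus \sigma_i f(x_i) \neq f(a) \mid \sum \sigma_i x_i = a]$. Since $\sum_i \sigma_i x_i$ is uniform on $G$, $\mathbb{E}_a[\nu(a)] = \mu$. Whenever $f(a) \neq g(a)$, the weight of $f(a)$ in the vote satisfies $1-\nu(a) \leq 1 - p_{g(a)}(a) \leq 2\mu$, so $\nu(a) \geq 1-2\mu$. Markov's inequality then yields $dist(f,g) \leq \mu/(1-2\mu) \leq 2\mu$ under the assumption $\mu < 1/10$.

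\emph{Step 3 (homomorphism).} For each pair $a, b\in G$ I would use the probabilistic method to show $g(a+b) = g(a) \oplus g(b)$: sample a random signed $2k$-tuple certifying the plurality $g(a)$, one certifying $g(b)$, and one certifying $g(a+b)$, and combine them into an algebraic identity that witnesses $g(a+b) = g(a)\oplus g(b)$ whenever all three certifications hold. By Step~1 and a union bound, the three certifications succeed simultaneously except with probability at most $6\mu < 1$, so the identity must hold deterministically.

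The main obstacle is Step~1. In a non-Abelian group, ``mixing'' $(\vec{\sigma},\vec{x})$ and $(\vec{\tau},\vec{y})$ so that both resulting $2k$-length sums $S_1$ and $S_2$ inherit the exact distribution of a random signs test on the same element $a'$ requires careful bookkeeping of the order of summation together with the signs. The role of the random signs is precisely to enable this symmetric rearrangement: without them the reversed subsums appearing in $S_1$ and $S_2$ would not match the distribution of the original test, as emphasized in the overview. Once Step~1 is set up correctly, Steps~2 and~3 follow from standard applications of Markov's inequality and the probabilistic method.
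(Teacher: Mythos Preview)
Your plan follows the paper's approach exactly: define the corrector by plurality vote, then establish strong majority (the paper's Lemma~\ref{lem:corrector_bounding_eta}), closeness (Lemma~\ref{lem:corrector_bounding_delta}), and that $g$ is a homomorphism (Lemma~\ref{lem:corrector_g_homomorphisms}). Steps~1 and~2 are fine; your Markov computation in Step~2 is a minor variant of the paper's direct averaging argument but lands on the same $2\mu$ bound.

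Step~3 as written has a gap. Three \emph{independently} sampled tuples summing to $a$, $b$, $a{+}b$ produce three votes involving $6k$ unrelated evaluations of $f$; the event ``all three hit their plurality'' yields three equations with no relation among their right-hand sides, so there is nothing to ``combine into an algebraic identity''. The paper fixes this by \emph{coupling} the three tuples: $(\overline{\sigma},\overline{x})$ is uniform in $\fixset{a}$; the left half of $(\overline{\tau},\overline{y})$ is forced to be the order-reversed, sign-flipped copy of the right half of $(\overline{\sigma},\overline{x})$, so that $\bigoplus_{i>k}\sigma_i f(x_i)\oplus\bigoplus_{i\le k}\tau_i f(y_i)=e_H$ holds deterministically by telescoping cancellation; and $(\overline{\phi},\overline{z})$ is \emph{defined} as the left half of $(\overline{\sigma},\overline{x})$ concatenated with the right half of $(\overline{\tau},\overline{y})$. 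One then checks, via Corollary~\ref{cor:alternative_draw}, that each of the three tuples is marginally uniform in its respective $\fixset{\cdot}$, so your Step~1 bound applies to each, and now the three plurality events together force $g(a)\oplus g(b)=g(a{+}b)$. This coupling is in the same spirit as the $S_1,S_2$ rearrangement you correctly describe for Step~1, but it should not be dismissed as routine: arranging the order and signs so that the marginals stay uniform \emph{and} the middle block cancels through $f$ is again precisely where the even length $2k$ and the random signs are used.
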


\begin{proof}

\Cref{lem:large distance soundness guarantee} immediately follows from the following lemma.

\begin{lemma}\label{lem:2mu-closeness}
\label{lem: low rejection prob implies f is close to hom}
    Let $\mu :=\Pr[\text{$\SignsTest{2k}$ rejects  
    $f$}].$ 
            If $\mu < \frac 1 {10}$, then $f$ is $2\mu$-close to a homomorphism.
\end{lemma}

Before proving \Cref{lem:2mu-closeness}, we explain why it implies \Cref{lem:large distance soundness guarantee}.
By \Cref{lem: low rejection prob implies f is close to hom},  if $\mu < 1/10$ then $f$ is $2\mu$-close to some homomorphism. Since $f$ has distance $\eps_f$ to $\mathsf{HOM}(G,H)$, we get $2\mu \geq \eps_f \implies \mu \geq \frac{\eps_f}{2}$. Thus, $\mu \geq \min \set{\frac{\eps_f}{2}, \frac{1}{10}}$. The rest of the proof is dedicated to proving \Cref{lem: low rejection prob implies f is close to hom}.

\paragraph{First observations.}
In the test, the distribution of the element $a$, as specified in \Cref{alg: fixed signs test}, Line~\ref{step:query}, is uniform over all elements of $G$, since for each realization
of the signs $\sigma_1, \dots , \sigma_{2k}$ and $x_1,\dots,x_{2k-1}$, there is a single choice of $x_{2k}$ leading to each $a$.
Define $\fixset{a}$ to be the set of realizations for $\sigma$'s and $x$'s leading to $a$,
\[
    \textstyle{\fixset{a} := \big\{(\sigma_1,\dots,\sigma_{2k}, x_1,\dots,x_{2k}) : \sum_{i \in [2k]} \sigma_i x_i = a\big\}} .
\]
We use $\tuple{\sigma}{x}$ as a short hand for $
(\sigma_1,\dots,\sigma_{2k}, x_1,\dots,x_{2k})$. 
A random tuple from $\fixset{a}$ is $k$-wise  independent in the following sense.
\begin{claim}
    \label{claim:half_the_elements_are_random}
    For all $a \in G$, all sets of indices $S \subseteq [2k]$ of size $\card{S} = k$, let  $(\overline{\sigma},\overline{x})$ be a uniformly random tuple from $\fixset{a}$. Then 
    \begin{enumerate}
        \item The vector $(\sigma_i, x_i)_{i\in S}$ is a uniformly distributed vector of $k$ signs and $k$ elements from $G$.
        \item The partial sum $\sum_{i\in S} \sigma_i x_i$ is a uniformly distributed element in $G$.
    \end{enumerate}    
\end{claim}
\begin{proof}
    {\bf Part 1.} Consider a tuple $(\sigma_i, x_i)_{i\in[2k]}$, not necessarily in $\fixset{a}$, and fix a partial assignment that gives values only to $(\sigma_i, x_i)_{i\in S}$. Now choose an index $j\in [2k]\setminus S$ and let $(\sigma_j, x_j)$ be chosen uniformly at random. Fill up the remaining indices, $(\sigma_i, x_i)$ for $i\in [2k] \setminus (S \cup \{j\})$ with arbitrary values. Since $(\sigma_j, x_j)$ is chosen uniformly at random, the sum $\sum_{i=1}^{j} \sigma_i x_i$ is a uniformly random element in $G$. Also, for any group $G$, the function $x \to x + g$ is a bijection from $G$ to $G$ for all $g\in G$ (the same holds true for $g+x, x-g, -g+x$). Thus, the rest of additions and subtractions (for $i = j+1,\dots, 2k$) are all bijections and therefore the entire sum $\sum_{i \in [2k]} \sigma_i x_i$ is also a uniformly random element of $G$. This means that exactly $1/\card{G}$ of the tuples with a particular partial assignment to $(\sigma_i, x_i)_{i\in S}$ are in fact in $\fixset{a}$. The same argument applies for any partial assignment of $(\sigma_i, x_i)_{i\in S}$, so each such partial assignment produces the exact same number of tuples in $\fixset{a}$.

    {\bf Part 2.} Consider the set $F_z$ which consists of all assignments to $(\sigma_i,x_i)_{i\in S}$ for which  $\sum_{i\in S} \sigma_i x_i = z$. Using similar arguments as in the proof of part 1, each set $F_z$ has the same size, and the same number of ways to complete the assignment to a tuple in $\fixset{a}$. Therefore, the number of tuples in $\fixset{a}$ for which $\sum_{i\in S} \sigma_i x_i = z$ is the same for each $z$.
\end{proof}

We consider tuples whose signed sums over either their left or right halves equal a fixed value.
Define
\begin{talign*}
     \leftside{z} := \big\{\tuple{\sigma}{x} : \sum_{i\in[k]} \sigma_i x_i = z \big\}\,, && \rightside{z} := \big\{\tuple{\sigma}{x} : \sum_{i \in [k]} \sigma_{k+i} x_{k+i} = z\big\} \,.
\end{talign*}

\begin{corollary}
    \label{cor:alternative_draw}
    Consider the sets $S_L = [k]$ and $S_R = [2k] \setminus [k]$.
    The following process produces a uniformly random tuple in $\fixset{a}$: 
    \begin{enumerate}
        \item Draw a uniformly random element $z\in G$.
        
        \item Draw a uniformly random tuple $(\overline{\sigma}^L, \overline{x}^L)$ from $\leftside{a - z}$ and take its ``left'' part, $(\overline{\sigma}^L, \overline{x}^L)_{i \in S_{L}}$ 
                
        \item Draw a uniformly random tuple $(\overline{\sigma}^R, \overline{x}^R)$ from $\rightside{z}$ and take its ``right'' part, $(\overline{\sigma}^R, \overline{x}^R)_{i \in S_R}$
                
        \item Return the tuple $(\overline{\sigma}, \overline{x})$ where $(\overline{\sigma}, \overline{x})_{i \in S_L} = (\overline{\sigma}^L, \overline{x}^L)_{i \in S_{L}}$ and $(\overline{\sigma}, \overline{x})_{i \in S_R} = (\overline{\sigma}^R, \overline{x}^R)_{i \in S_R}$ 
    \end{enumerate} 
\end{corollary}

\begin{proof}
    
        By the second part of \Cref{claim:half_the_elements_are_random}, the number of tuples $(\sigma_1,\dots, \sigma_{2k}, x_1, \dots, x_{2k}) \in \fixset{a}$ such that their right part is in $\rightside{z}$ (that is, $\sum_{i \in [k]} \sigma_{k+i} x_{k+i} = z$) is the same for each $z$.
    The same holds for the $\leftside{a - z}$.
    Since each tuple in $\fixset{a}$ has a unique such pair $(a - z, z)$, the process gives an equal probability for each tuple in $\fixset{a}$, concluding the proof. 
\end{proof}

\begin{definition}[Vote of $(\overline{\sigma}, \overline{x})$ for the value of $a$]
\label{def: vote-for-corrrector}
For all $a \in G$ and $(\overline{\sigma}, \overline{x}) \in \fixset{a}$, define $\bigoplus_{i \in [2k]} \sigma_i f(x_i)$
as the \emph{vote of $(\overline{\sigma}, \overline{x})$ for the value of $a$}.    
\end{definition}
\paragraph{Corrector.}

Define $g$ to be the local corrector for $f$, that is, the value that
has the most votes as defined in \Cref{def: vote-for-corrrector}:
\[
    \textstyle{g(a) := \text{argmax}_{h\in H} \bracket{\Pr_{(\overline{\sigma},\overline{x})\in\fixset{a}}{\bigoplus_{i \in [2k]} \sigma_i f(x_i) = h}} \text{ for all } a \in G}.
\]
Define the following probabilities, all relating to how close $f$ is to being homomorphic:
\begin{talign*}
    \eta_a:= \Pr_{(\overline{\sigma},\overline{x})\in\fixset{a}}{\big[ g(a) \neq \bigoplus_{i \in [2k]} \sigma_i f(x_i) \big]}
    && \eta:= max_a \eta_a
    && \delta:= \Pr_{x\in G}{[f(x) \neq g(x)]}
\end{talign*}
We follow the standard self-correction paradigm introduced by \cite{BLR93, Ben-orCLR08}, by showing that if the test fails with a small probability (e.g., $\mu < 1/10$), then: (1) $\eta$ is bounded; (2) $\delta \leq 2\mu$; and (3) $g$ is a homomorphism.
The last part (3) is evidently the harder one, although similar simpler arguments are used to prove (1). Altogether, they show that if the test fails with a small probability, then $f$ is close to being a homomorphism (and in particular, the homomorphism $g$). By contrapositive, this proves a soundness guarantee for the tester.

We start by formally stating and proving the first assertion.
\begin{lemma}
    \label{lem:corrector_bounding_eta}
     $\eta \leq 2\mu$.
\end{lemma}

\begin{proof}
    It suffices to show that $\eta_a \leq 2\mu$ for each $a \in G$.
    Fix some $a$ and consider two tuples  $(\overline{\sigma}, \overline{x}), (\overline{\tau}, \overline{y})$ chosen independently and uniformly at random from $\fixset{a}$. We show that the probability of the event that the votes of $(\overline{\sigma}, \overline{x})$ and $(\overline{\tau}, \overline{y})$ for the value of $a$  are equal is close to 1, which implies the most common vote among all tuples in $\fixset{a}$ occurs with probability close to $1$.
    To check whether they give     the same vote     we manipulate both summations as follows: apply from the left side $-\tau_k f(y_k) \dots - \tau_1 f(y_1)$ to cancel out the first $k$ summands from the tuple $(\overline{\tau},\overline{y})$, and similarly from the right side for the last $k$ summands of the other tuple. 
    We get the following equivalence:
    \[
        \textstyle{
        \underbrace{\bigoplus_{i \in [2k]} \sigma_i f(x_i)}_{S_{\sigma,x}}
        = 
        \underbrace{\bigoplus_{i \in [2k]} \tau_i f(y_i)}_{S_{\tau,y}}
        \iff 
        \underbrace{\bigoplus_{i=0}^{k-1}
        -\tau_{k-i} f(y_{k-i}) \oplus
        \bigoplus_{i\in[k]} \sigma_i f(x_i)}_{S_1} 
        = 
        \underbrace{\bigoplus_{i=k+1}^{2k} \tau_i f(y_i) \oplus \bigoplus_{i=0}^{k-1} -\sigma_{2k-i} f(x_{2k-i})}_{S_2}.
        }
    \] 
    Importantly, each of the new sums $S_1, S_2$
    contains exactly $k$ summands respectively from $(\overline{\sigma}, \overline{x})$ and $(\overline{\tau}, \overline{y})$, but with different signs. By \autoref{claim:half_the_elements_are_random} and independence between $(\overline{\sigma}, \overline{x})$ and $(\overline{\tau}, \overline{y})$, the tuple of $2k$ signs and $2k$ elements used in $S_1$ is uniformly random (and similarly for $S_2$).
     Since the original tuples are taken from $\fixset{a}$,
     \[
        \textstyle{\sum_{i\in[2k]} \sigma_i x_i 
        = a
        = \sum_{i\in[2k]} \tau_i y_i}.
    \]
         By the same reasoning     as before but with $x_i, y_i$ instead of $f(x_i), f(y_i)$
    \[
        \textstyle{\sum_{i=0}^{k-1}
        -\tau_{k-i} y_{k-i} +
        \sum_{i\in[k]} \sigma_i x_i
        = \sum_{i=k+1}^{2k} \tau_i y_i + \sum_{i=0}^{k-1} -\sigma_{2k-i} x_{2k-i}.}
    \]
    Thus, the two tuples defining the sums or votes $S_1, S_2$ are both in $\fixset{a'}$, for the same uniformly random element ${a'}\in G$. The event 
    $\set{S_1 \neq S_2}$ implies that at least one of the votes did not agree with $f({a'})$. 
    \[
        \textstyle{
        \Pr{[S_1 \neq S_2]}
        \leq \Pr{[S_1 \neq f({a'}) \vee S_2 \neq f({a'})]}
        \leq \Pr{[S_1 \neq f({a'})]} + \Pr{[S_2 \neq f({a'})]} = 2\mu ,}
    \]
    where the second inequality is by a union bound, and the  last equality holds     since     both the probabilities     are equal to the probability of the check in Line~\ref{step:signs-condition} of \Cref{alg: random signs test}     holds, thus simulating our test. Indeed, $S_i \in \fixset{a'}$ for $i=1,2$, where ${a'}$ is a uniformly random element of $G$. We denote by $p_h$ the probability that a correction using a random tuple $(\overline{\sigma},\overline{x})\in\fixset{a}$ produces $h\in H$, then:
    \[
        2\mu \geq \Pr{[S_1 \neq S_2]} = \Pr{[S_{\sigma,x} \neq S_{\tau,y}]}
        = \sum_{h\in H} p_h ( 1-p_h) 
        = \sum_{h\in H} (p_h - p_h^2)
        = 1 - \sum_{h\in H} p_h^2 .
    \]
        By definition, $g(a) = \argmax_{h\in H} p_h$ is the most common vote or correction outcome (ties broken arbitrarily). We have
    \[
        1 - \eta_a = p_{g(a)}
        = p_{g(a)} \cdot \sum_{h\in H} p_h 
        = \sum_{h\in H} p_h p_{g(a)}
        \geq \sum_{h\in H} p_h^2 \geq 1 - 2\mu ,
    \]
which concludes the proof.
\end{proof}

The next simple lemma shows the second easy assertion as a corollary of the first.
\begin{lemma}
    \label{lem:corrector_bounding_delta}
    If $\mu < 1/10$, then     $\delta \leq 2\mu$.
\end{lemma}

\begin{proof}
    For each element $a \in G$ we have high agreement for corrections, and in particular $\eta_a \leq 1/2$. Denote the set where $f$ and $g$ disagree by $\Delta:= \set{x \in G: f(x) \neq g(x)}$, noting that $\delta = \card{\Delta}/\card{G}$. Then,
    \[
        \mu 
        \ =\  \Pr_{\substack{a \in G\\{(\sigma,x)\in\fixset{a}}}}{\big[ f(a) \neq  \bigoplus_{i\in[2k]} \sigma_i f(x_i) \big]}
        \ =\  \frac{1}{\card{G}}\sum_{a \in G} \Pr_{(\sigma,x)\in\fixset{a}}{\big[ f(a) \neq  \bigoplus_{i\in[2k]} \sigma_i f(x_i) \big]} .
    \]
    The latest expression is only reduced if we take a subset of (positive) summands, only those from $\Delta \subseteq G$. It further reduces if we replace each event of the sum differing from $f(a)$ with the subevent that it specifically equals $g(a)$ (this is indeed a subevent for any $a \in \Delta$). We get
    \[
        \mu
        \geq \frac{1}{\card{G}}\sum_{a \in \Delta} \Pr_{(\sigma,x)\in\fixset{a}}{\big[ g(a) =  \bigoplus_{i\in[2k]} \sigma_i f(x_i) \big]} 
        = \frac{1}{\card{G}}\sum_{a \in \Delta} (1 - \eta_a)
        \geq \frac{1}{\card{G}} \cdot \card{\Delta} \cdot \frac{1}{2}
        = \frac{\delta}{2} ,
    \]
    where the last inequality uses $1-\eta_a \geq 1/2$ for any $a \in \Delta$. This concludes the proof.
\end{proof}

The last assertion uses the probabilistic method and the accuracy of corrections shown above, to prove that $g$ is a homomorphism.

\begin{lemma}
    \label{lem:corrector_g_homomorphisms}
    If $\mu < 1/10$, then $g$ is a homomorphism.    
\end{lemma}

\begin{proof}
    Given $a, {b} \in G$, we need to show that $g(a) \oplus g({b}) = g(a + b)$.
    Consider the following process:
    \begin{itemize}
        \item 
        Draw a uniformly random $r \in G$.         
        \item 
        Draw a uniformly random tuple $\tuple{\sigma}{x}\in \leftside{a - r} \bigcap \rightside{r}.
        $
        \item 
        Let $CL$ be the set ``cancelling left'' whose left side cancels exactly the right side of the previous tuple, defined as, $CL: = \set{\tuple{\tau}{y}: \forall i\in[k]. (\tau_i, y_i) = (-\sigma_{2k+1-i},x_{2k+1-i})}$, and choose uniformly at random a tuple $\tuple{\tau}{y}$ from $CL \bigcap \rightside{r + {b}}$.
    \end{itemize} 
    Applying \Cref{cor:alternative_draw} three times, note that $\tuple{\sigma}{x}$ is a uniformly random tuple in $\fixset{a}$, the tuple $\tuple{\tau}{y}$ is a uniformly random tuple in $\fixset{b}$. Furthermore, we define the tuple $\tuple{\phi}{z}$ that has the left side taken from $\tuple{\sigma}{x}$ and the right side taken from $\tuple{\tau}{y}$, that is $(\phi_i,z_i) = (\sigma_i,x_i)$ for $i\in[k]$ and $(\phi_i, z_i) = (\tau_i, y_i)$ for $i\in[2k]\setminus[k]$. Applying \Cref{cor:alternative_draw} again, we get that $\tuple{\phi}{z}$ is a unifromly random tuple from $\fixset{a+b}$. 
    Hencet, each of the following three equalities holds with probability at least $1-\eta$:
        \begin{align*}
        g(a) = \bigoplus_{i\in[2k]} \sigma_i f(x_i) &&
        g(b) = \bigoplus_{i\in[2k]} \tau_i f(y_i) &&
        g(a + b) = \bigoplus_{i\in[2k]} \phi_i f(z_i)
    \end{align*} 
    Next, recall we specifically chose the left side of $\tuple{\tau}{y}$ to cancel out the right side of $\tuple{\sigma}{x}$. This cancellation also happens     when applying the function $f$, since pairs of summands cancel each other from inside out. The following equality thus holds with probability $1$:
    \[
        \bigoplus_{i=k+1}^{2k} \sigma_i f(x_i) \oplus \bigoplus_{i\in[k]} \tau_i f(y_i) 
        = \sigma_{k+1}f(x_{k+1}) \oplus \dots \oplus \sigma_{2k}f(x_{2k}) \oplus \tau_1 f(y_1) \oplus \dots \oplus \tau_k f(y_k)
        = e_H .
    \]
    Over the random choice of tuples, the event that all $4$ equalities above hold has probability at least $1-3\eta \geq 1-6\mu > 0$ (by applying \Cref{lem:corrector_bounding_eta}, a union bound, and the premise $\mu < 1/6$).
    This event implies that $g(a) \oplus g(b) = g(a+b)$, which also occurs with positive probability. However, the last equality does not depend on the probability space, and therefore it must hold with probability $1$.
\end{proof}
This completes the proof of \Cref{lem:large distance soundness guarantee}.
\end{proof}

\subsection{Completing the soundness analysis}\label{sec:completing-the-soundness-analysis}

\begin{proof}[Proof of \Cref{thm:general-soundness-based-on-eps}]
Since $f$ is $\eps$-far from being a homomorphism, $\eps_f \geq \eps$.
   When $\eps_f \geq 1/8$, then by \Cref{lem:large distance soundness guarantee},
    \[\Pr{[\SignsTest{2k} \text{ rejects}]} \geq 
    \min\set{ \frac{\eps_f}{2}, \frac{1}{10}} \geq \frac{1}{16}.\]
    On the other hand, if $\eps_f < 1/8$, then by \Cref{lem:soundeness-signs-test-small-eps-general-H}, 
    \[\Pr{[\SignsTest{2k} \text{ rejects}]} \geq \min\set{\frac{(2k-3)\cdot \eps_f}{3}, \frac{1}{6}} \geq \min\set{\frac{(2k-3)\cdot \eps}{3}, \frac{1}{6}}.\]
    Combining the two cases, we get
    $$\Pr{[\SignsTest{2k} \text{ rejects}]} \geq \min\set{\frac{(2k-3)\cdot \eps}{3}, \frac{1}{6}, \frac{1}{16}} \geq \min\set{\frac{(2k-3)\cdot \eps}{3}, \frac{1}{16}}.\qedhere   
    $$
\end{proof}

\section{General framework for online manipulation resiliency}\label{sec:general_framework}
The previous section provided a sound basic test for homomorphism. To make this test resilient to online manipulations, we now introduce the general framework used to build our final testers. This framework, based on the concept of an 'unpredictable tester' and an amplification lemma, provides a generic recipe for converting a basic test with certain unpredictability properties into one that is robust against online adversaries. We will then apply this framework in the following sections to prove our main results.

\subsection{Unpredictable tester}\label{sec:unpredictable-tester}
We define flatness of a distribution, which closely relate to its ``min-entropy":\footnote{If a distribution is $\alpha$-flat, then by definition it has min-entropy at least $\log(1/\alpha)$.}

\begin{definition}[$\alpha$-flat]
A distribution $\cP$ over a finite domain $D$ is $\alpha$-flat if 
$$\max_{z \in D} \Pr_{x \sim \cP}[x = z] \leq \alpha.$$

\end{definition}

The following definition, loosely speaking, requires that other than probability $\beta$ of ``losing its unpredictability", a $q$-query test maintains``aggregated flatness" of $\alpha$. In the lemma that follows we show such a tester is resilient to online manipulations.

\begin{definition}[$(q, \alpha, \beta)$-Unpredictable Test]
\label{def:unpredictable-test}
A test making $q$ sequential queries $x_1, \dots, x_q$ to a function $f:D \to R$ is \textbf{$(q, \alpha, \beta)$-unpredictable} if there exist non-negative flatness parameters
$(\alpha_1, \dots, \alpha_q)$
and error terms $(\beta_1, \dots, \beta_q)$ satisfying two conditions:

\begin{enumerate}
    \item \textbf{Sequential Unpredictability:} For each $i \in [q]$, let $\cD_i$ be the probability distribution for the $i$-th query $x_i$, which may depend on all the previous queries $(x_1, \dots, x_{i-1})$.
    Then for each $i \in [q]$, with probability at least $1-\beta_i$ (taken over the random outcomes of the previous $i-1$ queries), $\cD_i$ is $\alpha_i$-flat i.e.,
    $$ \Pr_{x_1, \dots, x_{i-1}} \left[ \mathcal{D}_i \text{ is } \alpha_i\text{-flat} \right] \ge 1 - \beta_i $$

    \item \textbf{Aggregate Parameters:} The sum of flatness parameters and error terms is bounded:
        $$ \sum_{i \in [q]} \alpha_i \leq
        \alpha \quad \text{and} \quad \sum_{i \in [q]} \beta_i \leq
        \beta $$
\end{enumerate}

\end{definition}

\begin{lemma}[Probability of seeing a manipulation]
\label{lem: prob-of-manipulation}
Let $\cA$ be an algorithm that runs a $(q, \alpha, \beta)$-unpredictable test for $r$ independent iterations against any $t$-online budget-managing adversary. Then the probability that a manipulation is encountered in any single iteration, $P_{manipulation}$ is bounded by:
$$ P_{manipulation} \le \alpha qrt + \beta. $$
where $qrt$ is the maximum number of manipulations that the adversary can perform.
\end{lemma}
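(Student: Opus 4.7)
The plan is to fix a single iteration of the test and bound the probability that any of its $q$ queries hits a manipulated entry of $f$. The key observation is that, since the adversary is $t$-online budget-managing and the algorithm $\cA$ issues a total of $qr$ queries across all $r$ iterations, the cumulative manipulation budget used by the adversary never exceeds $qrt$. In particular, if $M_i$ denotes the (random) set of manipulated entries at the moment the fixed iteration issues its $i$-th query, then $|M_i| \le qrt$ deterministically, regardless of the adversary's strategy and the realized history that preceded this query.

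Next I would apply a union bound over the $q$ queries of the fixed iteration, obtaining $P_{manipulation} \le \sum_{i=1}^q \Pr[x_i \in M_i]$. For each $i$, I condition on the entire history $h_{<i}$ of queries, responses, and adversarial moves up to (but not including) the $i$-th query. The history $h_{<i}$ determines both the current set $M_i$ (since the adversary's decisions are functions of past observed queries) and the conditional law $\cD_i$ of the next query $x_i$. By the sequential unpredictability property in \Cref{def:unpredictable-test}, the $h_{<i}$-probability that $\cD_i$ fails to be $\alpha_i$-flat is at most $\beta_i$; on the complementary event, $\cD_i$ assigns mass at most $\alpha_i$ to every element of $D$, so $\Pr[x_i \in M_i \mid h_{<i}] \le \alpha_i |M_i| \le \alpha_i qrt$. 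Averaging over $h_{<i}$ and combining the two cases gives $\Pr[x_i \in M_i] \le \alpha_i qrt + \beta_i$.

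Finally, summing the per-query bound over $i \in [q]$ and invoking the aggregate parameter inequalities $\sum_i \alpha_i \le \alpha$ and $\sum_i \beta_i \le \beta$ from \Cref{def:unpredictable-test} yields $P_{manipulation} \le \alpha qrt + \beta$. The main subtlety, rather than an obstacle, is threading the conditioning correctly: both $\cD_i$ and $M_i$ are functions of the same random history, and the sequential unpredictability clause is a statement about that history, so I must take care to cleanly separate the bad-history event ``$\cD_i$ is not $\alpha_i$-flat'' (total mass $\le \beta_i$) from the conditional mass of $x_i$ on $M_i$ (bounded by $\alpha_i qrt$) before applying the union bound. The budget-managing model delivers the required deterministic cap $|M_i| \le qrt$ at no cost, so no deeper technical difficulty arises.
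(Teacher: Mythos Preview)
Your proposal is correct and follows essentially the same approach as the paper: fix an iteration, union bound over its $q$ queries, and for each query separate the bad-history event ``$\cD_i$ is not $\alpha_i$-flat'' (mass $\le\beta_i$) from the flat case where $\Pr[x_i\in M_i]\le\alpha_i\cdot qrt$, then sum using $\sum_i\alpha_i\le\alpha$ and $\sum_i\beta_i\le\beta$. If anything, your explicit conditioning on the full history $h_{<i}$ is slightly more careful than the paper's presentation, but the argument is the same.
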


\begin{proof}
Fix any adversarial strategy. We solely rely on the bound on the number of manipulated entries and  the unpredictability of the test.
Since $\cA$ runs a $(q, \alpha, \beta)$-unpredictable test, there exist flatness parameters $(\alpha_1, \dots, \alpha_q)$ and error terms $(\beta_1, \dots, \beta_q)$ satisfying the conditions in \Cref{def:unpredictable-test}. 

Fix any iteration of the loop in algorithm $\cA$. Let $E_i$ be the event that the $i$-th query of this iteration, $x_i$, encounters a manipulation. 

We now bound the probability $\Pr[E_i]$ for a single $i$. Let $B_i$ be the bad event that the distribution $\cD_i$ is not $\alpha_i$-flat. 
Then $\Pr[B_i] \leq \beta_i.$

Suppose the bad event did not happen, i.e., the distribution $\cD_i$ is $\alpha_i$-flat.
Let $\cM_i$ be the set of manipulated entries when query $i$ is made, and note $\card{\cM_i} \leq qrt$.
Thus, the probability that $\cA$ encounters a manipulation while querying $x_i$ conditioned on $\overline{B_i}$ is 
\[
    \Pr[E_i|\overline{B_i}] 
    = \sum_{z \in \cM_i} \Pr_{x_i \sim \cD_i}{[x_i = z]}
    \leq \sum_{z \in \cM_i} \alpha_i  \leq \alpha_i qrt.
\]

The probability of seeing a manipulation while querying $x_i$ is thus bounded by
\begin{align*}
    \Pr[E_i] &= \Pr[E_i \mid \overline{B_i}]\Pr[\overline{B_i}] + \Pr[E_i \mid B_i]\Pr[B_i] \\
    &\leq \Pr[E_i \mid \overline{B_i}] + \Pr[B_i] \\
    &\leq \alpha_i qrt + \beta_i.
\end{align*}
By a union bound,
the probability of seeing a manipulation in this iteration is 
$$P_{erase} \le \sum_{i=1}^q \Pr[E_i] = \sum_{i=1}^q \big(\alpha_i qrt + \beta_i \big) \leq \alpha qrt + \beta.\qedhere$$
\end{proof}

\subsection{Generic Amplification Lemma}

In this section, we prove a generic amplification lemma. This result provides a recipe for converting a base test into a full, online manipulation-resilient tester, provided the base test has sufficient unpredictability. Specifically, we show that if a $(q, \alpha, \beta)$-unpredictable test satisfies a condition linking its soundness $p_w(\eps)$ to its aggregate unpredictability parameters $(\alpha, \beta)$ and the adversary's power ($t$), then running the test $O(1/p_w(\eps))$ times yields a new tester with a constant soundness guarantee against the adversary.

\begin{lemma}[Generic Amplification for Online Resilience]
\label{lem: generic amplification}
Let $\cP$ be a property (a set of functions $f:D \to R$). Let $\cT$ be a $(q, \alpha, \beta)$-unpredictable test for $\eps$-testing property $\cP$ with the following guarantees:
\begin{itemize}
    \item \textbf{Completeness:} If $f \in \cP$, the test $\cT$ always accepts.
    \item \textbf{Soundness:} If $f$ is $\eps$-far from $\cP$, $\cT$ finds a witness (i.e., rejects) with probability at least $p_w(\eps)$.
\end{itemize}
If the following condition is true,
\begin{equation}
\label{eq: amplification-condition}
     \alpha qt \cdot \Big \lceil \frac{3}{p_w(\eps)} \Big \rceil + \beta \le \frac{p_w(\eps)}{12} 
\end{equation}
then there is an online manipulation-resilient $\eps$-tester $\cA$ for the property $\cP$ which uses $ q \cdot \lceil 3/p_w(\eps) \rceil $ queries. 

Moreover $\cA$ has the structure that it runs $\lceil 3/p_w(\eps) \rceil$ iterations of the $(q, \alpha, \beta)$-unpredictable proximity oblivious test where the queries chosen in each iteration are independent of all queries chosen in previous iterations. If the manipulations were instead erasures, then $\cA$ has 1-sided error.
\end{lemma}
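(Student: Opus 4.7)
The plan is to define $\cA$ as the algorithm that performs $r := \lceil 3/p_w(\eps) \rceil$ independent executions of $\cT$, drawing fresh internal randomness for each repetition, and to reject iff some repetition of $\cT$ rejects. In the erasure case $\cA$ additionally treats any repetition that queries a $\perp$ as an acceptance (so only a genuine violation can trigger rejection), which will yield 1-sided error. Total queries: $qr$, matching the bound in the statement. The analysis then has two ingredients: a manipulation almost never happens, and when it does not happen the run of $\cA$ is indistinguishable from $r$ independent trials of $\cT$ in the standard non-adversarial model.

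First, I would apply \Cref{lem: prob-of-manipulation} to $\cA$: since $\cA$ runs $r$ iterations of a $(q,\alpha,\beta)$-unpredictable test against a $t$-online budget-managing adversary, the event $M$ that some query ever lands on a manipulated entry satisfies $\Pr[M] \le \alpha q r t + \beta$. Plugging $r = \lceil 3/p_w(\eps)\rceil$ into hypothesis \eqref{eq: amplification-condition} immediately gives $\Pr[M] \le p_w(\eps)/12 \le 1/12$. Completeness is then quick: for $f \in \cP$ in the erasure case, perfect completeness of $\cT$ combined with the $\perp$-tolerant convention ensures $\cA$ never rejects; in the corruption case the probability of a false rejection is at most $\Pr[M] \le 1/12 < 1/3$. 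For soundness when $f$ is $\eps$-far from $\cP$, I would couple the execution of $\cA$ with an ``ideal'' execution that uses the same internal randomness but in which no answer is ever manipulated. On $\overline{M}$ every query of $\cA$ receives the true value $f(x)$, so the trajectories of the two executions coincide; hence the set of repetitions that reject agrees in both runs. In the ideal run, the $r$ repetitions are mutually independent by construction and each rejects with probability $\ge p_w(\eps)$. Therefore
\begin{equation*}
\Pr[\cA \text{ accepts}] \le \Pr[M] + (1-p_w(\eps))^r \le \tfrac{1}{12} + e^{-rp_w(\eps)} \le \tfrac{1}{12} + e^{-3} < \tfrac{1}{3}.
\end{equation*}

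The main subtle point is that the adversary is adaptive across repetitions: its manipulations in iteration $i$ may depend on the entire history of iterations $1,\dots,i-1$, so the repetitions of $\cA$ are not literally independent in the adversarial world as they would be in standard repetition-amplification. The coupling to the ideal execution is what sidesteps this obstacle; the adversary's adaptivity is absorbed entirely into the single global event $M$ (which \Cref{lem: prob-of-manipulation} already controls), while the independence and soundness of the ideal repetitions, which hold by construction since $\cA$ feeds each iteration fresh randomness, deliver the usual exponential amplification on $\overline{M}$. Checking that hypothesis \eqref{eq: amplification-condition} is tight enough to push $1/12 + e^{-3}$ below $1/3$ is a routine calculation, and the 1-sided error claim in the erasure model drops out immediately from the convention above, since no iteration whose answers are all truthful, and no iteration that sees a $\perp$, can ever cause $\cA$ to reject an $f \in \cP$.
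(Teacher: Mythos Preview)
Your overall strategy—coupling to an ideal manipulation-free execution and absorbing all adversarial interference into a single global event $M$—is sound, but you misread \Cref{lem: prob-of-manipulation}. That lemma bounds the probability of encountering a manipulation \emph{in one fixed iteration} by $\alpha qrt + \beta$ (its proof explicitly fixes an iteration and union-bounds over the $q$ queries within it); your global event $M$ ranges over all $r$ iterations, so after a further union bound the correct estimate is $\Pr[M] \le r(\alpha q r t + \beta) \le r \cdot p_w(\eps)/12$. With $r = \lceil 3/p_w(\eps)\rceil$ this is roughly $1/4$, not $1/12$, so your clean soundness line $\Pr[\text{accept}] \le 1/12 + e^{-3}$ does not hold as written. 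The corrected bound $\Pr[\text{accept}] \le r\cdot p_w/12 + (1-p_w)^r$ can still be pushed below $1/3$ for every $p_w \in (0,1]$, but the margin is razor-thin (it approaches $1/3$ as $p_w\to 1^-$ with $r=4$) and requires a case analysis you did not supply.

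The paper sidesteps this loss by never forming a global $M$ in the soundness argument: it bounds each iteration's failure probability—the union of ``no witness found'' and ``manipulation seen in this iteration''—by $(1-p_w) + p_w/12 = 1 - 11p_w/12$, and then chains these conditional per-iteration bounds across the $r$ iterations (using that iteration $i$'s queries are independent of the past) to obtain $(1 - 11p_w/12)^r \le e^{-11/4} < 1/3$. Keeping the manipulation term at the per-iteration scale, where the lemma actually controls it, is what buys the comfortable constant. Your coupling idea is conceptually clean and can be repaired, but you need the factor-$r$ correction on $\Pr[M]$ and a tighter numerical argument to close it.
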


\begin{proof}
Let $r = \lceil 3/p_w(\eps) \rceil$  be the number of iterations of the online-resilient tester $\cA$.

\paragraph{Completeness:}
The tester $\cA$ accepts if all $r$ iterations accept. Thus $\cA$ can fail only if one of the iterations rejects. Since the base test $\cT$ has perfect completeness in the standard property testing model i.e., it always accepts functions in $\cP$ when there are no online manipulations, the rejection in an iteration can only happen because of a manipulation. By \Cref{lem: prob-of-manipulation} and condition \eqref{eq: amplification-condition} in the lemma statement, the probability of manipulation in any single iteration is at most

$$\alpha qrt + \beta = \alpha qt \Big \lceil \frac{3}{p_w(\eps)} \Big \rceil + \beta \leq \frac{p_w(\eps)}{12}.$$

By a union bound over all the $r = \Big \lceil \frac{3}{p_w(\eps)} \Big \rceil$ iterations, the probability that $\cA$ sees a manipulation and errors is at most 
$$\frac{p_w(\eps)}{12} \cdot \Big \lceil \frac{3}{p_w(\eps)} \Big \rceil \leq 1/3.$$

Note that if the manipulations were instead erasures, $\cA$ will always accept $f$ with probability 1 because $\cA$ can be designed to accept in any iteration where it sees an erasure.

\paragraph{Soundness:}
Assume $f$ is $\eps$-far from $\mathcal{P}$. The amplified tester $\cA$ fails only if all $r$ independent iterations fail to find a witness. Fix an iteration $i$. This iteration fails if one of the following two events happen, the base test $\cT$ fails to find a witness (denoted by event $\overline{W_i}$) or a manipulation occurs during its execution (denoted by event $M_i$). By a union bound, 
$$ \Pr[\text{Iteration } i \text{ fails}] = \Pr[\overline{W_i} \cup M_i] \leq \Pr[\overline{W_i}] + \Pr[M_i] $$

The probability that the base test $\cT$ finds a witness is at least $p_w(\eps)$. Thus $\Pr[W_i] \geq p_w(\eps)$ which implies  $\Pr[\overline{W_i}] \leq 1 - p_w(\eps)$. By \Cref{lem: prob-of-manipulation}, the probability that iteration $i$ sees a manipulation, $$\Pr[M_i] \leq \alpha qrt + \beta = \alpha qt \Big \lceil \frac{3}{p_w(\eps)} \Big \rceil + \beta.$$ 
Using condition \eqref{eq: amplification-condition} from the lemma statement, we get $\Pr[M_i] \leq \frac{p_w(\eps)}{12}$. Thus,

$$ \Pr[\text{Iteration } i \text{ fails}] \leq \Pr[\overline{W_i}] + \Pr[M_i] \leq (1 - p_w(\eps)) + \frac{p_w(\eps)}{12} = 1 - \frac{11 p_w(\eps)}{12}.$$
Let $Q_i$ be the random variable representing the set of queries made in iteration $i$. We know that for all choices of sets of queries for the previous $i-1$ iterations $z_1, \dots , z_{i-1}$, we have,

$$ \Pr_{Q_i}[\text{Iteration } i \text{ fails} | Q_1 = z_1, \dots , Q_{i-1} = z_{i-1}] \leq 1 - \frac{11 p_w(\eps)}{12}.$$ Thus, the expected probability of iteration $i$ failing over all the random choices of sets of queries made in all the previous $i-1$ iterations is
$$ \Pr_{Q_i}[\text{Iteration } i \text{ fails}] = \mathbb{E}_{z_1, \dots , z_{i-1}}\Big[ \Pr_{Q_i}[{\text{Iteration } i \text{ fails}} | Q_1 = z_1, \dots Q_{i-1} = z_{i-1}] \Big] \leq 1 - \frac{11p_w(\eps)}{12}.$$
Now, using the independence of the set of queries $Q_i$ for all $i \in [r]$,

$$ \Pr[\cA \text{ fails}] = \prod_{i \in [r]} \Pr_{Q_i}[\text{Iteration } i \text{ fails}] \leq \prod_{i \in [r]} 1 - \frac{11p_w(\eps)}{12} = \Big( 1 - \frac{11 p_w(\eps)}{12}\Big)^r.$$
Substituting $r = \lceil 3/p_w(\eps) \rceil$, we know $r \ge 3/p_w(\eps)$, thus
$$ \Pr[\cA \text{ fails}] \le \left( 1 - \frac{11 p_w(\eps)}{12} \right)^{3/p_w(\eps)} \leq e^{\frac{-11 p_w(\eps)}{12} \cdot \frac{3}{p_w(\eps)}} = \left( \frac{1}{e} \right)^{11/4} \leq \frac{1}{3},$$
where the second inequality holds since $1-x \leq e^{-x}$ for all $x$. Thus, the failure probability of $\cA$ is at most
1/3, and $\cA$ successfully finds a witness and rejects $f$ with probability at least 2/3.
\end{proof}

\section{Resilience to online manipulations for general groups}\label{sec:resilience}
In this section, we prove \Cref{thm:general_G_H} by lifting the linearity tester in \cite{AroraKM25} which is based on the manipulation-resilient tester of \cite{BenEliezerKMR24} and the sample-based tester of Goldreich and Ron \cite{GoldreichR16}. Define a parameter $m := 4 \lceil \log_2 t + 15/\eps \rceil + 12$. Depending on the value of $m$, we invoke one of the two different algorithms against the $t$-online-erasure adversary. When $2^m \leq \card{G}^{1/4}$, we use \Cref{alg: Online erasure resilent signs test}, a lift of the linearity tester of \cite{BenEliezerKMR24} to the general group setting. When $2^m > \card{G}^{1/4}$, we instead use the sample-based tester by \cite{GoldreichR16}, following the same case-based strategy as \cite{AroraKM25}. We analyze the two cases separately. Our main contribution here is in the analysis of the distribution of the element $y$ in \Cref{alg: unpredictable signs test} where we need to handle random signs along with random elements for any group $G$, whereas previous work did not have to worry about the random signs.

\paragraph{Case 1: If $2^m \leq \card{G}^{1/4}$.}In this case, we use \Cref{alg: Online erasure resilent signs test} which internally runs \Cref{alg: unpredictable signs test} a constant number of times.

\begin{algorithm}
\caption{Unpredictable Random Signs Test}
\label{alg: unpredictable signs test}
\begin{algorithmic}[1]
\Ensure $m \in \N$, $m$ is even.
\Require Query access to $f: G \to H$ where $(G,+)$ and $(H, \oplus)$ are finite groups. 
\State Draw $m$ independent and uniformly random elements $x_1, x_2, \dots, x_m$ from $G$.
\label{line:online-erasure-resilient-signs-test-line-3}
\State Query $f(x_1), f(x_2), \dots, f(x_m)$. \label{line:online-erasure-resilient-signs-test-line-4}
\State Draw independent and uniformly random signs $\sigma_j$ for all $j \in [m]$ from the set $\set{+,-}$.
\State Let $S$ be a uniformly random subset of $[m]$ of size $m/2$.
\State Query $f(y)$ where  $y \gets \sum_{j\in S} \sigma_j x_j$ \label{line:online-erasure-resilient-signs-test-line-7}
\State \textbf{Accept} if $\bigoplus_{j \in S} \sigma_j f(x_j) \neq f(y)$; otherwise \textbf{reject} \label{line:online-erasure-resilient-signs-test-line-8}
\end{algorithmic}
\end{algorithm}

\begin{algorithm}
\caption{Online Erasure-Resilient Random Signs Test}
\label{alg: Online erasure resilent signs test}
\begin{algorithmic}[1]
\Ensure $\eps \in (0,1)$, $t \in \N$
\Require Query access to $f: G \to H$ via $t$-erasure oracle, where $(G,+)$ and $(H, \oplus)$ are finite groups. 
\State Set $m = 4 \lceil \log_2 t + 15/\eps \rceil + 12$  
\label{line:online-erasure-resilient-signs-test-line-1}
\For{$i \in [48]$ 
} \label{line:online-erasure-resilient-signs-test-line-2}
\State Run \Cref{alg: unpredictable signs test} with parameter $m$ and independent random coins with query access to $f$.
\State \textbf{Reject} if \Cref{alg: unpredictable signs test} rejects.
\EndFor
\State \textbf{Accept}
\end{algorithmic}
\end{algorithm}

Firstly, \Cref{lem: unpredictable random signs completeness soundness} shows that \Cref{alg: unpredictable signs test} has the same completeness and soundness guarantees as $\SignsTest{2k}$ if $k = m/4$. 

\begin{lemma}
\label{lem: unpredictable random signs completeness soundness}

For all $m \in \N$ divisible by 4, the Unpredictable Random Signs Test stated in \Cref{alg: unpredictable signs test} with parameter $m$ has the same completeness and soundness guarantees as $\SignsTest{2k}$ where $k = m/4$.
\end{lemma}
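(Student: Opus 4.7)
The plan is to show a straightforward distributional equivalence: conditioned on the random subset $S$ chosen in \Cref{alg: unpredictable signs test}, the accept/reject event is distributed exactly as in $\SignsTest{2k}$ with $2k = m/2$. First I would observe that the acceptance decision on Line~\ref{line:online-erasure-resilient-signs-test-line-8} depends only on the restricted tuple $\bigl((x_j)_{j \in S},\,(\sigma_j)_{j \in S},\,y\bigr)$, while the values $f(x_j)$ for $j \notin S$ are queried but never used to decide. Hence the extra queries can be disregarded when comparing acceptance probabilities.

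Next, I would condition on an arbitrary fixed subset $S \subseteq [m]$ of size $m/2$. Since $x_1,\dots,x_m$ are i.i.d.\ uniform over $G$ and $\sigma_1,\dots,\sigma_m$ are i.i.d.\ uniform over $\{+,-\}$, taking any deterministic subset of indices of size $m/2 = 2k$ yields $2k$ i.i.d.\ uniform elements of $G$ paired with $2k$ i.i.d.\ uniform signs, independent of $S$. Re-indexing $(x_j,\sigma_j)_{j \in S}$ arbitrarily as $(x'_1,\sigma'_1),\dots,(x'_{2k},\sigma'_{2k})$ and setting $a := \sum_{i \in [2k]} \sigma'_i x'_i$, we get precisely the joint distribution on $(x'_1,\dots,x'_{2k},\sigma'_1,\dots,\sigma'_{2k},a)$ produced by Steps~1--3 of $\SignsTest{2k}$. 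Moreover, $y = \sum_{j \in S} \sigma_j x_j = a$ after this re-indexing (in a non-Abelian group, I would fix the re-indexing to preserve the increasing order of indices within $S$, which keeps the summation order consistent with $\SignsTest{2k}$).

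Under this identification, the acceptance condition $\bigoplus_{j \in S} \sigma_j f(x_j) = f(y)$ is literally the condition $\bigoplus_{i \in [2k]} \sigma'_i f(x'_i) = f(a)$ checked in Step~\ref{step:signs-condition} of \Cref{alg: random signs test}. Consequently, for every fixed $S$ and every function $f$,
\[
    \Pr\bigl[\text{\Cref{alg: unpredictable signs test} accepts } f \,\big|\, S\bigr] \;=\; \Pr\bigl[\SignsTest{2k} \text{ accepts } f\bigr],
\]
and averaging over $S$ preserves the equality. Completeness then follows from perfect completeness of $\SignsTest{2k}$, and soundness follows from \Cref{thm:general-soundness-based-on-eps} with $k$ replaced by $m/4$.

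There is no substantive technical obstacle; the only subtle point, which I would handle carefully in the write-up, is the ordering of summands in the non-Abelian setting, which is handled by picking the canonical increasing-index re-indexing of $S$ so that both $\sum_{j \in S} \sigma_j x_j$ and $\bigoplus_{j \in S} \sigma_j f(x_j)$ match the canonical orderings used in $\SignsTest{2k}$.
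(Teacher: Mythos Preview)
Your proposal is correct and takes essentially the same approach as the paper: both observe that the accept/reject decision uses only the $m/2$ elements and signs indexed by $S$, which are $2k=m/2$ i.i.d.\ uniform elements and signs, so the test coincides with $\SignsTest{2k}$ for $k=m/4$. Your write-up is simply a more careful expansion of the paper's one-paragraph argument, with explicit conditioning on $S$ and attention to index ordering in the non-Abelian case.
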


\begin{proof}
Notice that \Cref{alg: unpredictable signs test} is essentially the same as \SignsTestInText stated in \Cref{alg: random signs test} with the some addition queries being made. The final check in line 5 of \Cref{alg: unpredictable signs test} uses $m/2$ independent and uniformly random elements and signs from $G$ and $\set{+,-}$ respectively which is the same check done in \Cref{alg: random signs test} since $2k = m/2$. As a result the completeness and soundness guarantees for \Cref{alg: unpredictable signs test} are the same as of $\SignsTest{2k}$ for $k = m/4$.
\end{proof}

Next we will show that  \Cref{alg: unpredictable signs test} apart from simulating the \SignsTestInText also has some unpredictability. To do that, we first prove \Cref{lem: Dx is flat with high prob} which shows that the distribution of the last query $y$ in \Cref{alg: unpredictable signs test} is flat with high probability.

\begin{lemma}
\label{lem: Dx is flat with high prob}
If $2^m \leq \card{G}^{1/4}$, then the distribution of the last query $y$ in \Cref{alg: unpredictable signs test} which depends on the $m$ initial queries $X$, denoted as $\cD^X$ is $1/{\binom{m}{m/2}}$-flat except with probability $1/2^m$ over the randomness of $X$.
\end{lemma}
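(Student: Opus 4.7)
The plan is to reduce the flatness claim to a disjointness statement about image sets in $G$ and then apply a union bound. For each subset $S \in \binom{[m]}{m/2}$ and sign vector $\bar\sigma \in \{+,-\}^{S}$, write $y(S,\bar\sigma,X) := \sum_{j \in S}\sigma_j x_j$ (summation taken in increasing-index order), and let $\mathrm{Im}(S, X) := \{y(S, \bar\sigma, X) : \bar\sigma \in \{+,-\}^{S}\} \subseteq G$. I first observe that if the family $\{\mathrm{Im}(S, X)\}_S$ is pairwise disjoint, then $\cD^X$ is $1/\binom{m}{m/2}$-flat: for any $z \in G$, the number of pairs $(S, \bar\sigma)$ with $y = z$ is at most $2^{m/2}$ (at most one subset contributes a matching element, and within that subset the trivial upper bound is $2^{m/2}$ sign vectors), so dividing by the total count $\binom{m}{m/2}\cdot 2^{m/2}$ gives $\Pr[y = z \mid X] \leq 1/\binom{m}{m/2}$.

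It thus suffices to upper-bound $\Pr_X[\text{some two image sets intersect}]$. Fix $S \neq S'$ of size $m/2$ and signs $\bar\sigma \in \{+,-\}^S$, $\bar\sigma' \in \{+,-\}^{S'}$; I claim that $\Pr_X[y(S,\bar\sigma,X) = y(S',\bar\sigma',X)] \leq 1/|G|$. Pick any $j^* \in S \triangle S'$, WLOG $j^* \in S \setminus S'$. Conditioning on all $x_j$ with $j \neq j^*$, the left-hand side takes the form $P_1 + \sigma_{j^*} x_{j^*} + P_2$ for fixed $P_1, P_2 \in G$ (the partial sums to the left and right of where $j^*$ sits within $S$), while the right-hand side equals some fixed $Q \in G$ that does not depend on $x_{j^*}$. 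The equation $P_1 + \sigma_{j^*} x_{j^*} + P_2 = Q$ has a unique solution in $x_{j^*}$ by group cancellation, so over the uniform choice of $x_{j^*} \in G$ it holds with probability exactly $1/|G|$.

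A union bound over all ordered tuples $(S, S', \bar\sigma, \bar\sigma')$ with $S \neq S'$ then completes the argument. The number of such tuples is at most $\binom{m}{m/2}^2 \cdot 2^m \leq 2^{3m}$, using the bound $\binom{m}{m/2} \leq 2^m$. Combined with the hypothesis $2^m \leq |G|^{1/4}$, equivalently $|G| \geq 2^{4m}$, the probability that the image sets are not pairwise disjoint is at most $2^{3m}/|G| \leq 2^{3m}/2^{4m} = 1/2^m$, which is exactly the claimed bound.

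The main subtlety is in the non-abelian setting, where the order of summation matters and one must carefully track the position of $x_{j^*}$ within the ordered sum over $S$ to obtain the expression $P_1 + \sigma_{j^*} x_{j^*} + P_2$. Still, the core isolation argument only requires that equations of the form $a + \sigma x + b = c$ have a unique solution in $G$, which is immediate from group cancellation. This is how the proof extends the $\mathbb{F}_2^n$-style analysis of prior work (notably \cite{BenEliezerKMR24}), where signs play no role, to arbitrary finite groups.
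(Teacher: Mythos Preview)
Your proof is correct and takes essentially the same approach as the paper, with the same isolation step (a single collision has probability $1/|G|$) and the same union bound count of $2^{3m}/|G| \leq 1/2^m$. The only difference is a dual decomposition: the paper writes $\cD^X$ as a uniform mixture over sign vectors $\bar\sigma \in \{+,-\}^m$ and shows that, with high probability over $X$, each conditional distribution $\cD^X_{\bar\sigma}$ is uniform over $\binom{m}{m/2}$ distinct values (no subset-collisions for any fixed $\bar\sigma$), whereas you slice along subsets $S$ and show the image sets $\mathrm{Im}(S,X)$ are pairwise disjoint; both routes yield the same flatness bound for the same reason.
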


\begin{proof}[Proof of \Cref{lem: Dx is flat with high prob}]

Once \Cref{alg: unpredictable signs test} chooses $X = \set{x_1, \dots , x_m}$ the distribution of $y$, $\cD^X$ is determined by the random choices of the signs $\overline{\sigma} = (\sigma_1, \dots, \sigma_m)$ and the subset $S \subseteq [m]$. Let $\cD^X_{\overline{\sigma}}$ be distribution conditioned on $X$ and the choice of signs $\overline{\sigma}$. Since the sequence of signs is chosen uniformly at random and independently from $S$, $\cD^X$ is a uniform mixture of all distributions $\cD^X_{\overline{\sigma}}$ for each choice of $\overline{\sigma}$. We can denote this mixture as follows.
    \[ \cD^X = \frac{1}{2^m} \sum_{\overline{\sigma} \in \set{+,-}^m} \cD^X_{\overline{\sigma}}.\]

\begin{definition}[Support of $\cD^X_{\overline{\sigma}}$]
       For all $X \in G^m$ and for all sequences of signs $\overline{\sigma} = (\sigma_1, \sigma_2, \dots, \sigma_m) \in \set{+,-}^m$, let $Y^X_{\overline{\sigma}}$ be the support of the distribution $\cD^X_{\overline{\sigma}}$, defined as
    $$Y^X_{\overline{\sigma}} = \big\{y \in G : \exists S \subseteq [m]\ , \card{S} = m/2 \text{ such that } \sum_{j \in S} \sigma_j x_j = y\big\}.$$
\end{definition}

By definition, the size of the support $Y^X_{\overline{\sigma}}$ can be at most ${\binom{m}{m/2}}$ as there can be at most ${\binom{m}{m/2}}$ distinct choices for $y$, one for each subset $S \subseteq [m]$ of size $m/2$. However, we show that the size of the support i.e., $\card{Y^X_{\overline{\sigma}}}$ is exactly equal to ${\binom{m}{m/2}}$ for all $\overline{\sigma} \in \set{+,-}^m$ simultaneously with high probability. 

Let $E$ be the event that the above statement is false, i.e., there exists a sequence of signs $\overline{\sigma}$ such that $\card{Y^X_{\overline{\sigma}}} < {\binom{m}{m/2}}$. The following claim bounds the probability of event $E$.

\begin{claim}
\label{clm: upper-bound-event-E}
     $\Pr_{X}{[E]} = \Pr_{X}{\sq{ \exists \overline{\sigma} \in \set{+,-}^m \text{ such that }\card{Y^X_{\overline{\sigma}}} < {\binom{m}{m/2}}}} \leq \frac{1}{2^m}$. 
\end{claim}
\begin{proof}[Proof of \Cref{clm: upper-bound-event-E}]
     For all subsets $S \subseteq [m]$, let $y_S = \sum_{j \in S} \sigma_j x_j$. The size of the support i.e., $\card{Y^X_{\overline{\sigma}}}$ can be  strictly less than ${\binom{m}{m/2}}$ only if some of the $y_S$ collide for different choices of $S$. We first show that for any $\overline{\sigma}$, the probability of such collisions is low and then show the above holds simultaneously for all sequences of signs.
    
    Let $S_1, S_2 \subseteq [m]$ (not necessarily of size $m/2$) such that $S_1 \neq S_2$. Without loss of generality,  assume $S_1 \setminus S_2$ is not empty and  select an index $i \in S_1 \setminus S_2$. Fix all values in $X$ apart from $x_i$. Then the value $y_{S_2}$ is fixed, but $y_{S_1}$ is uniformly random over $G$ as $x_i$ (or $-x_i$ depending on what $\sigma_i$ is) is uniformly random over $G$. Thus,
    \[\Pr_{X}{[y_{S_1} = y_{S_2}]} = \E{\Pr_{x_i}{[y_{S_1} = y_{S_2}]}} = \frac{1}{\card{G}}, \]
    where the expectation is over all $x_j \in X, j \neq i$. By a union bound over all choices of $S_1$ and $S_2$, 
    
    \[\Pr_{X}{[\exists S_1 \neq S_2: y_{S_1} = y_{S_2}]} \leq \frac{2^{2m}}{|G|} \implies \Pr_{X}{\sq{\card{Y^X_{\overline{\sigma}}} < {\binom{m}{m/2}}}} \leq \frac{2^{2m}}{|G|} \]
    By a union bound over all choices of signs $\overline{\sigma} \in \set{+,-}^m$, we get
    \[\Pr_{X}{\sq{ \exists \overline{\sigma} \in \set{+,-}^m \text{ such that }\card{Y^X_{\overline{\sigma}}} < {\binom{m}{m/2}}}} \leq \frac{2^{3m}}{\card{G}} \leq \frac{1}{2^m},\]
    where the last inequality is due to $2^m \leq \card{G}^{1/4} \implies 2^{4m} \leq \card{G} \implies 1/\card{G} \leq 1/2^{4m}$. 
\end{proof}

Now, we show that the distribution $\cD^X$ is $1/{\binom{m}{m/2}}$-flat with probability at most $1/2^m$.   Let $B$ be the event that $\Pr_{y \sim \cD^X}{[y = z]} > 1/{\binom{m}{m/2}}$ for some element $z \in G$. Assume  event $E$ does not happen, i.e., all the supports have size $\card{Y^X_{\overline{\sigma}}} = {\binom{m}{m/2}}$ for all $\overline{\sigma} \in \set{+,-}^m$. Since the set $S \subseteq [m]$ of size $m/2$ is chosen uniformly at random, the distribution $\cD^X_{\overline{\sigma}}$ is uniform over the set $Y^X_{\overline{\sigma}}$. Thus, 
    \[\Pr_{y \sim \cD^X_{\overline{\sigma}}}{\sq{y = z}} = \begin{cases}
        1/{\binom{m}{m/2}} & z \in Y^X_{\overline{\sigma}} \\
        0 & z \in G \setminus Y^X_{\overline{\sigma}},
    \end{cases}
    \]
     which implies that for all $z \in G$, 
    \[\Pr_{y \sim \cD^X}[y = z] = \frac{1}{2^m} \sum_{\overline{\sigma} \in \set{+,-}^m}  \Pr_{y \sim \cD^X_{\overline{\sigma}}}[y = z] \leq  \frac{1}{2^m} \sum_{\overline{\sigma} \in \set{+,-}^m} \frac{1}{{\binom{m}{m/2}}} = \frac{1}{{\binom{m}{m/2}}}. \]

    If event $E$ does not occur, then event $B$ cannot occur either. Thus, $E$ is a necessary condition for $B$, which implies $\Pr_{X}{[B]} \leq \Pr_{X}{[E]} \leq 1/2^m$. This completes the proof of \Cref{lem: Dx is flat with high prob}.
\end{proof}

Now, we are ready to show show that \Cref{alg: unpredictable signs test} is an unpredictable tester in \Cref{lem: unpredictable signs test unpredictability}. 

\begin{lemma}
\label{lem: unpredictable signs test unpredictability}
Fix a finite group $G$. If $2^m \leq \card{G}^{1/4}$, then \Cref{alg: unpredictable signs test} with parameter $m$ is 
$$\Big (m+1,\frac{m}{2^m} + \frac{1}{\binom{m}{m/2}}, \frac{1}{2^m} \Big )\text{-unpredictable.}$$
\end{lemma}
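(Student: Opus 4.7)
The plan is to directly verify the three requirements of \Cref{def:unpredictable-test} with $q = m+1$ by assigning per-query flatness parameters $\alpha_i$ and error terms $\beta_i$ for $i \in [m+1]$, and then checking that their sums match the claimed aggregates $\alpha = m/2^m + 1/\binom{m}{m/2}$ and $\beta = 1/2^m$. The first $m$ queries are handled trivially because each $x_i$ is drawn uniformly and independently from $G$; the core work has already been done in \Cref{lem: Dx is flat with high prob}, which controls the distribution of the final query $y$ given the realization of $X = (x_1,\dots,x_m)$.

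For $i \in [m]$, the conditional distribution $\cD_i$ of the $i$th query given the previous queries is the uniform distribution on $G$, so it is $1/|G|$-flat deterministically (with probability $1$ over the previous queries). The premise $2^m \leq |G|^{1/4}$ implies $|G| \geq 2^{4m} \geq 2^m$, so the uniform distribution on $G$ is in particular $1/2^m$-flat. I therefore set $\alpha_i = 1/2^m$ and $\beta_i = 0$ for every $i \in [m]$. For the last query, I invoke \Cref{lem: Dx is flat with high prob}: with probability at least $1 - 1/2^m$ over the realization of $X$, the distribution $\cD^X$ of $y$ is $1/\binom{m}{m/2}$-flat. Hence I set $\alpha_{m+1} = 1/\binom{m}{m/2}$ and $\beta_{m+1} = 1/2^m$, which satisfies the sequential unpredictability condition for index $m+1$.

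Summing the per-query parameters gives
\[
\sum_{i=1}^{m+1} \alpha_i \;=\; \frac{m}{2^m} + \frac{1}{\binom{m}{m/2}}, \qquad \sum_{i=1}^{m+1} \beta_i \;=\; \frac{1}{2^m},
\]
matching the claimed aggregates exactly. This completes the verification that \Cref{alg: unpredictable signs test} is $(m+1,\,m/2^m + 1/\binom{m}{m/2},\,1/2^m)$-unpredictable. The only nontrivial step is the bound on $\cD^X$ for the final query, but that was already established in \Cref{lem: Dx is flat with high prob}; the remainder is bookkeeping and the use of $|G| \geq 2^{4m}$ to absorb $1/|G|$ into $1/2^m$.
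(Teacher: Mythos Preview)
Your proof is correct and follows essentially the same approach as the paper's: assign the obvious flatness parameters to the first $m$ uniform queries, invoke \Cref{lem: Dx is flat with high prob} for the final query, and sum. The only cosmetic difference is that the paper sets $\alpha_i = 1/|G|$ for $i\in[m]$ and then bounds the sum $m/|G| \le m/2^m$ at the end, whereas you apply $1/|G| \le 1/2^m$ per-query first; both are equally valid.
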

\begin{proof}
Firstly, note the \Cref{alg: unpredictable signs test} makes exactly $m+1$ queries. Consider all queries $x_i$ for $i \in [m]$ i.e., except the last one. From line 1 of \Cref{alg: unpredictable signs test}, we know that irrespective of the previous $i-1$ queries, $x_i$ is chosen independently and uniformly at random from $G$. Thus, the distribution of $x_i$ is the uniform distribution over $G$ which is $1/|G|$-flat with probability 1. Thus $\alpha_i = 1/|G|$ and $\beta_i = 0$ for all $i \in [m].$

Next we analyze what happens for the last query $y$. Let $\cD^X$ be the final distribution of $y$ conditioned on $X$. Using \Cref{lem: Dx is flat with high prob} we know that $\cD^X$ is $1/{\binom{m}{m/2}}$-flat except with probability $1/2^m$ over the randomness of $X$. Thus $\alpha_{m+1} = 1/{\binom{m}{m/2}}$ and $\beta_{m+1} = 1/2^m$.  Thus \Cref{alg: unpredictable signs test} is unpredictable with $$\sum_{ i \in [m+1]} \alpha_i = \frac{m}{|G|} + \frac{1}{{\binom{m}{m/2}}} \leq \frac{m}{2^m} + \frac{1}{{\binom{m}{m/2}}} = \alpha \text{ and } \sum_{ i \in [m+1]} \beta_i = \frac{1}{2^m} = \beta.$$
using the fact $2^m \leq |G|^{1/4}$.
\end{proof}

 Finally, we show that \Cref{alg: Online erasure resilent signs test} satisfies \Cref{thm:general_G_H} for case 1, when $2^m \leq \card{G}^{1/4}$. 
\begin{proof}[Proof of \Cref{thm:general_G_H} when $2^m \leq \card{G}^{1/4}$ ]

We will use \Cref{lem: generic amplification} to show that \Cref{alg: Online erasure resilent signs test} is an online erasure resilient tester. Let $m$ be as defined in line 1 of \Cref{alg: Online erasure resilent signs test}, i.e., $m = 4 \lceil \log_2 t + 15/\eps \rceil + 12$.

Let \Cref{alg: unpredictable signs test} be the base test needed in the statement of \Cref{lem: generic amplification}. Using \Cref{lem: unpredictable random signs completeness soundness}, we know that \Cref{alg: unpredictable signs test} with parameter $m$ has the same completeness and soundness guarantees as $\SignsTest{2k}$ where $k = m/4$. Thus, by \Cref{thm:general-soundness-based-on-eps}, \Cref{alg: unpredictable signs test} accepts all homomorphisms from $G$ to $H$ and rejects functions that are $\eps$-far from $\mathsf{HOM}(G,H)$ with probability at least $\min\set{\frac{(m/2-3)\cdot \eps}{3}, \frac{1}{16}} = \min\set{\frac{(m-6)\cdot \eps}{6}, \frac{1}{16}} \geq \frac 1{16}$ as 

\[ \frac{(m-6) \cdot \eps}{6}  = \frac{\big(4 \lceil \log_2 t + 15/\eps \rceil + 6 \big) \cdot \eps}{6} > \frac{(1/\eps) \cdot \eps}{6} = \frac 1 6 > \frac 1{16}.\]
Thus, the quantity $p_w(\eps) = 1/16$ as needed for using \Cref{lem: generic amplification}.  Using \Cref{lem: unpredictable signs test unpredictability}, we know that \Cref{alg: unpredictable signs test} is $(q, \alpha, \beta)$-unpredictable where $$q = m+1, \alpha = \frac{m}{2^m} + \frac{1}{{\binom{m}{m/2}}}, \beta = \frac{1}{2^m}.$$

To apply \Cref{lem: generic amplification}, we need the following condition (inequality \eqref{eq: amplification-condition} in the statemnt of \Cref{lem: generic amplification}) to be satisfied.
\[\alpha qt \cdot \Big \lceil \frac{3}{p_w(\eps)} \Big \rceil + \beta \le \frac{p_w(\eps)}{12} \]

Now we show that the above condition is satisfied by the values $\alpha, \beta, p_w(\eps)$ that we derived above.
\begin{align*}
        \alpha qt \cdot \Big \lceil \frac{3}{p_w(\eps)} \Big \rceil + \beta &= \Big( \frac{m}{2^m} + \frac{1}{{\binom{m}{m/2}}}\Big) \cdot (m+1)t \cdot 48 + \frac{1}{2^m}\\
        & \leq \frac{48t(m+1)m}{2^m} + \frac{48t(m+1)}{{\binom{m}{m/2}}} + \frac{1}{2^m} \\
        & \leq \frac{48t(m+1)m}{2^m} + \frac{48t(m+1)\sqrt{2m}}{2^m} + \frac{1}{2^m} \leq \frac{256 t(m+1)m}{2^m}.
    \end{align*}
    where we used ${\binom{m}{m/2}} \geq 2^m/\sqrt{2m}$ in the second inequality.  
    Using $m = 4 \lceil \log_2 t + 15/\eps \rceil + 12$, we get $ t \leq 2^{m/4 - 15/\eps - 3} \leq 2^{m/4 -18}$. Thus,

    \[\frac{256 t(m+1)m}{2^m} \leq \frac{256 (m+1)m}{2^{\frac{3m}{4} + 18}} = \frac{(m+1)m}{2^{\frac{3m}{4}} \cdot 1024} \leq \frac{3}{1024} < \frac{1}{12 \cdot 16} = \frac{p_w(\eps)}{12}.\]
    where the last inequality holds since $\frac{(m+1)m}{2^{\frac{3m}{4}}} \leq 3$ for all $m$.

Thus using \Cref{lem: generic amplification}, we conclude that \Cref{alg: Online erasure resilent signs test} that runs $ \lceil 3/p_w(\eps) \rceil = 48$ independent iterations of \Cref{alg: unpredictable signs test} with $m = 4 \lceil \log_2 t + 15/\eps \rceil + 12$ is an online manipulation resilient tester.
\end{proof}

\paragraph{Case 2: If $2^m > \card{G}^{1/4}$.} In this case, we use the sample-based tester by Goldreich and Ron \cite{GoldreichR16} stated in \Cref{alg: sample-based tester goldreich ron}.

\begin{definition}[Partial Sums of $X$]
    For all groups $G$ and all sequences of $m$ elements $X = \set{x_1, x_2, \dots, x_m}$ from $G$, let $\mathsf{Partial Sums}(X)$ be the set of all partial sums of $X$ i.e., all sums of the form $\sum_{i \in I} x_i$ for all $I \subseteq [m]$. 
\end{definition}

\begin{algorithm}
\caption{Sample-based homomorphism tester by Goldreich and Ron \cite{GoldreichR16}}
\label{alg: sample-based tester goldreich ron}
\begin{algorithmic}[1]
\Ensure $\eps\in(0,1)$
\Require Query access to a function $f: G \to H$ where $(G,+)$ and $(H, \oplus)$ are finite groups.
\State Sample a set $S$ of $O(\log \card{G})$ elements chosen independently and uniformly at random from $G$.
\State If $\mathsf{Partial Sums}(S) \neq G$; \textbf{accept}.
\State Query $f(x)$ for all $x \in S$. 
\State Let $h: G\to H$ be the unique homomorphism with $h(x)=f(x)$ for all $x\in S$; \textbf{reject} if no such $h$ exists.
\State Query $f$ on $O\big(\frac 1 \eps)$ uniform and independent elements from $G$.
\State \textbf{Reject} if $f(x)\neq h(x)$ for any sampled point $x$; otherwise, \textbf{accept}. \end{algorithmic}
\end{algorithm}

\begin{theorem}[Theorem 5.1 of \cite{GoldreichR16}]
\label{thm: goldreich ron sample-based testing}
For all finite groups $(G,+)$ and $(H,\oplus)$, all $\eps \in (0,1)$, \Cref{alg: sample-based tester goldreich ron} is a sample-based $\eps$-tester for group homomorphism for functions $f: G \to H$ that makes $O(1/\eps + \log \card{G})$ queries. 
\end{theorem}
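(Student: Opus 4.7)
The plan is to verify the three defining properties of a sample-based $\eps$-tester: query complexity, completeness, and soundness. The query bound is immediate by inspection: Step 3 issues $O(\log|G|)$ queries, one per element of $S$, and Step 5 issues $O(1/\eps)$ additional queries, for a total of $O(1/\eps + \log|G|)$ queries, all of which are independent uniform elements of $G$. For completeness, if $f$ is a homomorphism and $\mathsf{Partial Sums}(S) = G$, then $h = f$ is the unique homomorphism consistent with $f|_S$, so Step 4 does not reject, and $f(x) = h(x)$ at every sample drawn in Step 5; if $\mathsf{Partial Sums}(S) \ne G$, the tester already accepts in Step 2. Either way, homomorphisms are accepted with probability $1$.

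For soundness, suppose $f$ is $\eps$-far from every homomorphism. I would decompose the proof into two high-probability events. First, for a sufficiently large constant $c$, a uniformly random multiset $S$ of size $c \log|G|$ satisfies $\mathsf{Partial Sums}(S) = G$ with probability at least $5/6$. Conditioned on this event, either no homomorphism $h$ is consistent with $f|_S$ and the tester rejects in Step 4, or $h$ is uniquely determined, in which case $h \in \mathsf{HOM}(G,H)$ and by hypothesis $\mathrm{dist}(f, h) \ge \eps$. In the latter case, the $\Theta(1/\eps)$ independent uniform samples drawn in Step 5 each disagree with $h$ with probability at least $\eps$, so with probability at least $5/6$ some sampled point witnesses $f(x) \ne h(x)$ and Step 6 rejects. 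A union bound over the two $5/6$ events yields overall rejection probability at least $2/3$.

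The main obstacle is the generation-by-partial-sums lemma. The clean intuition is a doubling argument: with $P_i$ denoting the set of partial sums of the first $i$ samples, we have $P_{i+1} = P_i \cup (P_i + x_{i+1})$, so a fresh uniform sample should substantially enlarge $P_i$ whenever $P_i \subsetneq G$. Making this rigorous is delicate because $P_i$ need not itself be a subgroup (so Lagrange's theorem does not apply directly), and because in non-abelian $G$ the order of summands matters in the very definition of $P_i$. A convenient route is to track the subgroup $H_i = \langle x_1, \ldots, x_i \rangle$: while $H_i$ is a proper subgroup it has index at least $2$, so a uniform sample $x_{i+1}$ lies outside $H_i$ with probability at least $1/2$ and then $|H_{i+1}| \ge 2|H_i|$; hence $H_i = G$ within $O(\log|G|)$ samples with high probability. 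A short separate argument upgrading $H_m = G$ to $P_{m'} = G$ for $m' = m + O(\log|G|)$ then closes the proof.
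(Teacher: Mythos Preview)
The paper does not prove this theorem: it is quoted as Theorem~5.1 of Goldreich and Ron and invoked as a black box in Case~2 of the proof of \Cref{thm:general_G_H}. There is therefore no in-paper argument to compare against.

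Your sketch is structurally sound---the query count, completeness, and the two-bad-event decomposition for soundness are all correct, and they mirror how the paper itself argues the analogous \Cref{thm:group-specific-sample-based-erasure-resilient-result} for \Cref{alg:generated subgroup test}. The only substantive step is the lemma that $\mathsf{Partial Sums}(S) = G$ with high probability when $|S| = \Theta(\log|G|)$, and here your proposed route has a gap. The doubling intuition on $P_i$ is in fact the right one and can be made rigorous directly: with $D_i = |G \setminus P_i|$ one checks that the conditional expectation of $D_{i+1}$ given $P_i$ equals $D_i^2/|G|$ (this holds for non-abelian $G$ as well, since $P_{i+1} = P_i \cup P_i x_{i+1}$ and right-translation is a bijection), and a two-phase Markov/Chernoff argument then yields $P_m = G$ in $O(\log|G|)$ samples with high probability. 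Your detour via the generated subgroup $H_i$, by contrast, is problematic: the step you call ``a short separate argument upgrading $H_m = G$ to $P_{m'} = G$'' is not short at all---knowing that $x_1,\dots,x_m$ generate $G$ as a group (with inverses and repetitions allowed) says nothing direct about surjectivity of the subset-sum map $\{0,1\}^{m'} \to G$, which uses each $x_j$ at most once, with a fixed sign, and in a fixed order. Stick with the direct $P_i$ argument rather than the subgroup detour.
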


Now we state a theorem by Arora, Kelman and Meir \cite{AroraKM25} about online-manipulation-resilience of sample-based testers.

\begin{theorem}[Theorem 2.1 of \cite{AroraKM25}]
\label{thm: sample-based-to-erasure-resilient}
    Let $\cT$ be a sample-based tester for a property $\mathcal{P}$ with input length $N$ and distance parameter $\varepsilon\in(0,1)$ that uses $q$ uniformly random samples and succeeds with probability $1 - \delta$. Then $\cT$ with $q$ queries succeeds with probability $1 - 2\delta$ in the presence of every budget-managing $t$-online-manipulation adversary for all $t \leq \delta \cdot N / q^2$.
\end{theorem}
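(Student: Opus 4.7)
The plan is to argue that a sample-based tester is automatically resilient to online manipulations because its uniformly random queries are very unlikely to land on any manipulated entry under the stated quantitative condition on $t$. The proof decomposes into bounding the probability of \emph{ever} encountering a manipulation, and then observing that when no manipulation is encountered the execution is indistinguishable from the manipulation-free one.

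First, I would bound the total number of manipulated positions. Since the adversary is $t$-online budget-managing, after all $q$ queries are issued the total number of manipulated entries is at most $qt$. In particular, denoting by $M_i \subseteq [N]$ the set of manipulated positions in the oracle at the moment the $i$-th query is placed, we have $|M_i| \le qt$ for every $i \in [q]$.

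Second, I would use the sample-based structure to bound the probability of encountering a manipulation. Fix any adversarial strategy. The $i$-th sample $x_i$ is drawn uniformly from $[N]$ independently of everything that has happened before, whereas $M_i$ is a function of $x_1,\dots,x_{i-1}$ and the adversary's private coins. Conditioning on the history,
\begin{equation*}
\Pr[x_i \in M_i \mid x_1,\dots,x_{i-1}] \;\le\; \frac{|M_i|}{N} \;\le\; \frac{qt}{N}.
\end{equation*}
A union bound over $i \in [q]$ yields $\Pr[\exists i: x_i \in M_i] \le q^2 t / N \le \delta$, using the hypothesis $t \le \delta N/q^2$.

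Finally, I would combine the two ingredients. Conditioned on the good event that no query lands on a manipulated entry (which occurs with probability at least $1 - \delta$), the tester $\cT$ receives exactly the answers it would have received in the manipulation-free model, and therefore errs with probability at most $\delta$ by hypothesis. A union bound over the bad event and the conditional failure gives total failure probability at most $2\delta$, i.e.\ success probability at least $1 - 2\delta$. The step most worth double-checking, and the main conceptual obstacle, is the independence assertion in the second step: although the adversary may choose $M_i$ as an arbitrary function of $x_1,\dots,x_{i-1}$ and its own coins, the freshness and uniformity of $x_i$ are what allow us to conclude $\Pr[x_i \in M_i \mid x_1,\dots,x_{i-1}] = |M_i|/N$ regardless of how $M_i$ was chosen. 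This is precisely why the argument is specific to sample-based testers and does not extend to general (even non-adaptive) query-based testers.
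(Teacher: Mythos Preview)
The paper does not give its own proof of this statement; it is quoted verbatim as Theorem~2.1 of \cite{AroraKM25} and used as a black box. Your argument is correct and is exactly the standard one. In fact it is the specialization of the paper's own general framework (\Cref{def:unpredictable-test} and \Cref{lem: prob-of-manipulation}) to the sample-based case: each of the $q$ queries has $\alpha_i = 1/N$ and $\beta_i = 0$, so the tester is $(q,\, q/N,\, 0)$-unpredictable, and \Cref{lem: prob-of-manipulation} with $r=1$ gives $P_{\text{manipulation}} \le (q/N)\cdot qt = q^2 t/N \le \delta$, after which a union bound with the clean-model failure probability yields $2\delta$.

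One small wording issue: in your final step you write ``conditioned on the good event \dots\ errs with probability at most $\delta$,'' which is not literally what you use (conditioning on $\neg B$ could in principle raise the clean failure probability above $\delta$). What you actually invoke, and what is correct, is the inequality $\Pr[\text{fail}] \le \Pr[B] + \Pr[\text{fail} \wedge \neg B] \le \Pr[B] + \Pr[\text{clean fail}] \le 2\delta$, since on $\neg B$ the transcript coincides with the clean one. This is a phrasing nit, not a gap.
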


Finally we show that \Cref{alg: sample-based tester goldreich ron} satisfies \Cref{thm:general_G_H} for case 2, when $2^m > \card{G}^{1/4}$. 

\begin{proof}[Proof of \Cref{thm:general_G_H} when $2^m > \card{G}^{1/4}$ ]
Using \Cref{thm: goldreich ron sample-based testing} we know \Cref{alg: sample-based tester goldreich ron} is a sample-based tester. By \Cref{thm: sample-based-to-erasure-resilient}, \Cref{alg: sample-based tester goldreich ron} also works in presence of a $t$-online-manipulation adversary for $t \leq c \cdot \min \set{\eps^2, 1/\log^2 \card{G}} \cdot \card{G}$ for some $c > 0$ and succeeds with probability at least 2/3. The query complexity of \Cref{alg: sample-based tester goldreich ron} is $O(1/\eps + \log \card{G}),$ which is at most $ O(1/\eps + m) = O(1/\eps + \log t)$. 
\end{proof}

\section{Group-specific testing}\label{sec:group-specif}
In this section, we present two approaches for improving testing bounds by leveraging the structure of specific groups $G$ and $H$. We present two approaches: one based on group-specific sample-based testing, and another that applies when $G$ and $H$ are finite fields.

\subsection{Group-specific sample-based homomorphism testing}\label{sec:group-specific-sample-based}
In this section, we prove \Cref{thm:group-specific-sample-based-erasure-resilient-result}, which gives a group-specific homomorphism tester  with complexity expressed in terms of the parameter $E(G)$, specified in \Cref{def:group-parameter-E}.
The tester that satisfies \Cref{thm:group-specific-sample-based-erasure-resilient-result} is \Cref{alg:generated subgroup test}. It samples elements of the group until it is likely to get a set that generates the entire group and then compares the homomorphism defined by the values of the input function $f$ on the generating set with the values of $f$ on another random sample of points.

\begin{definition}[Subgroup $\ab{S}$]
    The \emph{subgroup generated by a subset $S$} of a group $G$, denoted by $\ab{S}$, consists of all elements of $G$ that can be expressed as the finite product of elements of $S$ and their inverses.
\end{definition}

\begin{algorithm}
\caption{$\GeneratedSubgroupTest$}
\label{alg:generated subgroup test}
\begin{algorithmic}[1]
\Ensure $\eps\in(0,1)$
\Require Query access to a function $f: G \to H$ where $(G,+)$ and $(H, \oplus)$ are finite groups.
\State Set $m \gets E(G) +  9$.
\State Sample a set $S$ of $m$ elements chosen independently and uniformly at random from $G$.
\State If $\ab{S} \neq G$; \textbf{accept}.
\State Query $f(x)$ for all $x \in S$. 
\State Let $h: G\to H$ be the unique homomorphism with $h(x)=f(x)$ for all $x\in S$; \textbf{reject} if no such $h$ exists.
\State Query $f$ on $\frac 3 \eps$ uniform and independent elements from $G$.
\State \textbf{Reject} if $f(x)\neq h(x)$ for any sampled point $x$; otherwise, \textbf{accept}. \end{algorithmic}
\end{algorithm}

Observe $\GeneratedSubgroupTest$ (\Cref{alg:generated subgroup test}) is sample-based, i.e., it queries only independent and uniform random elements of $G$. To analyze the tester,
we first examine the probability that $\ab{S}=G$ for a set $S$ sampled from~$G$. The relevant group parameters were introduced by Menezes~\cite{menezes2013random} and Lubotzky~\cite{lubotzky2002expected}. We relate them to $E(G)$, the expected number of elements needed to sample a generator for the whole group, in \Cref{lem:bound-on-generating-G-whp}.
Then we use \Cref{thm: sample-based-to-erasure-resilient}
that shows that every sample-based tester is also online-erasure-resilient.

Finally, we complete the proof of \Cref{thm:group-specific-sample-based-erasure-resilient-result} by putting all ingredients together.

\begin{definition}[Group parameters $d^{\beta}(G)$ \cite{menezes2013random} and $\cM(G)$ \cite{lubotzky2002expected}]\label{def:d-beta-and-M-of-G}
    Let $G$ be a finite group.
    \begin{itemize}    
        \item For $\beta \in (0,1)$, let $d^{\beta}(G)$ be the number of independent, uniformly distributed elements of $G$ needed to generate $G$ with probability at least $1-\beta$.

        \item Let $\cM(G) := \sup_{r \geq 2} \frac{\log m_r(G)}{\log r}$ where $m_r(G)$ is the number of maximal subgroups of $G$ of index $r$. 
    \end{itemize}
\end{definition}

Menezes \cite[Proposition 9.2.7]{menezes2013random} showed a family of inequalities relating the group parameters, $d^{\beta}(G)$ and $\cM(G)$.

\begin{lemma}[Relating $d_\beta(G)$ and $\cM(G)$ \cite{menezes2013random}]
\label{lem:relate-dbeta-M}
    Let $\zeta(x)$ be the Reimann Zeta function. For each $w \in \R$ such that $\zeta(w) \leq 1+ \beta$, we have $d^{\beta}(G) \leq \cM(G) + w$. 
\end{lemma}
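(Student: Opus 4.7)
The plan is to bound the probability that a uniform sample fails to generate $G$ by a union bound over maximal subgroups, then turn this into a bound in terms of $\cM(G)$ by grouping the maximal subgroups according to their index.

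First I would observe a standard fact from group theory: a subset $S \subseteq G$ fails to generate $G$ if and only if $S$ is contained in some maximal (proper) subgroup $M$ of $G$. Let $s := \cM(G) + w$ and let $S$ be a multiset of $s$ independent uniformly random elements of $G$. Applying a union bound over all maximal subgroups of $G$,
\begin{equation*}
\Pr[\ab{S} \neq G] \;\leq\; \sum_{M \text{ max.\ subgp.\ of }G} \Pr[S \subseteq M].
\end{equation*}
Since a maximal subgroup $M$ of index $r$ has exactly $\card{G}/r$ elements, each independent uniform sample lies in $M$ with probability $1/r$, giving $\Pr[S \subseteq M] = r^{-s}$.

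Next I would partition the sum above according to the index $r$ of the maximal subgroup. By the definition of $m_r(G)$, there are $m_r(G)$ maximal subgroups of index $r$, and the smallest possible index is $2$. Therefore
\begin{equation*}
\Pr[\ab{S} \neq G] \;\leq\; \sum_{r\geq 2} m_r(G) \cdot r^{-s}.
\end{equation*}
By \Cref{def:d-beta-and-M-of-G}, we have $\log m_r(G) \leq \cM(G)\cdot \log r$ for every $r\geq 2$, i.e.\ $m_r(G) \leq r^{\cM(G)}$. Substituting this bound and using $s = \cM(G) + w$,
\begin{equation*}
\Pr[\ab{S} \neq G] \;\leq\; \sum_{r\geq 2} r^{\cM(G) - s} \;=\; \sum_{r\geq 2} r^{-w} \;=\; \zeta(w) - 1 \;\leq\; \beta,
\end{equation*}
where the last inequality uses the hypothesis $\zeta(w) \leq 1+\beta$. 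Hence $s = \cM(G) + w$ uniform samples generate $G$ with probability at least $1-\beta$, which by definition of $d^{\beta}(G)$ gives $d^{\beta}(G) \leq \cM(G) + w$.

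I do not anticipate a serious technical obstacle: the proof is a clean union bound once one notices that maximal subgroups are the only obstruction to generation and that $\cM(G)$ was defined precisely so that $m_r(G)$ is controlled by a polynomial in $r$. The only place requiring a tiny bit of care is the convergence of $\sum_{r\geq 2} r^{-w}$, which is automatic as soon as $w > 1$; but the assumption $\zeta(w) \leq 1 + \beta$ with $\beta < 1$ forces $w > 1$, so the series is finite and the manipulation is valid.
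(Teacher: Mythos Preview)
Your proof is correct and is precisely the standard argument (union bound over maximal subgroups, grouped by index) that underlies Menezes' result; the paper itself does not prove this lemma but simply cites it from \cite{menezes2013random}. One tiny technicality: you set $s := \cM(G) + w$ and speak of ``$s$ samples,'' but $s$ need not be an integer; the clean fix is to run the computation for any integer $s \geq \cM(G) + w$, which still gives $\sum_{r\geq 2} r^{-s} \leq \sum_{r\geq 2} r^{-w} = \zeta(w) - 1 \leq \beta$ and hence $d^{\beta}(G) \leq \lceil \cM(G) + w \rceil$, matching the statement up to an immaterial ceiling that the paper's downstream use absorbs anyway.
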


Lucchini and Moscatiello \cite[Theorem 1.1]{lucchini2020generation} showed a result relating the group parameters $\cM(G)$ and $E(G)$.

\begin{lemma}[Relating $E(G)$ and $\cM(G)$ \cite{lucchini2020generation}]
\label{lem:relate-E-M}
    Let $G$ be any finite group. Then,
    $\lceil \cM(G) \rceil \leq E(G) + 4$.
\end{lemma}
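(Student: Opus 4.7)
The plan is to lower bound $E(G) = \mathbb{E}[T]$, where $T$ is the number of i.i.d.\ uniform samples from $G$ needed to generate $G$, and aim to establish $E(G) \geq \lceil \cM(G) \rceil - 4$, which is equivalent to the desired inequality. Using the standard identity $E(G) = \sum_{k \geq 0} \Pr[T > k]$, the core task becomes showing that $\Pr[T > k]$ remains close to $1$ for all $k$ up to roughly $\cM(G)$ minus a small additive constant.

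First I would select an integer $r^{*} \geq 2$ that nearly attains the supremum in $\cM(G) = \sup_{r \geq 2} \log_r m_r(G)$, so that $m_{r^{*}}(G)$ is arbitrarily close to $(r^{*})^{\cM(G)}$. The event $\{T > k\}$ contains the event $\{N \geq 1\}$, where $N$ counts the maximal subgroups of index $r^{*}$ that contain all $k$ samples. A direct computation gives $\mathbb{E}[N] = m_{r^{*}}(G) \cdot (r^{*})^{-k}$, which is at least $r^{*}$ whenever $k \leq \cM(G) - 1$. I would then apply the Paley-Zygmund inequality to obtain $\Pr[T > k] \geq \Pr[N \geq 1] \geq \mathbb{E}[N]^2 / \mathbb{E}[N^2]$.

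The main obstacle is controlling the second moment
\[
  \mathbb{E}[N^2] = \mathbb{E}[N] + \sum_{M_1 \neq M_2} \bigl( |M_1 \cap M_2| / |G| \bigr)^k,
\]
since distinct maximal subgroups of the same index can have sizable intersections in general and no universal bound like $|M_1 \cap M_2| \leq |G|/(r^{*})^2$ is available. Lucchini and Moscatiello address this by decomposing the collection of maximal subgroups via the \emph{crown}-based / socle structure of the finite group $G$, reducing the intersection analysis to the behavior of maximal subgroups inside the almost-simple quotients controlling each crown, and then invoking the classification of finite simple groups to obtain the uniform second-moment estimate $\mathbb{E}[N^2] \leq (1 + o(1)) \, \mathbb{E}[N]^2$. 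This estimate holds as long as $k$ is bounded away from $\log_{r^{*}} m_{r^{*}}(G)$ by a small additive constant, and is the step I expect to be genuinely hard and to drive the precise constant in the bound.

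Summing the resulting near-$1$ lower bounds on $\Pr[T > k]$ for $k = 0, 1, \ldots, \lceil \cM(G) \rceil - 4$ then yields $E(G) \geq \lceil \cM(G) \rceil - 4$, which rearranges to the claim. The constant $4$ absorbs three sources of slack: the multiplicative loss in Paley-Zygmund needed to drive $\Pr[N \geq 1]$ close to $1$, the gap between $\cM(G)$ and the nearest achievable $\log_{r^{*}} m_{r^{*}}(G)$ for an integer index $r^{*}$, and the ceiling adjustment $\lceil \cM(G) \rceil - \cM(G) \leq 1$.
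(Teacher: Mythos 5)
This lemma is not proved in the paper at all: it is imported verbatim as Theorem 1.1 of Lucchini and Moscatiello \cite{lucchini2020generation}, and the paper's entire ``proof'' is that citation. Your proposal instead tries to reconstruct a proof of that external theorem. Its outer shell is sound: the tail-sum identity $E(G)=\sum_{k\ge 0}\Pr[T>k]$, the observation that if all $k$ samples lie in a common maximal subgroup of index $r^{*}$ then $T>k$, the first-moment computation $\mathbb{E}[N]=m_{r^{*}}(G)\,(r^{*})^{-k}$, and the Paley--Zygmund step are all correct. But the argument has a genuine gap exactly where the theorem is hard: the uniform control of $\mathbb{E}[N^2]$, i.e.\ of the pairwise intersections $\bigl(|M_1\cap M_2|/|G|\bigr)^k$ over distinct maximal subgroups of index $r^{*}$. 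You do not prove this; you assert that Lucchini and Moscatiello prove it via the crown/socle decomposition and CFSG. Since that estimate carries all of the group-theoretic content, delegating it to the very reference the lemma cites makes the proposal circular as a proof of the statement --- it is an outline of (a plausible reading of) the cited proof, not an independent argument.

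A second, more concrete problem is the quantitative bookkeeping that is supposed to produce the additive constant $4$. With $\mathbb{E}[N^2]\le \mathbb{E}[N]+(1+\varepsilon)\,\mathbb{E}[N]^2$, Paley--Zygmund gives
\[
\Pr[T>k]\;\ge\;\frac{\mathbb{E}[N]^2}{\mathbb{E}[N^2]}\;\ge\;\frac{1}{1/\mathbb{E}[N]+1+\varepsilon},
\]
which for $\mathbb{E}[N]\ge r^{*}\ge 2$ (your choice $k\le \cM(G)-1$) is only a constant around $2/3$, not ``close to $1$''. To sum the tail probabilities and lose only an additive constant you need $k$ to sit several units below $\cM(G)$ so that $\mathbb{E}[N]$ is large, you need the correlation term small with explicit constants, and you must then add back the ceiling and the slack from choosing an integer $r^{*}$ near the supremum. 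None of this accounting is carried out, and it is precisely this accounting that determines whether the constant is $4$ or something larger. So even granting the second-moment black box, the proposal does not yet yield the stated inequality; as the paper does, the honest course here is to quote \cite{lucchini2020generation} rather than to sketch its proof.
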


Next we use \Cref{lem:relate-dbeta-M} and \Cref{lem:relate-E-M} to relate $d^\beta(G)$ and $E(G)$. 

\begin{lemma}[Relating $d_\beta(G)$ and $E(G)$]
\label{lem:bound-on-generating-G-whp}
    For all $\beta \in (0,1)$ and for all finite groups $G$,  
    $$d\textstyle^{\beta}(G) \leq E(G) + 5 + \log_2 \big(1+ \frac{1}{\beta} \big).$$
\end{lemma}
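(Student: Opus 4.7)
The plan is to chain together the two previously-stated inequalities, choosing the parameter $w$ in Menezes's bound as a function of $\beta$ so that the Riemann zeta condition is met with room to spare. Concretely, I would set $w := 1 + \log_2(1 + 1/\beta)$, verify that $\zeta(w) \leq 1 + \beta$, and then plug everything in: Menezes gives $d^\beta(G) \leq \cM(G) + w$, and Lucchini--Moscatiello gives $\cM(G) \leq \lceil \cM(G)\rceil \leq E(G) + 4$, so $d^\beta(G) \leq E(G) + 4 + w = E(G) + 5 + \log_2(1+1/\beta)$ as desired.

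The only real content is the verification that this choice of $w$ makes $\zeta(w)$ small enough. Since $\beta \in (0,1)$, we have $w - 1 = \log_2(1 + 1/\beta) > 1$, so $w > 2$ and the standard integral comparison applies. I would split off the $n=2$ term from $\zeta(w)-1$ and bound the tail by an integral:
\[
\zeta(w) - 1 \;=\; 2^{-w} + \sum_{n\geq 3} n^{-w} \;\leq\; 2^{-w} + \int_2^\infty x^{-w}\,dx \;=\; 2^{-w} + \frac{2^{1-w}}{w-1}.
\]
From $w - 1 = \log_2(1+1/\beta)$ I get $2^{1-w} = \beta/(1+\beta)$, and substitution gives
\[
\zeta(w) - 1 \;\leq\; \frac{\beta}{1+\beta}\left(\frac{1}{2} + \frac{1}{\log_2(1+1/\beta)}\right).
\]
So the inequality $\zeta(w) \leq 1 + \beta$ reduces to showing that $\tfrac{1}{2} + \tfrac{1}{\log_2(1+1/\beta)} \leq 1+\beta$, or equivalently, writing $u := 1/\beta \geq 1$, that $\log_2(1+u) \geq 2u/(u+2)$ for all $u \geq 1$.

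The latter is an elementary one-variable inequality: both sides vanish at $u = 0$, and a direct derivative computation shows that $f(u) := \log_2(1+u) - 2u/(u+2)$ has $f'(u) = \frac{1}{(1+u)\ln 2} - \frac{4}{(u+2)^2}$, which a short discriminant check confirms is strictly positive for all $u \geq 0$ (the quadratic $(u+2)^2 - 4(1+u)\ln 2$ has no real roots and is positive at $u=0$). Thus $f$ is increasing on $[0,\infty)$ and nonnegative, completing the verification.

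The main obstacle is choosing $w$ sharply enough: the naive integral bound $\zeta(w)-1 \leq 1/(w-1)$ forces $w \geq 1+1/\beta$, giving only an additive $1/\beta$ dependence on $\beta$ in the final bound. Splitting off $2^{-w}$ before integrating is what produces the logarithmic dependence $\log_2(1+1/\beta)$ and matches the claimed constants; getting the additive constant to be exactly $5$ (rather than, say, $6$) relies on this tighter estimate together with the fact that Lucchini--Moscatiello already saves one by bounding $\lceil \cM(G) \rceil$ instead of $\cM(G)+1$.
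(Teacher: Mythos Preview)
Your proposal is correct and follows the same strategy as the paper: set $w = 1 + \log_2(1+1/\beta)$, verify $\zeta(w) \leq 1+\beta$, and chain the Menezes and Lucchini--Moscatiello bounds. The only difference is in the zeta verification: the paper uses the dyadic (Cauchy-condensation) estimate $\zeta(x) - 1 \leq \sum_{k\geq 1} 2^k \cdot 2^{-kx} = 1/(2^{x-1}-1)$, and since $2^{w-1}-1 = 1/\beta$ this gives $\zeta(w) \leq 1+\beta$ in one line, bypassing your integral comparison and the one-variable calculus check on $\log_2(1+u) \geq 2u/(u+2)$. Both arguments are valid; the paper's is just shorter.
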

\begin{proof} 
  Let $\zeta(x)$ be the Reimann Zeta function. Using \Cref{clm: upper bound on reimann zeta}, we know that for $w = 1 + \log_2 \big(1+ \frac{1}{\beta} \big)$, we have $\zeta(w) \leq 1+ \beta$. Applying the upper bound for $d^\beta(G)$ from \Cref{lem:relate-dbeta-M} using $w = 1 + \log_2 \big(1+ \frac{1}{\beta} \big)$, we get
  
    \[
        d^{\beta}(G)  \leq \cM(G) + 1 + \log_2 \big(1+ \frac{1}{\beta} \big)
        \leq E(G) + 5 + \log_2 \big(1+ \frac{1}{\beta} \big)\,,
    \]
    
where the last equality follows from \Cref{lem:relate-E-M} and the fact that $\cM(G) \leq \lceil \cM(G) \rceil$.
\end{proof}

Now we complete the proof of the main theorem from this section.
\begin{proof}[Proof of \Cref{thm:group-specific-sample-based-erasure-resilient-result}]
First, we analyze \Cref{alg:generated subgroup test} in the standard property testing model (without adversarial manipulations). The tester always accepts every homomorphism $f \in \mathsf{HOM}(G,H)$ because it only rejects if it finds an inconsistency. 

Now assume $f$ is $\eps$-far from $\mathsf{HOM}(G,H)$.  In this case, there are two bad events when \Cref{alg:generated subgroup test} incorrectly accepts: $B_1$ when $S$ does not generate $G$ and $B_2$ when $S$ generates $G$ but the algorithm still accepts $f$. By \Cref{lem:bound-on-generating-G-whp} with $\beta = \frac 1{12}$, one needs at most $E(G) + 5 + \log_2(13) \leq E(G) + 9$ samples from $G$ to generate $G$ with probability at least $\frac{11}{12}$. Since \Cref{alg:generated subgroup test} uses  $E(G) + 9$ samples, $\Pr{[B_1]} \leq \frac 1{12}$. Now suppose that $S$ generates $G$. If no unique homomorphism $h$ exists, the algorithm rejects. Otherwise, $f$ is $\eps$-far from $h$ as $f$ is $\eps$-far from $\mathsf{HOM}(G,H)$. The probability $f$ and $h$ agree on all $\frac 3\eps$ random checks is at most $(1-\eps)^{3/\eps} \leq e^{-3} < \frac 1{12}$. Thus, $\Pr[B_2] \leq \frac 1 {12}$. The total acceptance probability is at most $\frac 1 6$.

\paragraph{Query complexity.} The number of queries made by \Cref{alg:generated subgroup test} is $E(G) + 9 + \frac 3 \eps = O(\frac 1 \eps + E(G))$.

\paragraph{Manipulation Resilience.} We proved that \Cref{alg:generated subgroup test} is a sample-based $\eps$-tester for homomorphisms from $G \to H$ that succeeds with probability at least $\frac 5 6$. By
\Cref{thm: sample-based-to-erasure-resilient}, \Cref{alg:generated subgroup test} also works in presence of a $t$-online-manipulation adversary for $t \leq c \cdot \min \set{\eps^2, 1/E(G)^2} \cdot \card{G} $ and succeeds with probability at least $\frac 23$ for some constant $c> 0$.
\end{proof}

\subsection{Bounds for prime fields}\label{sec:prime-fields}
In this section, we prove \Cref{thm:prime-fields-p=q,thm:prime-fields-lower-bound}, which give a homomorphism tester for $G = \F_p^n$ and $H = \F_p^r$
and the corresponding lower bound $G = \F_p^n$ and $H = \F_p$.
The tester that satisfies \Cref{thm:prime-fields-p=q}
is \Cref{alg: Online erasure resilent coeffs test} which is based on the \CoeffsTestInText (\Cref{alg: random coeffs test}). This algorithm is similar to \SignsTestInText
(\Cref{alg: random signs test}) but uses coefficients from $[p-1]$ instead of signs.
For any vector $v \in \F_p^n$ and coefficient $\sigma \in [p-1]$, we define $\sigma v$ as the vector $(\sigma v[1], \dots, \sigma v[n])$ where each $v[i] \in \F_p$ for $i \in [n]$ and the operation $\sigma v[i]$ is the result of the multiplication operation of the field $\F_p$.

\begin{algorithm}
\caption{$\CoeffsTest{k}$}
\label{alg: random coeffs test}
\begin{algorithmic}[1]
\Ensure $k \in \N$
\Require Query access to a function $f: \F_p^n \to \F_p^r$.
\State Draw $k$ independent and uniformly random elements
$x_1, x_2, \dots, x_{k}$ from $\F_p^n$.
\State Draw $k$ independent and uniformly random coefficients $\sigma_1, \dots, \sigma_{k}$ from $[p-1]$.
\State Query $f(x_1), f(x_2), \dots, f(x_{k})$ and
$f(a)$, where  $a \gets \sum_{i\in[k]} \sigma_i x_i.$
\State\label{step:coeffs-condition}\textbf{Accept} if $\sum_{i\in [k]} \sigma_i f(x_i) = f(a)$; otherwise, \textbf{reject}.
\end{algorithmic}
\end{algorithm}

$\CoeffsTest{}$ accepts all homomorphisms from $\F_p^n \to \F_p^r$. Next we look at the soundness of this test in \Cref{thm:prime-field-soundness-based-on-eps}.

\begin{theorem}
\label{thm:prime-field-soundness-based-on-eps}
    For $k\in\N$, all $\eps \in (0,1)$, all primes $p$ and $n,r \in \N$, and functions $f:\F_p^n \to \F_p^r$ such that $f$ is $\eps$-far from being a group homomorphism,
        \begin{equation}
         \Pr{[\CoeffsTest{2k} \text{ rejects}]} \geq \min\set{\frac{(2k-3)\cdot \eps}{3}, \frac{1}{16}}.
    \end{equation}
\end{theorem}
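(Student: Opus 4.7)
The plan is to reprise the two-case proof of Theorem~\ref{thm:general-soundness-based-on-eps} (for \SignsTestInText) with minor but essential adaptations, splitting on whether $\eps_f < 1/8$ or $\eps_f \geq 1/8$. For the small-$\eps_f$ case, I would introduce a \emph{fixed coefficients} variant of \CoeffsTestInText, parameterized by $\overline{\sigma}\in[p-1]^{2k}$, and observe that the rejection probability of \CoeffsTestInText equals the average over $\overline{\sigma}$. The two auxiliary claims underlying Lemma~\ref{lem:soundeness-fixed-signs-test-small-eps-general-H} (the base case $p_k \le 1 - k\eps_f/e$ and the inductive step $p_k \le p_{k-1} + \eps_f(1 - 2p_{k-1})$) carry over because: (i) for any nonzero $\sigma \in [p-1]$, scalar multiplication by $\sigma$ is a bijection on $\F_p^r$, so $\sigma f(x) = \sigma g(x) \iff f(x) = g(x)$; and (ii) $a = \sum_i \sigma_i x_i \in \F_p^n$ is uniform when the $x_i$'s are i.i.d.\ uniform and each $\sigma_i \in [p-1]$, since each nonzero scalar acts bijectively on $\F_p^n$. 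These yield the bound $\min\{(2k-3)\eps_f/3, 1/6\}$ exactly as before.

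For the large-$\eps_f$ case I would run the corrector argument. Define $g(a) := \arg\max_{h \in \F_p^r} \Pr_{(\overline{\sigma}, \overline{x}) \in \fixset{a}}[\sum_i \sigma_i f(x_i) = h]$ and reprove the three statements $\eta \le 2\mu$, $\delta \le 2\mu$, and $g \in \mathsf{HOM}(\F_p^n, \F_p^r)$. The crucial adaptation is the cancellation trick: where the signs proof adds $-\tau_i f(y_i)$ to both sides of an equation, the coefficient proof adds $(p-\tau_i) f(y_i)$. Since $\tau \mapsto p - \tau$ permutes $[p-1]$, when $\tau_i$ is uniform so is $p - \tau_i$; thus the two mixed tuples $S_1, S_2$ produced by splitting and recombining $(\overline{\sigma}, \overline{x}), (\overline{\tau}, \overline{y}) \in \fixset{a}$ still correspond to legitimate instances of \CoeffsTestInText. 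Because $\F_p^n$ is abelian, summands can be freely reordered, bypassing the delicate index-ordering bookkeeping of the non-abelian case. The analogs of Claim~\ref{claim:half_the_elements_are_random} (``$k$ of the $2k$ entries are jointly uniform'') and Corollary~\ref{cor:alternative_draw} (``alternative draw'') carry over verbatim: the only property used is that nonzero scalar multiplication is a bijection on $\F_p^n$. Lemma~\ref{lem:corrector_g_homomorphisms} is likewise reproved by constructing three tuples in $\fixset{a}, \fixset{b}, \fixset{a+b}$ via the alternative draw, with the right half of the first chosen to cancel (using $p - \sigma_{k+i}$ in the coefficient) the left half of the second.

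Combining both cases yields $\min\{(2k-3)\eps_f/3, 1/16\} \ge \min\{(2k-3)\eps/3, 1/16\}$ since $\eps_f \ge \eps$, as in Section~\ref{sec:completing-the-soundness-analysis}. The only point requiring real verification, which is mostly notational, is that after the cancellation trick the two mixed tuples both lie in $\fixset{a'}$ for the \emph{same} uniformly random $a' \in \F_p^n$ and have coefficient vectors marginally uniform on $[p-1]^{2k}$. Both facts follow at once from the bijectivity of $x \mapsto \sigma x$ on $\F_p^n$ for every $\sigma \in [p-1]$ together with the abelian structure, so no genuinely new ideas are introduced beyond the substitution ``sign $-$'' $\leadsto$ ``coefficient $p - \sigma$''.
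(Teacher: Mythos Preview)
Your proposal is correct and follows exactly the approach the paper takes: the paper's own proof is a single sentence stating that the argument for Theorem~\ref{thm:general-soundness-based-on-eps} goes through verbatim once each sign sequence in $\{+,-\}^k$ is replaced by a coefficient sequence in $[p-1]^k$. You have correctly identified the two places where something must be checked---that $x\mapsto \sigma x$ is a bijection on $\F_p^n$ and $\F_p^r$ for each $\sigma\in[p-1]$, and that $\tau\mapsto p-\tau$ permutes $[p-1]$ so the cancellation trick in the corrector argument preserves the marginal distribution of coefficients---and both hold.
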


The proof of \Cref{thm:prime-field-soundness-based-on-eps} closely parallels that of \Cref{thm:general-soundness-based-on-eps}, with each use of a sign sequence from $\set{+,-}^k$ replaced with a coefficient sequence from $[p-1]^k$. The rest of the argument is the same.

Next, we prove \Cref{thm:prime-fields-p=q}, following the case-based strategy of \cite{AroraKM25}, as in the proof of \Cref{thm:general_G_H}.
Define the parameter $m := 4 \lceil \log_p t + 15/\eps \rceil + 12$. When $m \leq n/4$, we use \Cref{alg: Online erasure resilent coeffs test}, a lift of the online erasure-resilient tester of \cite{BenEliezerKMR24} to prime fields. Otherwise, we use the sample-based $\GeneratedSubgroupTest$ (\Cref{alg:generated subgroup test}). Our main contribution here is again in the analysis of the distribution of the element $y$ in \Cref{alg: unpredictable coeffs test} where we need to handle random coefficients along with random elements for any group $G$.

\paragraph{Case 1: If $m \leq n/4$.} In this case, we use \Cref{alg: Online erasure resilent coeffs test} which internally runs \Cref{alg: unpredictable coeffs test} a constant number of times.

\begin{algorithm}
\caption{Unpredictable Random Coefficients Test}
\label{alg: unpredictable coeffs test}
\begin{algorithmic}[1]
\Ensure $m \in \N$, $m$ even
\Require Query access to $f: \F_p^n \to \F_p^r$ via $t$-erasure oracle. 
\State Draw $m$ independent and uniformly random elements $x_1, x_2, \dots, x_m$ from $\F_p^n$.
\label{line:online-erasure-resilient-coeffs-test-line-3}
\State Query $f(x_1), f(x_2), \dots, f(x_m)$. \label{line:online-erasure-resilient-coeffs-test-line-4}
\State Draw independent and uniformly random coefficients $\sigma_j$ for all $j \in [m]$ from $[p-1]$.
\State Let $S$ be a uniformly random subset of $[m]$ of size $m/2$.
\State Query $f(y)$ where  $y \gets \sum_{j\in S} \sigma_j x_j$ \label{line:online-erasure-resilient-coeffs-test-line-7}
\State \textbf{Accept} if $\sum_{j \in S} \sigma_j f(x_j) = f(y)$; otherwise \textbf{reject}.
\end{algorithmic}
\end{algorithm}

\begin{algorithm}
\caption{Online Erasure-Resilient Random Coefficients Test}
\label{alg: Online erasure resilent coeffs test}
\begin{algorithmic}[1]
\Ensure $\eps \in (0,1)$, $t \in \N$
\Require Query access to $f: \F_p^n \to \F_p^r$ via $t$-erasure oracle.
\State Set $m = 4 \lceil \log_p t + 15/\eps \rceil + 12$ 
\label{line:online-erasure-resilient-coeffs-test-line-1}
\For{$i \in [48]$ } \label{line:online-erasure-resilient-coeffs-test-line-2}
\State Run \Cref{alg: unpredictable coeffs test} with parameter $m$ and independent random coins with query access to $f$.
\State \textbf{Reject} if \Cref{alg: unpredictable coeffs test} rejects.
\EndFor
\State \textbf{Accept}
\end{algorithmic}
\end{algorithm}

We first prove \Cref{lem: random points linear independence} that shows that if we choose $n/2$ random elements from $\F_p^n$, then they are linearly independent with high probability.

\begin{lemma}
\label{lem: random points linear independence}
For all $n \in \N$, all primes $p$, let $v_1, \dots, v_{\lceil n/2 \rceil}$ be vectors chosen independently and uniformly at random from $\F_p^n$. Then with probability at least $ 1 -  \frac{2}{p^{n/2}} $, the vectors $ v_1, \dots, v_{\lceil n/2 \rceil} $ are linearly independent.
\end{lemma}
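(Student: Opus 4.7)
The plan is to reveal the vectors one by one and apply a union bound. The key observation is that if $v_1, \dots, v_{i-1}$ are linearly independent, they span a subspace of $\F_p^n$ of dimension $i-1$, which contains exactly $p^{i-1}$ vectors. Hence the probability that a fresh uniform vector $v_i$ lies in this span is exactly $p^{i-1}/p^n = p^{i-1-n}$, independently of the past (conditional on the independence of the prefix). This lets us bound the ``bad'' event that $v_i$ is dependent on the earlier vectors for some $i$.

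Formally, I would define the event $B_i$ that $v_1,\dots,v_{i-1}$ are linearly independent but $v_i \in \mathrm{span}(v_1,\dots,v_{i-1})$. The failure event (vectors not linearly independent) is the disjoint union $\bigcup_{i} B_i$, and by the observation above $\Pr[B_i] \le p^{i-1-n}$. A union bound gives
\[
    \Pr[\text{failure}] \le \sum_{i=1}^{\lceil n/2 \rceil} p^{i-1-n} = \frac{p^{\lceil n/2 \rceil} - 1}{(p-1)\,p^n} \le \frac{1}{(p-1)\,p^{\lfloor n/2 \rfloor}}.
\]

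The remaining step is a routine numerical check that this final expression is at most $2/p^{n/2}$ for every prime $p\ge 2$ and every $n\in\N$. I would split into even and odd $n$ and into the cases $p=2$ and $p\ge 3$: in each case the factor $1/(p-1)$ (and the possible half-power $p^{1/2}$ coming from $\lfloor n/2\rfloor$ vs.\ $n/2$ when $n$ is odd) is easily absorbed into the constant $2$, since $\sqrt{p}/(p-1) \le 2$ for all $p \ge 2$. There is no real obstacle here: the only mildly delicate point is tracking the ceiling/floor carefully when $n$ is odd, but the constant $2$ on the right-hand side of the inequality is precisely the slack that accommodates this.
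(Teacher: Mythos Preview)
Your proposal is correct and follows essentially the same approach as the paper: define the ``first failure'' events $B_i$, bound $\Pr[B_i]\le p^{i-1-n}$ via the span count, sum the resulting geometric series, and finish with a routine numerical inequality. The only cosmetic difference is in the final step---the paper uses $\tfrac{1}{p-1}\le \tfrac{2}{p}$ and $\lceil n/2\rceil \le n/2+1$ in one line, while you split into parities of $n$ and invoke $\sqrt{p}/(p-1)\le 2$---but the argument is the same.
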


\begin{proof}
Let $B_1$ be the bad event that $v_1 = 0$. Then $\Pr{[B_1]} = 1/p^n$. For $i\geq 2$, let $B_i$ be the bad event that $v_{i}$ lies in $\mathrm{span}(v_1, \dots, v_{i-1})$ conditioned on the fact that $v_1, \dots v_{i-1}$ are linearly independent. Then $\Pr{[B_i]} = p^{i-1}/p^n$ since $v_1, \dots v_{i-1}$ span a $i-1$ dimensional subspace of $\F_p^n$ when they are linearly independent. The vectors $v_1, \dots v_{\lceil n/2 \rceil }$ are linearly independent when none of the bad events $B_1, \dots B_{\lceil n/2 \rceil}$ happen. By a union bound, the probability that at least one of the bad events happen is at most
\[
\Pr[B_i \text{ is true for at least one } i ] \leq \sum_{i=1}^{\lceil n/2 \rceil}\Pr{[B_i]} = \sum_{i=1}^{\lceil n/2 \rceil} \frac{p^{i-1}}{p^n}.
\]
Since this is a geometric sum, we get
\[
\sum_{i=1}^{\lceil n/2 \rceil} \frac{p^{i-1}}{p^n} = \frac{1}{p^n}\Big(\sum_{i=1}^{\lceil n/2 \rceil} p^{i-1} \Big) = \frac{1}{p^n} \cdot \Big( \frac{p^{\lceil n/2 \rceil} - 1}{p-1}\Big) \leq \frac{ 2 \cdot p^{n/2+1}}{p^{n+1}} = \frac{2}{p^{n/2}},
\]
 where the third inequality follows from $\frac{1}{p-1}\leq \frac{2}{p}$ for all $p\geq 2$ and $\lceil n/2 \rceil \leq n/2 + 1$.

Thus, with probability at least $ 1 - 2/p^{n/2} $, none of the bad events occur, and $ v_1, \dots, v_{\lceil n/2 \rceil} $ are linearly independent.
\end{proof} 

Next, \Cref{lem: unpredictable random coeffs completeness soundness} shows that in the case $k = m/4$, \Cref{alg: unpredictable coeffs test} has the same completeness and soundness as  $\CoeffsTest{2k}$. The proof is the same as the proof of \Cref{lem: unpredictable random signs completeness soundness}.

\begin{lemma}
\label{lem: unpredictable random coeffs completeness soundness}

For all $m \in \N$ divisible by 4, the Unpredictable Random Coefficients Test stated in \Cref{alg: unpredictable coeffs test} with parameter $m$ has the same completeness and soundness guarantees as $\CoeffsTest{2k}$ where $k = m/4$.
\end{lemma}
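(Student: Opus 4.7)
The plan is to reduce the analysis of \Cref{alg: unpredictable coeffs test} to that of $\CoeffsTest{2k}$ by observing that, once we condition on the random subset $S$ chosen in step 4, the accept/reject decision depends on data distributed exactly as in $\CoeffsTest{2k}$ with $2k=m/2$. Concretely, I would first note that the queries $f(x_1),\dots,f(x_m)$ made in line 2 are never consulted by the decision rule in line 6; only the $m/2$ pairs $\{(x_j,\sigma_j) : j \in S\}$ together with the value $f(y)$ enter the comparison $\sum_{j \in S}\sigma_j f(x_j) = f(y)$.

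The key step is a distributional identity. I would fix any realization of $S \subseteq [m]$ with $\card{S}=m/2$ and argue that, since the $m$ pairs $(x_i,\sigma_i)$ drawn in lines 1 and 3 are mutually independent and each uniform over $\F_p^n \times [p-1]$, their restriction to coordinates in $S$ is a tuple of $2k$ i.i.d.\ uniform pairs. After relabeling indices in $S$ as $1,\dots,2k$, the element $y = \sum_{j\in S}\sigma_j x_j$ together with the check $\sum_{j\in S}\sigma_j f(x_j) = f(y)$ becomes verbatim the execution of $\CoeffsTest{2k}$ on $f$. Averaging over the uniform choice of $S$, the distribution of the decision produced by \Cref{alg: unpredictable coeffs test} coincides with that of $\CoeffsTest{2k}$.

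Completeness and soundness then transfer immediately. If $f \in \mathsf{HOM}(\F_p^n,\F_p^r)$, the homomorphism identity forces equality in the line 6 check for every realization of the randomness, so \Cref{alg: unpredictable coeffs test} always accepts, matching the perfect completeness of $\CoeffsTest{2k}$. If $f$ is $\eps$-far from $\mathsf{HOM}(\F_p^n,\F_p^r)$, the rejection probability of \Cref{alg: unpredictable coeffs test} equals that of $\CoeffsTest{2k}$ on the same input, so the soundness bound of \Cref{thm:prime-field-soundness-based-on-eps} carries over unchanged.

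I do not anticipate a serious obstacle here: the statement is a bookkeeping equivalence, directly mirroring the argument for \Cref{lem: unpredictable random signs completeness soundness}. The only subtlety worth stating cleanly is the marginalization step, ensuring that conditioning on $S$ and then averaging preserves the joint distribution of $(\{(x_j,\sigma_j)\}_{j\in S}, y, f(y))$; everything else is immediate.
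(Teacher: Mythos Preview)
Your proposal is correct and follows essentially the same approach as the paper, which simply observes that the decision rule in \Cref{alg: unpredictable coeffs test} depends only on the $m/2$ i.i.d.\ uniform pairs $(x_j,\sigma_j)_{j\in S}$ and the value $f(y)$, hence coincides distributionally with $\CoeffsTest{2k}$ for $2k=m/2$. One small wording issue: the queries $f(x_1),\dots,f(x_m)$ in line~2 \emph{are} consulted in line~6 (specifically those with $j\in S$); what you mean is that only the subset indexed by $S$ affects the decision, and the remaining $m/2$ queries are irrelevant to the accept/reject outcome.
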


Next we will show that  \Cref{alg: unpredictable coeffs test} apart from simulating the \CoeffsTestInText also has some unpredictability. To do that, we first prove \Cref{lem: prime Dx is flat with high prob} which shows that the distribution of the last query $y$ in \Cref{alg: unpredictable coeffs test} is flat with high probability.

\begin{lemma}
\label{lem: prime Dx is flat with high prob}
If $m \leq n/4$, then the distribution of the last query $y$ in \Cref{alg: unpredictable coeffs test} which depends on the $m$ initial queries $X$, denoted as $\cD^X$ is $1/\binom{m}{m/2}(p-1)^{m/2}$-flat except with probability $2/p^m$ over the randomness of $X$.
\end{lemma}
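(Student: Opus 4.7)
The plan is to mirror the structure of the proof of the signs-case lemma (the one that established flatness for \SignsTestInText), replacing ``uniformly random signs in $\{+,-\}$'' by ``uniformly random coefficients in $[p-1]$'' and using linear independence over $\F_p$ in place of the probabilistic distinctness argument. First, I would decompose $\cD^X$ as a mixture. Since the coefficients $\sigma_j$ for $j\notin S$ do not affect $y$, one can write $\cD^X$ as the uniform distribution over the $N:=\binom{m}{m/2}(p-1)^{m/2}$ pairs $(S,(\sigma_j)_{j\in S})$, each contributing the value $y_{S,\sigma}:=\sum_{j\in S}\sigma_j x_j$. If all $N$ of these values are distinct, then $\cD^X$ is uniform on a set of size $N$, which immediately yields the desired $1/N$-flatness.

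Next, I would show that linear independence of $x_1,\dots,x_m$ over $\F_p$ already forces all $N$ sums to be distinct. Suppose $y_{S_1,\sigma^1}=y_{S_2,\sigma^2}$ for some two pairs. Rearranging gives a linear relation $\sum_{j\in[m]} c_j x_j = 0$ with $c_j=\sigma^1_j$ for $j\in S_1\setminus S_2$, $c_j=-\sigma^2_j$ for $j\in S_2\setminus S_1$, $c_j=\sigma^1_j-\sigma^2_j$ for $j\in S_1\cap S_2$, and $c_j=0$ otherwise. Linear independence forces every $c_j=0$. Because every coefficient in $[p-1]$ is a nonzero element of $\F_p$, the first two cases are empty, giving $S_1=S_2$; then the third case forces $\sigma^1=\sigma^2$. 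So if $x_1,\dots,x_m$ are linearly independent, the event ``all $N$ sums distinct'' holds.

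Finally, I would bound the failure probability. The vectors $x_1,\dots,x_m$ are i.i.d.\ uniform on $\F_p^n$, and a standard union bound (exactly as in the proof of the linear-independence lemma already proved in this section) gives
\[
\Pr_X[x_1,\dots,x_m\text{ linearly dependent}] \;\le\; \sum_{i=1}^{m}\frac{p^{i-1}}{p^n} \;\le\; \frac{2}{p^{n-m+1}}.
\]
Under the hypothesis $m\le n/4$ (so in particular $n-m+1\ge m$), this is at most $2/p^m$, which is exactly the advertised failure bound. Combining the three steps gives the lemma.

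I do not expect any serious obstacle: the only subtlety is keeping the three cases in the linear-relation argument straight and ensuring that the hypothesis ``$m\le n/4$'' is used precisely to convert the geometric sum bound $2/p^{n-m+1}$ into the target $2/p^m$. Everything else is a direct transcription of the signs-case proof with $p-1$ choices per coordinate instead of $2$.
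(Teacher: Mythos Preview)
Your proposal is correct and follows essentially the same approach as the paper: both condition on the event that $x_1,\dots,x_m$ are linearly independent over $\F_p$, show this forces the sums $\sum_{j\in S}\sigma_j x_j$ (with $\sigma_j\in[p-1]$ nonzero) to be distinct, and then bound the failure probability by the standard union-bound/geometric-sum computation. The only cosmetic differences are that you treat all pairs $(S,\sigma)$ in one distinctness argument while the paper first decomposes by $S$ and then by $\sigma$, and that the paper cites its $\lceil n/2\rceil$-vector lemma to obtain $2/p^{n/2}\le 2/p^m$ whereas you compute directly for $m$ vectors to obtain $2/p^{n-m+1}\le 2/p^m$.
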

\begin{proof}[Proof of \Cref{lem: prime Dx is flat with high prob}]
Once \Cref{alg: unpredictable coeffs test} chooses $X$ for this iteration, the distribution of $y$, $\cD^X$ is determined by the random choices of the coefficients $\overline{\sigma}$ and the subset $S$. Let $\cD^X$ be the distribution of $y$ conditioned on $X$ and $\cD^X_{S}$ be distribution conditioned on $X$ and the choice of the subset $S \subseteq [m]$ of size $m/2$. Since the subset is chosen uniformly at random, $\cD^X$ is a uniform mixture of the distributions $\cD^X_S$ for all $S \subseteq [m]$ of size $m/2$. Thus we can write 

    \[ \cD^X = \frac{1}{\binom{m}{m/2}} \sum_{\substack{S \subseteq [m]\\ \card{S} = m/2}} \cD^X_{S}.\]

\begin{definition}[Support of $\cD^X_{S}$]
   For all $X \in (\F_p^n)^m$ and for all subsets $S \subseteq [m]$ of size $m/2$, let $Y^X_{S}$ be the support of the distribution $\cD^X_{S}$, which is an affine space consisting of $(p-1)^{m/2}$ elements spanned by $X$ using all possible coefficient sequences of length $m/2$. Note that $Y^X_{S}$ is an affine space with $(p-1)^{m/2}$ elements and not a subspace with $p^{m/2}$ elements as the coefficients are from $[p-1]$ and can't be zero.
\end{definition}

Next we show that if $X$ is linearly independent and $S_1 \neq S_2$ then the supports $Y^X_{S_1}$ and $Y^X_{S_2}$ do not intersect.

\begin{claim}
     \label{clm: prime-fields-support-no-intersect}
        If the set $X$ is linearly independent, then for all subsets $S_1, S_2\subseteq [m]$ of size $m/2$ such that $S_1 \neq S_2$, the supports $Y^X_{S_1}$ and $Y^X_{S_2}$ do not intersect. 
     \end{claim}
     \begin{proof}[Proof of \Cref{clm: prime-fields-support-no-intersect}]
         Without loss of generality assume that there exists $j \in S_1 \setminus S_2$. For the sake of contradiction, assume that there exists an element $v$ which is present both in $Y^X_{S_1}$ and $Y^X_{S_2}$. Then $v$ can be written as two separate linear combinations, one with elements from $\set{x_i}_{i \in S_1}$ and the other with elements from $\set{x_i}_{i \in S_2}$ where the coefficients are from $[p-1]$:

         \[\sum_{i \in S_1} \alpha_i x_i  = v = \sum_{i \in S_2} \beta_i x_i \]
         \begin{equation}
         \label{eq: linear-dependence-of-X}
             \implies \sum_{i \in S_1 \cap S_2} (\alpha_i - \beta_i) x_i + \sum_{i \in S_1 \setminus S_2} \alpha_i x_i + \sum_{i \in S_2 \setminus S_2} (-\beta_i) x_i = 0,
         \end{equation}
         where $-\beta$ is the inverse of $\beta$ in $\F_p$. If $X$ is linearly independent, then the set $\set{x_i}_{i \in S_1 \cup S_2}$ is also linearly independent. Also, since $j \in S_1 \setminus S_2$, the coefficient of $x_j$ in \eqref{eq: linear-dependence-of-X} is not zero. This means that \eqref{eq: linear-dependence-of-X} gives an equation involving linearly independent vectors and scalars not all zero which add up to zero. However this is a contradiction, as this implies that the vectors in $\set{x_i}_{i \in S_1 \cup S_2}$ are linearly dependent. Thus, our initial assumption is wrong and the supports $Y^X_{S_1}$ and $Y^X_{S_2}$ do not intersect.
     \end{proof}

Now, we show that the distribution $\cD^X$ is $1/\binom{m}{m/2}(p-1)^{m/2}$-flat with probability at most $2/p^m$. If $X$ is linearly independent, then using \Cref{clm: prime-fields-support-no-intersect} we see that every element $z$ from $\F_p^n$ can only be present in at most one $Y^X_{S}$ for some subset $S$. Since the coefficient sequence $\overline{\sigma}$ is chosen independently and uniformly at random, the probability that $z$ is chosen under the distribution $\cD^X_{S}$ is at most $\frac{1}{(p-1)^{m/2}}$, which implies that the probability of $z$ being chosen under $\cD^X$ is at most $\frac{1}{{\binom{m}{m/2}} (p-1)^{m/2}}$ for all $z \in \F_p^n$.

 Let $B$ be the event that $\Pr_{y \sim \cD^X}{[y=z]} > \frac{1}{{\binom{m}{m/2}}(p-1)^{m/2}}$ from some element $z \in \F_p^n$. If $X$ is linearly independent, then the event $B$ cannot occur, thus $\Pr_{X}[B]$ is at most the probability of $X$ being linearly dependent. By \Cref{lem: random points linear independence}, since $m \leq n/4 < n/2$, we get
    \[\Pr_{X}[B] \leq \Pr_{X}{[X \text{ is linearly dependent}]} \leq \frac{2}{p^{n/2}} \leq \frac{2}{p^m}.\]
This completes the proof of \Cref{lem: prime Dx is flat with high prob}.
\end{proof}

Now, we are ready to show that \Cref{alg: unpredictable coeffs test} is an unpredictable tester in \Cref{lem: unpredictable coeffs test unpredictability}.

\begin{lemma}
\label{lem: unpredictable coeffs test unpredictability}
If $m \leq n/4$, then \Cref{alg: unpredictable coeffs test} with parameter $m$ is 
$$\Big (m+1,\frac{m}{p^m} + \frac{1}{\binom{m}{m/2}(p-1)^{m/2}}, \frac{2}{p^m} \Big )\text{-unpredictable.}$$
\end{lemma}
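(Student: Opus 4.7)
The plan is to mirror the structure of the proof of \Cref{lem: unpredictable signs test unpredictability} essentially verbatim, with its signs-version ingredient replaced by the coefficients-version counterpart, \Cref{lem: prime Dx is flat with high prob}. The test makes exactly $m+1$ queries: the $m$ queries to $f(x_1),\dots,f(x_m)$ in Line~\ref{line:online-erasure-resilient-coeffs-test-line-4} and the single final query $f(y)$ in Line~\ref{line:online-erasure-resilient-coeffs-test-line-7}. So the task reduces to identifying per-query flatness and error parameters $(\alpha_i,\beta_i)_{i\in[m+1]}$ that satisfy the sequential unpredictability condition of \Cref{def:unpredictable-test} and whose sums match the claimed aggregates.

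For each of the first $m$ queries, Line~\ref{line:online-erasure-resilient-coeffs-test-line-3} draws $x_i$ independently and uniformly from $\F_p^n$, independently of everything queried before. Hence the conditional distribution of $x_i$ given the previous queries is the uniform distribution on $\F_p^n$, which is $1/p^n$-flat with probability $1$. I would therefore take $\alpha_i=1/p^n$ and $\beta_i=0$ for each $i\in[m]$. For the $(m+1)$-st query, since the hypothesis $m\leq n/4$ is in force, \Cref{lem: prime Dx is flat with high prob} applies directly: the conditional distribution of $y$ given $X=(x_1,\dots,x_m)$ is $\frac{1}{\binom{m}{m/2}(p-1)^{m/2}}$-flat except on a set of $X$'s of probability at most $2/p^m$. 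Setting $\alpha_{m+1}=\frac{1}{\binom{m}{m/2}(p-1)^{m/2}}$ and $\beta_{m+1}=\frac{2}{p^m}$ then fulfills the sequential unpredictability requirement at the last step.

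Summing up and using $p^n\geq p^m$ (which holds because $m\leq n/4\leq n$ and $p\geq 2$) gives the aggregate bounds
$$\sum_{i\in[m+1]}\alpha_i = \frac{m}{p^n}+\frac{1}{\binom{m}{m/2}(p-1)^{m/2}} \leq \frac{m}{p^m}+\frac{1}{\binom{m}{m/2}(p-1)^{m/2}}$$
and $\sum_{i\in[m+1]}\beta_i = \frac{2}{p^m}$, matching the statement exactly.

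Since all of the real technical content has been bundled into \Cref{lem: prime Dx is flat with high prob} (and in turn into \Cref{lem: random points linear independence}, which handles the almost-sure linear independence of $X$), there is essentially no remaining obstacle. The only subtlety is to be explicit that the ``independence of history'' claim for the first $m$ queries is immediate from the algorithm description, and to track the factor of $m$ in the first sum through the trivial bound $p^n\geq p^m$ so that the aggregate $\alpha$ lands on $\frac{m}{p^m}$ rather than $\frac{m}{p^n}$.
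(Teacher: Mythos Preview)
Your proposal is correct and essentially identical to the paper's own proof: you set $\alpha_i=1/p^n$, $\beta_i=0$ for $i\in[m]$, invoke \Cref{lem: prime Dx is flat with high prob} for the final query to get $\alpha_{m+1}=\frac{1}{\binom{m}{m/2}(p-1)^{m/2}}$ and $\beta_{m+1}=2/p^m$, and then use $m\le n/4\le n$ to bound $m/p^n\le m/p^m$. There is nothing to add.
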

\begin{proof}
Firstly, note the \Cref{alg: unpredictable coeffs test} makes exactly $m+1$ queries. Consider all queries $x_i$ for $i \in [m]$ i.e., except the last one. From line 1 of \Cref{alg: unpredictable coeffs test}, we know that irrespective of the previous $i-1$ queries, $x_i$ is chosen independently and uniformly at random from $\F_p^n$. Thus, the distribution of $x_i$ is the uniform distribution over $\F_p^n$ which is $1/p^n$-flat with probability 1. Thus $\alpha_i = 1/p^n$ and $\beta_i = 0$ for all $i \in [m].$

Next we analyze what happens for the last query $y$. Let $\cD^X$ be the final distribution of $y$ conditioned on $X$. Using \Cref{lem: prime Dx is flat with high prob} we know that $\cD^X$ is $1/\binom{m}{m/2}(p-1)^{m/2}$-flat except with probability $2/p^m$ over the randomness of $X$. Thus $\alpha_{m+1} = 1/\binom{m}{m/2}(p-1)^{m/2}$ and $\beta_{m+1} = 2/p^m$.  Thus \Cref{alg: unpredictable coeffs test} is unpredictable with $$\sum_{ i \in [m+1]} \alpha_i = \frac{m}{p^n} + \frac{1}{\binom{m}{m/2}(p-1)^{m/2}} \leq \frac{m}{p^m} + \frac{1}{\binom{m}{m/2}(p-1)^{m/2}} = \alpha \text{ and } \sum_{ i \in [m+1]} \beta_i = \frac{2}{p^m} = \beta.$$
using the fact $m \leq n/4 \leq n \implies p^m  \leq p^n \implies 1/p^n \leq 1/p^m$.
\end{proof}

Finally, we show that \Cref{alg: Online erasure resilent coeffs test} satisfies \Cref{thm:prime-fields-p=q} for case 1, when $m \leq n/4$. 

\begin{proof}[Proof of \Cref{thm:prime-fields-p=q} when $m \leq n/4$]

The proof of this is similar to that of \Cref{thm:general_G_H} when $2^m \leq \card{G}^{1/4}$. We will use \Cref{lem: generic amplification} to show that \Cref{alg: Online erasure resilent coeffs test} is an online erasure resilient tester. Let $m$ be as defined in line 1 of \Cref{alg: Online erasure resilent coeffs test}, i.e., $m = 4 \lceil \log_p t + 15/\eps \rceil + 12$.

Let \Cref{alg: unpredictable coeffs test} be the base test needed in the statement of \Cref{lem: generic amplification}. Using \Cref{lem: unpredictable random coeffs completeness soundness}, we know that \Cref{alg: unpredictable coeffs test} with parameter $m$ has the same completeness and soundness as $\CoeffsTest{2k}$ where $k = m/4$.  Thus, by \Cref{thm:prime-field-soundness-based-on-eps}, \Cref{alg: unpredictable coeffs test} accepts all homomorphisms in $\mathsf{HOM}(\F_p^n,\F_p^r)$ and rejects functions that are $\eps$-far from $\mathsf{HOM}(\F_p^n,\F_p^r)$ with probability at least $\min\set{\frac{(m-6)\cdot \eps}{6}, \frac{1}{16}} \geq 1/16$ as 

\[ \frac{(m-6) \cdot \eps}{6}  = \frac{\big(4 \lceil \log_p t + 15/\eps \rceil + 6 \big) \cdot \eps}{6} > \frac{(1/\eps) \cdot \eps}{6} = 1/6 > 1/16.\]
Thus, the quantity $p_w(\eps) = 1/16$ as needed for using \Cref{lem: generic amplification}. Using \Cref{lem: unpredictable coeffs test unpredictability}, we know that \Cref{alg: unpredictable coeffs test} is $(q, \alpha, \beta)$-unpredictable where $$q = m+1, \alpha = \frac{m}{p^m} + \frac{1}{\binom{m}{m/2}(p-1)^{m/2}}, \beta = \frac{2}{p^m}.$$

To apply \Cref{lem: generic amplification}, we need the following condition (inequality \eqref{eq: amplification-condition} in the statemnt of \Cref{lem: generic amplification}) to be satisfied.
\[\alpha qt \cdot \Big \lceil \frac{3}{p_w(\eps)} \Big \rceil + \beta \le \frac{p_w(\eps)}{12} \]

Now we show that the above condition is satisfied by the values $\alpha, \beta, p_w(\eps)$ that we derived above.
\begin{align*}
        \alpha qt \cdot \Big \lceil \frac{3}{p_w(\eps)} \Big \rceil + \beta &= \Big( \frac{m}{p^m} + \frac{1}{\binom{m}{m/2}(p-1)^{m/2}}\Big) \cdot (m+1)t \cdot 48 + \frac{2}{p^m}\\
        & \leq \frac{48t(m+1)m}{p^m} + \frac{48t(m+1)}{\binom{m}{m/2}(p-1)^{m/2}} + \frac{2}{p^m} \\
        & \leq \frac{48t(m+1)m}{p^m} + \frac{48t(m+1)\sqrt{2m}}{2^m (p-1)^{m/2}} + \frac{2}{p^m} \leq \frac{256 t(m+1)m}{2^m(p-1)^{m/2}}.
    \end{align*}

where we use ${\binom{m}{m/2}} \geq 2^m/\sqrt{2m}$ in the second inequality and for the third inequality, we use $\sqrt{2m} \leq m$ for all $m>2$ and $2^m(p-1)^{m/2} \leq p^m$ for all $m$.

Now using $m = 4 \lceil \log_p t + 15/\eps \rceil + 12$, we get $ t \leq p^{m/4 - 15/\eps - 3} \leq p^{m/4 - 18}$. Thus,
\begin{equation}
\label{eq: prime-fields-erasure-prob}
    \frac{256 \cdot t(m+1)m}{2^m(p-1)^{m/2}} \leq \frac{256 \cdot (m+1)m \cdot p^{m/4}}{2^m(p-1)^{m/2} p^{18}}.
\end{equation}

 \begin{itemize}
        \item If $p=2$, the above quantity in RHS of \eqref{eq: prime-fields-erasure-prob} reduces to $$\displaystyle{\frac{256 \cdot (m+1)m }{2^{3m/4 +18}}} \leq \frac{256 \cdot 3}{2^{18}} \leq \frac{3}{1024} \leq \frac{1}{12 \cdot 16} = \frac{p_w(\eps)}{12}$$ where we use $\frac{(m+1)m}{2^{\frac{3m}{4}}} \leq 3$ for all $m$. 

        \item  Otherwise, if $p>2$, then $p \leq (p-1)^2 \implies p^{m/4} \leq (p-1)^{m/2}$ for all $m$. Thus, the RHS of \eqref{eq: prime-fields-erasure-prob} at most  $$\displaystyle{\frac{256 \cdot (m+1)m}{2^m p^{18}}} \leq \frac{256 \cdot 3}{2^{18}} \leq \frac{3}{1024} \leq \frac{1}{12 \cdot 16} = \frac{p_w(\eps)}{12}$$ where we use $\frac{(m+1)m}{2^{m}} \leq \frac{(m+1)m}{2^{\frac{3m}{4}}} \leq 3$ for all $m$. 
    \end{itemize}

Thus using \Cref{lem: generic amplification}, we conclude that \Cref{alg: Online erasure resilent coeffs test} that runs $ \lceil 3/p_w(\eps) \rceil = 48$ independent iterations of \Cref{alg: unpredictable coeffs test} with $m = 4 \lceil \log_p t + 15/\eps \rceil + 12$ is an online manipulation resilient tester.
\end{proof}

\paragraph{Case 2: If $m > n/4$.} In this case, we use $\GeneratedSubgroupTest$ (\Cref{alg:generated subgroup test}) from \Cref{sec:group-specif}.

\begin{proof}[Proof of \Cref{thm:prime-fields-p=q} when $m > n/4$]
The size of the minimum size of any generating set for $\F_p^n$, i.e., $d(\F_p^n) = n$. This follows from the fact that any set of less than $n$ vectors from $\F_p^n$ cannot span the whole space, however the $n$ unit vectors along each coordinate can span $\F_p^n$. 

Since the additive group structure of $\F_p^n$ is Abelian the expected number of random elements to generate $\F_p^n$, i.e., $E(\F_p^n)$ is at most $n+3$ \cite{pomerance2002expected}.
Using \Cref{thm:group-specific-sample-based-erasure-resilient-result}, we get that \Cref{alg:generated subgroup test} is a $t$-online manipulation-resilient tester for $t \leq c \cdot \min \set{\eps^2, 1/n^2} p^n$ for some constant $c >0$ which uses $O(1/\eps + n)$ which is at most $O(1/\eps + m) = O(1/\eps + \log_p t)$ queries. 
\end{proof}

Next we prove \Cref{thm:prime-fields-lower-bound} which shows that \Cref{thm:prime-fields-p=q} is essentially tight when the range is $\F_p$. The proof of \Cref{thm:prime-fields-lower-bound} closely follows the proof structure of \cite[Theorem 1.3]{KalemajRV23} which proves a lower bound of $\Omega(\log t)$ on the query complexity of online erasure-resilient testing of linearity of functions from $\F_2^n \to \F_2$. The proof uses Yao's minimiax principle for the online-erasures model from \cite[Corollary 9.4]{KalemajRV23} stated in \Cref{thm: yao-minimax-online-erasure-model}.

\begin{definition}[$\cD$-view]
Given a $q$-query deterministic algorithm $\mathcal{A}$, let $a(x)$ be the string of $q$ answers that $\mathcal{A}$ receives when making queries to input object $x$. For a distribution $\mathcal{D}$ and adversarial strategy $\mathcal{S}$, let $\mathcal{D}$-view be the distribution on strings $a(x)$ where the input object $x$ is sampled from $\mathcal{D}$ and accessed via a $t$-online-erasure oracle using strategy $\mathcal{S}$.
\end{definition}

\begin{theorem}[Yao's minimax principle for the online-erasures model \cite{KalemajRV23}]
\label{thm: yao-minimax-online-erasure-model}
To prove a lower bound $q$ on the worst-case query complexity of online-erasure-resilient randomized algorithms for a promise decision problem, it is enough to give

\begin{itemize}
    \item a randomized adversarial strategy $S$,

    \item a distribution $\cD^+$ on positive instances of size $n$,

    \item a distribution $\cD^-$ on instances of size $n$ that are negative with probability at least $6/7$
\end{itemize}
such that the statistical distance between $\cD^+$-view and $\cD^-$-view is at most $1/6$ for every deterministic $q$-query algorithm that accesses its input via an online-erasure oracle using strategy $S$.
\end{theorem}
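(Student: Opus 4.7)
The plan is to prove the principle by contrapositive: assume that for every deterministic $q$-query algorithm $\cA_r$ accessing its input via the online-erasure oracle using the given randomized strategy $S$, the statistical distance between $\cD^+$-view and $\cD^-$-view is at most $1/6$, and derive that no randomized $q$-query online-erasure-resilient algorithm can solve the promise problem with success probability $2/3$. This is the standard Yao template, and the only nonstandard ingredient is the careful handling of the randomized oracle strategy $S$.

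First I would decompose any randomized $q$-query algorithm $\cA$ as a mixture $\cA = \sum_r p_r \cA_r$ over its random tape, with each $\cA_r$ deterministic. Because $\cA$ must be resilient against every adversarial strategy, in particular against $S$, the \emph{worst-case} success probability of $\cA$ under $S$ is at least $2/3$ on every input; averaging over inputs, $\cA$ succeeds with probability $\geq 2/3$ on $\cD^+$ and on $\cD^-$ individually. On $\cD^+$ (all positive), this means $\Pr[\cA \text{ accepts}] \geq 2/3$. On $\cD^-$, which is negative with probability at least $6/7$, acceptance can occur either because the instance is positive (probability at most $1/7$) or because $\cA$ errs on a negative instance (conditional probability at most $1/3$), giving $\Pr[\cA \text{ accepts}] \leq 1/7 + (6/7)(1/3) = 3/7$. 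The gap is therefore at least $2/3 - 3/7 = 5/21 > 1/6$.

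Next I would bound this same gap for each deterministic $\cA_r$ by the statistical distance of views. For a fixed pair $(\cA_r, S)$, the $i^{\text{th}}$ query of $\cA_r$ is a deterministic function of the previously received answers, and the oracle's response (erasure or true value) is determined by the input $x \sim \cD$ together with $S$'s internal coins. Hence the joint distribution of queries and the accept/reject decision is a deterministic function of the $q$-answer string, i.e., the $\cD$-view. By the data-processing inequality, $\bigl|\Pr[\cA_r \text{ accepts on } \cD^+] - \Pr[\cA_r \text{ accepts on } \cD^-]\bigr|$ is at most the total variation distance between $\cD^+$-view and $\cD^-$-view, which by assumption is at most $1/6$. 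Averaging over $r$ with weights $p_r$ preserves the bound: $\bigl|\Pr[\cA \text{ accepts on } \cD^+] - \Pr[\cA \text{ accepts on } \cD^-]\bigr| \leq 1/6 < 5/21$, contradicting the gap computed above.

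The main obstacle is the interactive nature of the online-erasure model: the oracle's answers depend on $S$'s evolving choice of erasures, and $S$ is itself randomized and adaptive to the transcript. The care required is to define the probability space as the product of the input draw from $\cD$ and the adversary's coin tosses, and to verify that for fixed deterministic $\cA_r$ this joint space maps measurably onto the answer sequence (the view), so that the final decision is a function of the view alone. Once this is pinned down, the data-processing step and the averaging step are routine, and the constants $6/7$ and $1/6$ are precisely calibrated so that $2/3 - (1/7 + (6/7)/3) = 5/21$ strictly exceeds the permitted statistical distance $1/6$.
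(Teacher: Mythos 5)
The paper does not prove this statement: it is imported verbatim from \cite{KalemajRV23} (their Corollary 9.4) and used as a black box, so there is no in-paper proof to compare your argument against. On its own merits, your proof is correct and is the standard Yao-style argument: decompose the randomized tester $\cA$ over its random tape into deterministic $q$-query algorithms $\cA_r$, note that for each $\cA_r$ the accept/reject decision is a deterministic function of the answer string (the view), apply data processing to bound the accept-probability gap of each $\cA_r$ between $\cD^+$ and $\cD^-$ by the statistical distance $1/6$, average over $r$, and contradict the gap $2/3 - \left(\tfrac17 + \tfrac67\cdot\tfrac13\right) = \tfrac{5}{21} > \tfrac16$ forced by correctness. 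One point you gloss over slightly: since $S$ is a \emph{randomized} strategy, the claim that $\cA$ succeeds with probability at least $2/3$ on every input when run against $S$ needs the (one-line) observation that a randomized strategy is a mixture over deterministic strategies, against each of which the resilience guarantee holds; with that noted, the argument is complete.
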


\begin{proof}[Proof of \Cref{thm:prime-fields-lower-bound}] Let $\cD^+$ be the uniform distribution over all homomorphisms from $\F_p^n \to \F_p$ and $\cD^-$ be the uniform distribution over all functions from $\F_p^n \to \F_p$.

\begin{claim}
\label{clm: random-function-is-from-linear}
The probability that a function $f$ sampled from $\cD^-$ is $\frac{p-1}{2p}$-far from being a homomorphism is at least $6/7$. 
\end{claim}
\begin{proof}
Fix a homomorphism $h:\F_p^n \to \F_p $. For each $x \in \F_p^n $, define the indicator random variable $Z_x = 1$ if $f(x) \ne h(x)$ and 0 otherwise. Since $f$ is sampled from $\cD^-$, the value $f(x)$ is chosen uniformly at random over $\F_p$, independently for each $x$. Thus,  \[\E{Z_x} = \Pr{[f(x) \ne h(x)]} = \frac{p-1}{p}.\]
The distance between $f$ and $h$ is defined as 
$\displaystyle\mathsf{dist}(f, h) = \frac{1}{p^n} \sum_{x \in \mathbb{F}_p^n} Z_x.$
Applying Hoeffding's inequality to the sum of $ p^n $ independent bounded random variables $Z_x \in \set{0,1} $ with $\delta = \frac{p-1}{2p}$, we get
\[
\Pr\left[ \mathsf{dist}(f, h) \leq \frac{p-1}{2p}\right] =
\Pr\left[ \mathsf{dist}(f, h) \leq \frac{p-1}{p} - \delta \right] \leq \exp(-2\delta^2 p^n) = \exp(-\frac{(p-1)^2p^{n-2}}{2}).
\]
Note that the number of group homomorphisms $ h: \mathbb{F}_p^n \to \mathbb{F}_p $ is exactly $ p^n $, since each such map corresponds to a linear function determined by $ n $ coefficients in $ \mathbb{F}_p $. Taking a union bound over all $ p^n $ homomorphisms, we obtain
\[
\Pr\left[\exists h \text{ such that } \mathsf{dist}(f, h) \leq \frac{p-1}{2p} \right] \leq p^n \cdot \exp \Big(-\frac{(p-1)^2p^{n-2}}{2}\Big).
\]
This is exponentially small in $ p^n $. For large enough $n$, this probability is at most $\frac 17$.
\end{proof}

We fix the following strategy for a $t$-online-erasure adversary: after each query is answered, erase a subset $T$ of $t$ elements from the subspace spanned by the queries so far. If there are multiple possibilities for $T$, break ties arbitrarily. 
Notice that if at most $\log_p t$ queries are made, then the adversary can erase all the elements from the subspace spanned by the queries, except the queries themselves.

Let $\cA$ be a deterministic algorithm that makes $q \leq \log_p t$ queries. Let $\cP^+$ and $\cP^-$ be two random processes that interact with $\cA$ while $\cA$ makes it queries. For each query, if the query was erased by the adversary, then both $\cP^+$ and $\cP^-$ return $\perp$, otherwise they return a uniformly random value from $\F_p$.  From the above description, it is clear that the distribution over the transcripts when $\cA$ interacts with $\cP^+$ is identical to the distribution over the transcripts when $\cA$ interacts with $\cP^-$.

Next we describe how $\cP^+$ and $\cP^-$ assigns values to the \emph{undisclosed} (i.e., unqueried or erased) elements. 
\begin{itemize}
    \item \textbf{$\cP^-$ outputs a function sampled from $\cD^-$}. Here $\cP^-$ assigns a uniformly random value from $\F_p$ to each undisclosed element. Thus, $\cP^-$ outputs a random function from~$\cD^-$. 

    \item \textbf{$\cP^+$ outputs a function sampled from $\cD^+$}. Let $V$ be the set of queries among the $q$ queries $\cA$ made, where it received a value from $\F_p$ not $\perp$. The adversary's strategy ensures that the queries in $V$ are linearly independent. Then $\cP^+$ completes $V$ to a basis $B$ that spans $\F_p^n$. Let the elements of $B$ be denoted as $x_1, \dots, x_n$. Next $\cP^+$ assigns a uniformly random value from $\F_p$ to each element in $B \setminus V$. Since $B$ spans $\F_p^n$, all elements in $\F_p^n$ can be written as a linear combination of elements from $B$. For each undisclosed element $y$ in $ \F_p^n \setminus B$, let $y = \sum_{i \in [n]} a_i x_i$ where $a_i \in \F_p$, then $\cP^+$ assigns the value $\sum_{i \in [n]} a_i f(x_i)$ to $y$. 
    Thus, $\cP^+$ outputs a random function from $\cD^+$. 
\end{itemize}

Thus, we conclude that $\cD^+$-view is equal to $\cD^-$-view. All functions in the support of $\cD^+$ are homomorphisms, whereas, by \Cref{clm: random-function-is-from-linear}, a function $f \sim \cD^-$ is $\frac{p-1}{2p}$-far from any homomorphism with probability $\frac 6 7$. Applying \Cref{thm: yao-minimax-online-erasure-model} gives the desired lower bound of $\Omega(\log_p t)$.
\end{proof}

\bibliographystyle{alpha}
\bibliography{refs}

\begin{appendix}

\section{Miscellaneous proofs}

\begin{claim}
\label{clm: sum is random}
    For all $k \in \N$, all sign sequences $\overline{\sigma} = (\sigma_1, \sigma_2, \dots, \sigma_k) \in \set{+,-}^k$ and all finite groups $G$, if $x_1, \dots, x_k$ are chosen independently and uniformly at random from $G$, then the sum $\sum_{i \in [k]} \sigma_i x_i$ is a uniformly distributed element in $G$.
\end{claim}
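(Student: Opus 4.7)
The plan is to condition on the first $k-1$ variables and use the fact that in any group, both negation and left-translation are bijections. Specifically, I would fix an arbitrary realization of $x_1, \dots, x_{k-1}$, let $s := \sum_{i=1}^{k-1} \sigma_i x_i \in G$, and observe that the total sum is then $s + \sigma_k x_k$ (using the increasing-index convention).

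Next, I would argue that, for $x_k$ uniformly random over $G$, the element $\sigma_k x_k$ is also uniformly distributed over $G$: if $\sigma_k = +$ this is immediate, and if $\sigma_k = -$ it follows from the fact that the inversion map $x \mapsto -x$ is a bijection of $G$. Then, since left-translation $y \mapsto s + y$ is a bijection of $G$, the element $s + \sigma_k x_k$ is uniformly distributed over $G$ as well. Because this holds for every fixing of $x_1, \dots, x_{k-1}$, the total sum $\sum_{i \in [k]} \sigma_i x_i$ is uniformly distributed over $G$ unconditionally.

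There is no real obstacle here; the only subtlety worth mentioning is that the argument works even when $G$ is non-abelian, precisely because the ordered sum puts $\sigma_k x_k$ at the rightmost position, so the composition ``negate, then translate by $s$ on the left'' remains a bijection of $G$ regardless of commutativity. I would write this as a one-paragraph proof, optionally phrased as an induction on $k$ where the base case $k=1$ reduces to the observation that $\sigma_1 x_1$ is uniform when $x_1$ is.
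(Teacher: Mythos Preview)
Your proposal is correct and follows essentially the same approach as the paper: both condition on $x_1,\dots,x_{k-1}$ and argue that the remaining randomness in $x_k$ makes the sum uniform, the paper by noting that exactly one value of $x_k$ hits any target $a$, and you by composing the bijections ``negate (if $\sigma_k=-$)'' and ``left-translate by $s$''. The two arguments are equivalent reformulations of the same idea, and your explicit remark about non-abelian $G$ is a nice clarification.
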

\begin{proof}
    Let $a$ be an element in $G$. 
        Then
\begin{align}
    \label{eq: sum is random}
    \Pr\Big[\sum_{i \in [k]} \sigma_i x_i = a\Big] = \Exp_{x_1, \dots, x_{k-1}}{\Big[\Pr_{x_k}\Big[\sum_{i \in [k]} \sigma_i x_i = a\Big]\Big]}.
    \end{align}
Fix the values of the first $k-1$ elements, so the only randomness is in $x_k$. The condition
$\sum_{i \in [k]} \sigma_i x_i = a$
is equivalent to
$ \sigma_k x_k = \sum_{i \in [k]}\sigma_{k-i}x_{k-i} + a$. Therefore, there is only one choice of $x_k$ out of $|G|$ that satisfies the equality: the element $\sum_{i \in [k]}\sigma_{k-i}x_{k-i} + a$ if $\sigma_k=+$ and the inverse of this element if $\sigma_k=-$.
 Thus, the probability term inside the expectation in \eqref{eq: sum is random} is $1/\card{G}$.
Taking the expectation over these terms yields
$\Pr\left[\sum_{i \in [k]} \sigma_i x_i = a\right] = 1/|G|.$

\end{proof}

The following fact is folklore. For the sake of completeness, we add a short proof of it .
\begin{fact}
\label{fact:binomial rv is even}
Let $ X \sim \mathrm{Bin}(n, p) $. Then the probability that $ X $ is even is given by:
\[
\Pr(X \text{ is even}) = \frac{1 + (1 - 2p)^n}{2}.
\]
\end{fact}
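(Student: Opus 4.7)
The plan is to use the standard generating-function identity obtained by evaluating the binomial expansion of $(q+p)^n$ at the two sign choices for $p$, where $q := 1-p$. More precisely, I would start from the observation that
\[
\Pr(X = k) = \binom{n}{k} p^k q^{n-k},
\]
so that $\Pr(X \text{ is even})$ is exactly the sum of $\binom{n}{k} p^k q^{n-k}$ over even $k \in \{0,1,\dots,n\}$. The goal is to isolate these even-$k$ terms using a parity trick.

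The key step is to write down two binomial expansions:
\[
(q+p)^n = \sum_{k=0}^n \binom{n}{k} p^k q^{n-k} = 1,
\qquad
(q-p)^n = \sum_{k=0}^n \binom{n}{k} (-p)^k q^{n-k} = (1-2p)^n.
\]
Adding these, the odd-$k$ terms cancel and the even-$k$ terms double, yielding
\[
2 \sum_{\substack{k=0 \\ k \text{ even}}}^{n} \binom{n}{k} p^k q^{n-k} = 1 + (1-2p)^n.
\]
Dividing by $2$ gives $\Pr(X \text{ is even}) = \frac{1 + (1-2p)^n}{2}$, which is exactly the claimed identity.

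There is no real obstacle here; the only subtlety is remembering that $q = 1-p$ so that $q+p = 1$ and $q-p = 1 - 2p$, which is precisely what produces the closed form on the right-hand side. The proof is a two-line computation once the two expansions are written down in parallel, so the write-up can be kept very short.
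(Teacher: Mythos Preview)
Your proof is correct and takes essentially the same approach as the paper: the paper phrases it via the generating function $f(z) = (q+pz)^n$ evaluated at $z=1$ and $z=-1$, which is exactly your two binomial expansions, and then adds them to isolate the even-$k$ terms.
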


\begin{proof}
Let $ q = 1 - p $. Consider the function $f(z)$ defined as,
\[
f(z) = (q + pz)^n = \sum_{k=0}^{n}  \binom{n}{k} p^k q^{n - k} z^k.
\]
Evaluating at $ z = 1 $ and $ z = -1 $, we get:
\[
f(1) = (p + q)^n = 1, \quad f(-1) = (q - p)^n = (1 - 2p)^n.
\]
Observe that
\begin{align*}
f(1) + f(-1) &= \sum_{k=0}^{n} \binom{n}{k} p^k q^{n - k} (1 + (-1)^k) \\
&= 2 \sum_{\substack{k=0 \\ k \text{ even}}}^{n} \binom{n}{k} p^k q^{n - k}.
\end{align*}
Therefore,
\[
\Pr(X \text{ is even}) = \sum_{\substack{k=0 \\ k \text{ even}}}^{n} \binom{n}{k} p^k q^{n - k} = \frac{f(1) + f(-1)}{2} = \frac{1 + (1 - 2p)^n}{2}.\qedhere
\]
\end{proof}

\begin{claim}
\label{clm: upper bound on reimann zeta}
    Let $\zeta(x)$ be the Reimann Zeta function. Then $ \zeta(x) \leq 1 + \frac{1}{2^{x-1} - 1}$ for all $x \geq 2$.      Moreover, for all $\beta \in (0,1)$, if $x = 1 + \log_2 (1+ 1/\beta)$, then $\zeta(x) \leq 1 + \beta$.
\end{claim}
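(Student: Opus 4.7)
My plan is to prove the first inequality via Cauchy-style dyadic grouping of the tail $\zeta(x) - 1 = \sum_{n \geq 2} n^{-x}$. I would partition $\{n \in \mathbb{Z} : n \geq 2\}$ into the blocks $B_k = \{2^k, 2^k+1, \dots, 2^{k+1}-1\}$ for $k \geq 1$, each of cardinality $2^k$, and bound the contribution of each block by its size times its largest term, i.e.\ $|B_k| \cdot (2^k)^{-x} = 2^{k(1-x)}$. Summing over $k \geq 1$ yields a geometric series with common ratio $2^{1-x}$, which is at most $1/2$ whenever $x \geq 2$; its sum equals $2^{1-x}/(1 - 2^{1-x}) = 1/(2^{x-1}-1)$, giving the stated bound.

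I expect no real obstacle here. The naive integral-test bound $\sum_{n\geq 2} n^{-x} \leq \int_1^{\infty} t^{-x}\,dt = 1/(x-1)$ decays only polynomially in $x$, whereas the target bound decays exponentially, so dyadic grouping (or Cauchy condensation) is the natural tool. One minor sanity check: at $x=2$ both sides give exactly $1$, so the inequality is tight at the endpoint, which is consistent with the known identity $\zeta(2) = \pi^2/6 \approx 1.645 < 2$.

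For the second statement, I would simply substitute $x = 1 + \log_2(1 + 1/\beta)$ into the first bound. Since $\beta \in (0,1)$ we have $1 + 1/\beta \geq 2$, hence $x \geq 2$ and the first inequality applies. By construction $2^{x-1} = 1 + 1/\beta$, so $2^{x-1} - 1 = 1/\beta$, and the first part yields $\zeta(x) \leq 1 + \beta$, as required.
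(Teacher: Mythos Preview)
Your proposal is correct and essentially identical to the paper's proof: both partition the tail $\sum_{n\geq 2} n^{-x}$ into dyadic blocks $[2^k,2^{k+1})$, bound each block by $2^{k(1-x)}$, and sum the resulting geometric series; both then substitute $x = 1 + \log_2(1+1/\beta)$ for the second part (your explicit check that $x\geq 2$ when $\beta\in(0,1)$ is a detail the paper omits). One harmless slip in your side remark: the inequality is not actually tight at $x=2$, since $\zeta(2)-1 = \pi^2/6 - 1 \approx 0.645 < 1 = 1/(2^{1}-1)$.
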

\begin{proof}
    By the definition of the Reimann Zeta function,
\[
\zeta(x) = \sum_{n=1}^\infty \frac{1}{n^x} = 1 + \sum_{n=2}^\infty \frac{1}{n^x}.
\]
We partition the tail sum $ \sum_{n=2}^\infty \frac{1}{n^x} $ into dyadic intervals:
\[
[2,4), [4,8), [8,16), \ldots, [2^k, 2^{k+1}) \text{ and so on}.
\]
Each interval $ [2^k, 2^{k+1}) $ contains $ 2^k $ integers, and for all integers $ n \in [2^k, 2^{k+1}) $, we have:
\[
n \geq 2^k \quad \Rightarrow \quad \frac{1}{n^x} \leq \frac{1}{(2^k)^x} = 2^{-kx}.
\]
Therefore, the contribution from each block is bounded by
\[
\sum_{n=2^k}^{2^{k+1}-1} \frac{1}{n^x} \leq 2^k \cdot 2^{-kx} = 2^{k(1 - x)}.
\]
Summing over $ k \geq 1 $,
\[
\sum_{n=2}^\infty \frac{1}{n^x} \leq \sum_{k=1}^\infty 2^{k(1 - x)} = \sum_{k=1}^\infty \left(2^{1 - x}\right)^k = \frac{2^{1 - x}}{1 - 2^{1 - x}} = \frac{1}{2^{x-1} - 1}.
\]
The geometric series converges as the ratio between successive terms, $2^{1-x} < 1$ as $x \geq 2$. Thus,
\[
\zeta(x) = 1 + \sum_{n=2}^\infty \frac{1}{n^x} \leq 1 + \frac{1}{2^{x-1} - 1}.
\]
Substituting $t = 1 + \log_2 (1+ 1/\beta)$, we get
$\zeta(t) \leq 1 + \frac{1}{2^{\log_2(1+ 1/\beta)} - 1} = 1 + \beta$.
\end{proof}

\end{appendix}

\end{document}